\def\>{\rangle}
\def\<{\langle}
\def\id{\mathsf{id}}
\def\mE{\mathcal{E}}
\def\mF{\mathcal{F}}
\def\mN{\mathcal{N}}
\def\mC{\mathcal{C}}
\def\mL{\mathcal{L}}
\def\mP{\mathcal{P}}
\def\mD{\mathcal{D}}
\def\mV{\mathcal{V}}
\newcommand{\supp}{\operatorname{supp}}
\renewcommand{\qedsymbol}{\nobreak \ifvmode \relax \else
	\ifdim \lastskip<1.5em \hskip-\lastskip \hskip1.5em plus0em
	minus0.5em \fi \nobreak \vrule height0.75em width0.5em
	depth0.25em\fi}
\renewcommand{\geq}{\geqslant}
\renewcommand{\leq}{\leqslant}
\def\md{\mathfrak{D}}
\def\mr{\mathfrak{R}}
\def\D{\mathbf{D}}
\def\C{\mathbf{C}}
\newcommand{\super}{{\rm SC}}
\newcommand{\pr}{{\rm Pr}}
\def\ml{\mathfrak{L}}
\def\mt{\mathfrak{T}}
\def\sD{\mathbb{D}}
\def\xD{\mathscr{D}}
\newtheorem{theorem}{Theorem}
\newtheorem*{theorem*}{Theorem}
\newtheorem{corollary}{Corollary}
\newtheorem{lemma}{Lemma}
\newtheorem*{lemma*}{Lemma}
\theoremstyle{definition}
\newtheorem{definition}{Definition}
\newtheorem*{definition*}{Definition}
\theoremstyle{remark}
\newtheorem*{remark}{Remark}
\theoremstyle{definition}
\newcommand{\bea}{\begin{eqnarray}}
\newcommand{\eea}{\end{eqnarray}}
\newcommand{\be}{\begin{equation}}
\newcommand{\ee}{\end{equation}}
\newcommand{\ba}{\begin{equation}\begin{aligned}}
\newcommand{\ea}{\end{aligned}\end{equation}}
\newcommand{\bsp}{\begin{split}}
\newcommand{\esp}{\end{split}}
\def\be{\begin{equation}}
\def\ee{\end{equation}}
\newcommand{\cptp}{{\rm CPTP}}
\newcommand{\cp}{{\rm CP}}
\newcommand{\reg}{{\rm reg}}
\newcommand{\mR}{\mathcal{R}}
\newcommand{\mM}{\mathcal{M}}
\newcommand{\lr}{\rangle\langle}
\newcommand{\la}{\langle}
\newcommand{\ra}{\rangle}
\newcommand{\tr}{{\rm Tr}}
\newcommand{\ve}[1]{{\left\vert\kern-0.25ex\left\vert\kern-0.25ex\left\vert #1 
    \right\vert\kern-0.25ex\right\vert\kern-0.25ex\right\vert}}
\newcommand{\mbb}[1]{\mathbb{#1}}
\newcommand{\eqdef}{\coloneqq}
\newcommand{\bb}{\begin{bmatrix}}
\newcommand{\eb}{\end{bmatrix}}
\def\r{\mathbf{r}}
\def\s{\mathbf{s}}
\def\p{\mathbf{p}}
\def\q{\mathbf{q}}
\def\t{\mathbf{t}}
\def\u{\mathbf{u}}
\def\v{\mathbf{v}}
\def\w{\mathbf{w}}
\def\0{\mathbf{0}}
\def\m{\mathbf{m}}
\def\n{\mathbf{n}}
\def\uD{\underbar{D}}
\def\bD{\overline{D}}
\def\bbd{\overline{\mathbf{D}}}
\def\ubd{\underline{\mathbf{D}}}
\def\mX{\mathcal{X}}
\newtcolorbox{myt}[2][]{%
  attach boxed title to top center
               = {yshift=-4pt},
  colback      = blue!5!white,
  colframe     = blue!75!black,
  halign       = flush left,
  fonttitle    = \bfseries\sffamily,
  colbacktitle = blue!65!black,
  title        = #2,#1,
  enhanced,
}
\newtcolorbox{myd}[2][]{%
  attach boxed title to top center
               = {yshift=-4pt},
  colback      = violet!5!white,
  colframe     = violet!75!black,
  halign       = flush left,
  fonttitle    = \bfseries\sffamily,
  colbacktitle = violet!65!black,
  title        = #2,#1,
  enhanced,
}
\newtcolorbox{mye}[2][]{%
  attach boxed title to top center
               = {yshift=-4pt},
  colback      = purple!5!white,
  colframe     = purple!75!black,
  halign       = flush left,
  fonttitle    = \bfseries\sffamily,
  colbacktitle = purple!65!black,
  title        = #2,#1,
  enhanced,
}
\newtcolorbox{myg}[2][]{%
  attach boxed title to top center
               = {yshift=-4pt},
  colback      = green!5!white,
  colframe     = green!75!black,
  halign       = flush left,
  fonttitle    = \bfseries\sffamily,
  colbacktitle = green!65!black,
  title        = #2,#1,
  enhanced,
}
\newcommand{\eps}{\varepsilon}
\begin{document}
	
	
	\title{Uniqueness and Optimality of Dynamical Extensions of Divergences}

\author{Gilad Gour}\email{gour@ucalgary.ca}
\affiliation{
Department of Mathematics and Statistics, Institute for Quantum Science and Technology,
University of Calgary, AB, Canada T2N 1N4}

	\date{\today}
	
	\begin{abstract}
	We introduce an axiomatic approach for channel divergences and channel relative entropies that is based on three information-theoretic axioms of monotonicity under superchannels (i.e. generalized data processing inequality), additivity under tensor products, and normalization, similar to the approach given for the state domain in~\cite{GT2020a,GT2020b}. We show that these axioms  are sufficient to give enough structure also in the channel domain, leading to numerous properties that are applicable to all channel divergences. These include faithfulness, continuity, a type of triangle inequality, and boundedness between the min and max channel relative entropies. In addition, we prove a uniqueness theorem showing that the Kullback-Leibler divergence has only one extension to classical channels. For quantum channels, with the exception of the max relative entropy, this uniqueness does not hold. Instead we prove the optimality of the amortized channel extension of the Umegaki relative entropy, by showing that it provides a lower bound on all channel relative entropies that reduce to the Kullback-Leibler divergence on classical states. We also introduce the maximal channel extension of a given classical state divergence and study its properties.  	
	\end{abstract}

	\maketitle
{
  \hypersetup{linkcolor=black}
 \tableofcontents
}

	\section{Introduction}
Distinguishability between probability distributions, quantum states, and quantum channels, lies at the heart of several theories in science including information theory, quantum computing, and statistics. Unlike arbitrary vectors, objects like probability vectors, quantum states and quantum channels contain information about physical systems and therefore their distinguishability is typically quantified with functions that are sensitive to this information. For example,
consider an information source (see e.g.~\cite{Cover2006}) such as a dice, that `emits' alphabets in $\mathcal{X}$ according to either probability distribution $\p\eqdef\{p_x\}_{x\in\mX}$ or probability distribution $\q\eqdef\{q_x\}_{x\in\mX}$. An agent can estimate which of the two distributions corresponds to the source by using the source many times (in the example of a dice, this amounts to rolling it many times). The intuition is that if $\p$ and $\q$ are very distinguishable it would be easier to determine which one of them corresponds to the information source. 

One of the key observations in any distinguishability task as above, is that by sending the information source through a communication channel, the agent cannot increase his or her ability to distinguish between the two distributions $\p$ and $\q$. This means that if $E$ is a column stochastic matrix that corresponds to the communication channel, then the resulting distributions $E\p$ and $E\q$ are { less distinguishable than $\p$ and $\q$.} In other words, any measure that quantifies the distinguishability between two probability vectors $\p$ and $\q$ must decrease under any stochastic evolution process that takes the pair $(\p,\q)$ to $(E\p,E\q)$. { Hence, many metrics in $\mbb{R}^n$, such as the Euclidean distance, cannot be used to quantify this notion of distinguishability between two probability vectors (see e.g.~\cite{PWP+2006}).}

Functions defined on pairs of  probabilities that behave monotonically under mapping of the form $(\p,\q)\to(E\p,E\q)$ are called \emph{divergences}, and their monotonicity property is often referred to as \emph{data processing inequality} (DPI).  { Unlike the trace distance~\cite{Wilde2013}, some divergences are additive under tensor products. }In~\cite{GT2020a,GT2020b}, relative entropies were defined axiomatically as divergences that satisfy this additivity property. It was shown that the combination of DPI with additivity provides enough structure to deduce a variety of key properties of relative entropies.

Many specific divergences and relative entropies have been extended to the quantum domain, including the trace distance, the Kullback-Leibler divergence, the Renyi divergences, $f$-divergences, and the min and max relative entropies~\cite{Umegaki1962,Petz-1986a,WWY2014,MDS+2013,Tomamichel2015,Matsumoto2018,Hiai2017,Datta-2009a}.
More recently, several extensions of quantum divergences to channel divergences have also been studied~\cite{Cooney-2016a,Felix2018,Berta2018,Gour2019,Fang2019,Katariya2020} with a variety of applications in resource theories of quantum channels and beyond~\cite{Kaur-2018a,Gour2018b,LW2019,LY2020,GW2019,Wang2019,Wang2019b,GS2020,Bauml2020,Fang2019,Fang2020,Katariya2020}. In~\cite{GT2020b}, an axiomatic approach to quantum divergences have been proposed. 
This approach is based on three axioms of DPI, additivity, and normalization that were first put forth 
in~\cite{GT2020a} for classical divergences.	

In this paper we extend further this axiomatic approach to the channel domain. We define dynamical (i.e. channel) divergences as functions that satisfies the DPI with a superchannel, and channel relative entropies as channel divergences that satisfy in addition additivity and appropriate normalization. We show that these axioms provides enough structure that give rise to a variety 
of properties that hold for all dynamical divergences and dynamical relative entropies. These include properties that carries over from the state domain like, faithfulness, continuity, optimality, and triangle inequality that holds for all channel relative entropies. We also show that all channel relative entropies must be no greater than the max channel relative entropy, and no smaller than the min channel relative entropy. We then discuss a variety of extensions of classical (state) divergences to quantum-channel divergences. Remarkably, for classical channels, we find that there exists only one classical-channel relative entropy that reduces to the Kullback-Leibler (KL) divergence. In other words, the extension of the KL relative entropy to classical channels is unique. 
In the quantum domain, we show that the regularized channel (Umegaki) relative entropy is the smallest extension among all dynamical relative entropies that reduces to the KL-divergence on classical states. In Fig.~\ref{channel} we summarized all the optimal extensions of the Shannon entropy that we  discuss in this paper. Similar figures can be made also for the Re\'nyi entropies.

\begin{figure}[h]\centering
    \includegraphics[width=0.45\textwidth]{channel}
  \caption{\linespread{1}\selectfont{\small A summary of all optimal extensions of the Shannon entropy. Red arrows represent maximal extensions and blue arrows represent minimal extensions. The Kullback-Leibler divergence is the only classical relative entropy that reduces to the log of the dimension minus the Shannon entropy when the second argument is a uniform distribution (in~\cite{GT2020a} a one-to-one correspondence between classical entropies and classical relative entropies was proven).
  Regularization is assumed in all extensions. The channel relative entropy $D^{\reg}(\mN\|\mM)$ equals both the amortized divergence and the regularized minimal channel extension $\underline{D}^{\reg}(\mN\|\mM)$. It is the smallest of all channel relative entropies that reduces to the Kullback-Leibler divergence on classical states, while the maximal divergence $\overline{D}^{\reg}(\mN\|\mM)$ is the largest one. The blue and red dashed arrows indicate that the minimal and maximal channel-extensions can be obtained directly from the Kullback-Leibler divergence using the extension techniques introduced in this paper.
   }}
  \label{channel}
\end{figure}

The paper is structured as follows. In Sec.II we discuss the preliminaries material and the notations that will be used throughout the paper. { We then discuss the main results in sections III and IV, and conclude with a summary and applications in section V.} The details of all proofs can be found in the supplemental material (SM).

\section{Preliminaries}

\subsection{Notations}

We denote both physical systems and their corresponding Hilbert spaces by $A,B,C$ etc. The letters $X,Y,Z$ are reserved for classical systems.
We will only consider finite dimensional systems and denote their dimensions by $|A|$, $|B|$, etc.
The algebra of all $|A|\times|A|$ complex matrices is denoted by $\ml(A)$. Similarly, $\ml(X)$ denotes the algebra of all $|X|\times |X|$ diagonal matrices. The set of density matrices (i.e. quantum states) in $\ml(A)$ is denoted by $\md(A)$. The elements of $\md(A)$ will be denoted by lower case Greek letters such as $\rho,\sigma,\omega$, etc, whereas for a classical system $X$, the elements of classical density matrices in $\md(X)$ are denoted by $\p,\q,\r$, etc, where depending on the context, $\p,\q,\r$ will be viewed as either diagonal density matrices in a fixed (classical) basis or as probability vectors. Moreover, for classical systems we will sometimes use the notation $\md(n)\equiv\md(X)$ if $n=|X|$.
The support of a density matrix $\rho$ will be denoted by $\supp(\rho)$.

The set of all linear transformations from 
$\ml(A)$ to $\ml(B)$ is denoted by $\ml(A\to B)$. The elements of $\ml(A\to B)$ will be denoted by calligraphic letters such as $\mN,\mM,\mE,\mF$, etc. The set of completely positive maps in $\ml(A\to B)$ is denoted by $\cp(A\to B)$, and the set of all completely positive trace-preserving maps (i.e. quantum channels) by $\cptp(A\to B)$. We denote by $1$ the trivial physical system and identify $\md(A)=\cptp(1\to A)$. In particular, the set containing only the number one is identified with $\{1\}=\md(1)=\cptp(1\to 1)$.
Note also that with these identifications, the trace is the only element of $\cptp(A\to 1)=\{\tr_A\}$. The Choi matrix of a linear map $\mN\in\ml(A\to B)$ will be denoted by $J_\mN^{AB}\eqdef\sum_{j,k}|j\lr k|\otimes\mN(|j\lr k|)$. For two linear maps $\mN,\mM\in\ml(A\to B)$ we write $\mN\geq\mM$ if $\mN-\mM\in\cp(A\to B)$.

We also consider linear maps from $\ml(A\to B)$ to $\ml(A'\to B')$. We call such linear maps supermaps, and denote by $\mbb{L}(AB\to A'B')$ the set of all supermaps from $\ml(A\to B)$ to $\ml(A'\to B')$. We will also make the identification $\mbb{L}(1A\to 1B)=\ml(A\to B)$, where again, $1$ denotes the trivial system. The elements of $\mbb{L}(AB\to A'B')$ will be denoted by capital Greek letters such as $\Theta$ and $\Upsilon$. A supermap $\Theta\in\mbb{L}(AB\to A'B')$ is called a superchannel if there exists { a reference system $R$,} a pre-processing map $\mE\in\cptp(A'\to AR)$, and a post-processing map $\mF\in\cptp(BR\to B')$ such that for any $\mN\in\ml(A\to B)$
\be
\Theta\left[\mN^{A\to B}\right]=\mF^{BR\to B'}\circ\mN^{A\to B}\circ\mE^{A'\to AR}\;.
\ee 
In~\cite{Pavia1} it has been shown that a supermap is a superchannel if and only if it maps quantum channels to quantum channels in a complete sense (i.e. even when tensored with the identity supermap; see~\cite{Gour2019,GS2020,BGS+2020} for more details). The set of all superchannels in $\mbb{L}(AB\to A'B')$ will be denoted by $\super(AB\to A'B')$. We will also use the notation
\be
\super(A\to A'B')\eqdef\super(1A\to A'B')
\ee
to denote superchannels that map quantum states in $\md(A)=\cptp(1\to A)$ to quantum channels in $\cptp(A'\to B')$. Similarly, $\super(AB\to A')$ denotes the set of all superchannels that map channels in $\cptp(A\to B)$ to states in $\md(A')$.

\subsection{Divergences} 

We follow the definition given in~\cite{GT2020a,GT2020b} of classical and quantum divergences, and relative entropies. 

\begin{definition*}\label{defd}\cite{GT2020a,GT2020b}
Let $\D:\bigcup_{A}\Big\{\md(A)\times\md(A)\Big\}\to\mbb{R}\cup\{\infty\}$ be a function acting on pairs of quantum states in all finite dimensions.
\begin{enumerate}
\item The function $\D$ is called a \emph{divergence} if it satisfies the data processing inequality (DPI)
\be
\D\big(\mE(\rho)\big\|\mE(\sigma)\big)\leq \D(\rho\|\sigma)\;,
\ee
for all $\mE\in\cptp(A\to B)$ and all $\rho,\sigma\in\md(A)$.
\item A divergence $\D$ is called a \emph{relative entropy} if in addition it satisfies:
\begin{enumerate}
\item Normalization. {
\be
\D\left(\begin{bmatrix} 1& 0\\ 0 & 0\end{bmatrix}\;\Big\|\; \begin{bmatrix} 1/2 & 0\\ 0 &1/2\end{bmatrix}\right)=1\;.
\ee}
\item Additivity. For any $\rho_1,\rho_2\in\md(A)$ and any $\sigma_1, \sigma_2\in\md(B)$
\be
\D\left(\rho_1\otimes\sigma_1\big\|\rho_2\otimes\sigma_2\right)= \D(\rho_1\|\sigma_1)+\D(\rho_2\|\sigma_2)\;.
\ee
\end{enumerate}
\end{enumerate}
\end{definition*}

{ We use the bold $\D$ notation to denote both a general \emph{quantum} divergence and a general quantum relative entropy depends on the context. Similarly, we use the notation $\xD$ to denote a general {classical} divergence or a general classical relative entropy.} For specific divergences, we use the standard notations to denote them. For example, the classical R\'enyi divergence of order $\alpha\in[0,\infty]$ is denoted by
\be
D_\alpha(\p\|\q)\eqdef\frac1{\alpha-1}\log\sum_{x=1}^{|X|}p_x^\alpha q_x^{1-\alpha}\quad\quad\forall\;\p,\q\in\md(X)\;.
\ee 
The min and max quantum relative entropies~\cite{Datta2009} are denoted for any $\rho,\sigma\in\md(A)$ by {
\begin{align}
D_{\max}(\rho\|\sigma)\eqdef\begin{cases}\log\min\big\{t\geq 0\;:\;t\sigma\geq\rho\big\}&\text{if }\rho\ll\sigma\\
\infty &\text{otherwise}
\end{cases}
\end{align}
where $\rho\ll\sigma$ stands for $\supp(\rho)\subseteq\supp(\sigma)$, and
\begin{align}
D_{\min}(\rho\|\sigma)\eqdef\begin{cases}-\log\tr\big[\sigma\Pi_{\rho}\big] &\text{if }\tr[\rho\sigma]\neq 0\\
\infty &\text{otherwise}
\end{cases}
\end{align}}where $\Pi_\rho$ denotes the projection to the support of $\rho$, and $\rho\ll\sigma$ denotes $\supp(\rho)\subseteq\supp(\sigma)$. In~\cite{GT2020b} it has been shown that any quantum relative entropy $\D$ satisfies
\be
D_{\min}(\rho\|\sigma)\leq\D(\rho\|\sigma)\leq D_{\max}(\rho\|\sigma)\quad\quad\forall\;\rho,\sigma\in\md(A)\;.
\ee
{ The Hypothesis testing divergence, can be viewed as the smoothed version of $D_{\min}$. It is given for all $\rho,\sigma\in\md(A)$ by
\begin{align}\label{ht}
&D_{\min}^\epsilon(\rho\|\sigma)\eqdef-\log\min\\
&\Big\{\tr[\sigma E]\;:\;0\leq E\leq I^A\;\;,\;\;\tr[\rho E]\geq 1-\epsilon\;\;,\;\;E\in\ml(A)\Big\}.\nonumber
\end{align}
It can be easily verified that $D_{\min}^{\eps=0}(\rho\|\sigma)=D_{\min}(\rho\|\sigma)$.}

Another important example of a quantum relative entropy with several operational interpretations is the Umegaki relative entropy, given by
\be
D(\rho\|\sigma)\eqdef\tr\left[\rho\log\rho\right]-\tr[\rho\log\sigma]\;.
\ee
This divergence plays a key role in quantum statistics~\cite{Petz2008}, quantum Shannon theory~\cite{Wilde2013}, and quantum resource theories~\cite{CG2019}. In~\cite{Matsumoto2018b} (see also~\cite{GT2020b,WGE2017}) it has been shown to be the only relative entropy that is asymptotically continuous.The Umegaki relative entropy is also known to be the regularized minimal quantum extension of the classical KL divergence; that is, for all $\rho,\sigma\in\md(A)$
\be\label{wer}
D(\rho\|\sigma)=\uD^{\reg}(\rho\|\sigma)\eqdef\lim_{n\to\infty}\frac1n\uD(\rho^{\otimes n}\big\|\sigma^{\otimes n})\;,
\ee
with  { 
\be
\uD(\rho\|\sigma)\eqdef\sup_{\mE\in\cptp(A\to X)}D_{KL}\left(\mE(\rho)\big\|\mE(\sigma)\right)\;,
\ee
where $D_{KL}$ is the classical KL-divergence, }and the supremum is over all classical systems $X$ and all  positive operator-valued measures (POVMs) $\mE\in\cptp(A\to X)$.  The quantity $\uD$ is usually known as the \emph{measured} relative entropy~\cite{Tomamichel2015}. The equality in~\eqref{wer} means that any quantum divergence that reduces to the classical KL relative entropy must be no smaller than the Umegaki relative entropy (see~\cite{GT2020b} for more details on optimal extensions). In~\cite{Tomamichel2015} it was shown that~\eqref{wer} also hold if $D$ is replaced with the sandwiched or minimal quantum R\'enyi divergence of order $\alpha\in[1/2,\infty]$~\cite{WWY2014,MDS+2013}.

Recently, the extensions of quantum divergences to channel divergences have been studied intensively~\cite{Cooney-2016a,Felix2018,Berta2018,Gour2019,Fang2019,Katariya2020}. Examples include the channel extension of the Umegaki relative entropy given for all $\mN,\mM\in\cptp(A\to B)$ by 
\ba
D(\mM\|\mN)\eqdef\sup_{\rho\in\md(RA)}D\left(\mM^{A\to B}(\rho^{RA})\big\|\mN^{A\to B}(\rho^{RA}\right)\;,
\ea
where the supremum is over all systems $R$ and all density matrices in $\md(RA)$. It can be shown~\cite{Cooney-2016a} that the supremum above can be replaced with a maximum over all pure states in $\md(RA)$ with $|R|=|A|$.
The min and max relative entropies have also been extended to quantum channels. They are given for all $\mN,\mM\in\cptp(A\to B)$ by
\be
D_{\max}(\mM\|\mN)\eqdef\log\min\big\{t\geq 0\;:\;t\mN\geq\mM\big\}\label{max1}
\ee
(recall that $t\mN\geq\mM$ means that $t\mN-\mM$ is a CP map)
if $\supp(J_{\mN})\subseteq\supp(J_\mM)$ and $D_{\max}(\mM\|\mN)=\infty$ otherwise, and
\be\label{min2}
D_{\min}(\mM\|\mN)\eqdef\max_{\psi\in\md(RA)}\Big\{-\log\tr\big[\mN(\psi)\Pi_{\mM(\psi)}\big]\Big\}
\ee
if for all $\psi\in\md(RA)$, $\tr[\mN(\psi)\mM(\psi)]\neq 0$, and $D_{\min}(\mM\|\mN)=\infty$ otherwise.
We assumed above that $|R|=|A|$. The divergence $D_{\min}(\mM\|\mN)$ can be viewed as the $\epsilon=0$ case of the Hypothesis testing channel divergence given by
\ba
&D_{\min}^{\epsilon}(\mM\|\mN)\\
&\eqdef\max_{\psi\in\md(RA)}D_{\min}^{\epsilon}\left(\mM^{A\to B}(\psi^{RA})\big\|\mN^{A\to B}(\psi^{RA})\right)\label{min2}
\ea
{ where on the LHS $D_{\min}^\eps$ corresponds to a channel divergence, whereas on the RHS  $D_{\min}^\eps$ corresponds to a state divergence. We can assume above that $|R|=|A|$, the maximum is over all pure states in $\md(RA)$, and for any $\rho,\sigma\in\md(A)$.}
We will discuss more examples of channel divergences later on.

\subsection{Relative Majorization}

We say that a pair of vectors $\p,\q\in\md(X)$ is \emph{relatively majorized} by another pair of vectors $\p',\q'\in\md(Y)$, and write
\be
(\p,\q)\succ(\p',\q')
\ee 
if there exists a stochastic evolution matrix $\mE\in\cptp(X\to Y)$ such that $(\p',\q')=(\mE(\p),\mE(\q))$. Therefore, the relative R\'enyi entropies behave monotonically under relative majorization; i.e. if $(\p,\q)\succ(\p',\q')$ then $D_\alpha(\p\|\q)\geq D_\alpha\big(\p'\big\|\q'\big)$ for all $\alpha\in[0,\infty]$. If we have $(\p,\q)\succ(\p',\q')\succ(\p,\q)$ then we will write 
\be
(\p',\q')\sim(\p,\q)\;.
\ee 

Relative majorization is a partial order that can be characterized with testing regions. The testing region of a pair of probability vectors $\p,\q\in\mD(n)$ is a region in $\mbb{R}^2$ defined as
\ba
&\mt(\p,\q)\eqdef\\
&\Big\{(\p\cdot\t,\q\cdot\t)\in\mbb{R}^2\;:\;0\leq \t\leq (1,...,1)^T\;,\;\t\in\mbb{R}^n\Big\}
\ea
where the inequalities are entry-wise. This region is bounded by two curves known as the lower and upper Lorenz curves. 

{ The upper Lorenz curve can be obtained from the lower Lorenz curve by a rotation of a 180 degrees around the point $(\frac12,\frac12)\in\mbb{R}^2$.} Therefore, the lower (or upper) Lorenz curve determines uniquely the testing region. The lower Lorenz curve of a pair of probability vectors $\p,\q\in\mbb{R}^n$, denoted here by $\mL(\p,\q)$, has $n+1$ vertices that can be computed as follows.  First, observe that the testing region is invariant under the transformation $(\p,\q)\to (\Pi\p,\Pi\q)$ where $\Pi$ is any permutation matrix. Therefore, w.l.o.g. we can assume that the components of $\p$ and $\q$ are arranged such that
\be
\frac{p_1}{q_1}\geq\frac{p_2}{q_2}\geq\cdots\geq\frac{p_n}{q_n}\;.
\ee
The  $n+1$ vertices of $\mL(\p,\q)$ are the $(0,0)$ vertex and the $n$ vertices $\{(a_\ell,b_\ell)\}_{\ell=1}^{n}\subset \mL(\p,\q)$, where 
\be
a_\ell=\sum_{x=1}^{\ell}p_x\quad\text{and}\quad b_\ell=\sum_{x=1}^{\ell}q_x\;.
\ee

The relevance of testing regions to our study here is the following theorem that goes back to Blackwell~\cite{blackwell1953}, and that since then have been rediscovered under different names including $d$-majorization~\cite{Veinott1971}, matrix majorization~\cite{Dahl-1999}, and thermo-majorization~\cite{Horodecki-2013b} (see also the book on majorization by 
Marshall and Olkin~\cite{Marshall-2011a}).
\begin{theorem*}\cite{blackwell1953}
Let $\p,\q\in\mP(n)$ and $\p',\q'\in\mP(m)$ be two pairs of probability vectors in dimensions $n$ and $m$, respectively. Then, 
\be
(\p,\q)\succ(\p',\q')\quad\iff\quad\mt(\p,\q)\supseteq\mt(\p',\q')\;.
\ee
\end{theorem*}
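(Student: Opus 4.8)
The statement is Blackwell's equivalence between relative majorization and inclusion of testing regions, so I would prove the two implications separately: the forward one is immediate, and the converse is where the work lies. For $(\p,\q)\succ(\p',\q')\Rightarrow\mt(\p,\q)\supseteq\mt(\p',\q')$, I would simply observe that if $\mE\in\cptp(X\to Y)$ satisfies $\mE(\p)=\p'$ and $\mE(\q)=\q'$, then its transpose $\mE^T$ is row-stochastic, so for any test vector $\0\le\t\le(1,\dots,1)^T$ the vector $\s\eqdef\mE^T\t$ again obeys $\0\le\s\le(1,\dots,1)^T$; since $\p'\cdot\t=\p\cdot\s$ and $\q'\cdot\t=\q\cdot\s$, every point of $\mt(\p',\q')$ lies in $\mt(\p,\q)$.

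For the converse I would argue by contraposition, using linear-programming duality. With the target pair in dimension $m$, the relation $(\p,\q)\succ(\p',\q')$ is precisely the feasibility of the linear system consisting of the nonnegativity $\mE\ge0$, the column-normalization $(1,\dots,1)\mE=(1,\dots,1)$, and the equations $\mE\p=\p'$, $\mE\q=\q'$, in the $m\times n$ matrix variable $\mE$. If this is infeasible, Farkas' lemma produces a dual certificate which, after eliminating the dual variables of the normalization constraints by maximizing over them, reads: there exist $\mathbf u,\mathbf v\in\mbb{R}^m$ with
\be
\mathbf u\cdot\p'+\mathbf v\cdot\q'\;>\;\sum_{j=1}^n\,\max_{1\le k\le m}\big(u_kp_j+v_kq_j\big)\;.
\ee
Setting $\phi(p,q)\eqdef\max_k(u_kp+v_kq)$, which is sublinear on $\mbb{R}^2$, and using the trivial bound $u_ip'_i+v_iq'_i\le\phi(p'_i,q'_i)$ for each $i$, this upgrades to $\sum_i\phi(p'_i,q'_i)>\sum_j\phi(p_j,q_j)$.

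The heart of the argument is then to replace the $2m$-dimensional certificate $(\mathbf u,\mathbf v)$ by a single two-dimensional test direction. For this I would use the elementary fact that, on the nonnegative quadrant --- which is all that matters, since every entry of $\p,\q,\p',\q'$ is nonnegative --- the function $\phi$ decomposes as $\phi(p,q)=(ap+bq)+\sum_{t}c_t\,(q-\lambda_tp)_+$ with all $c_t>0$ and $\lambda_t>0$: only the vertices of the polygon $\operatorname{conv}\{(u_k,v_k)\}$ that face the positive quadrant are active there, and the upper envelope of that monotone chain of affine functions is exactly such a nonnegative combination of ramp functions plus a single linear term (in particular the chain must have at least two vertices, for otherwise $\phi$ is linear and the strict inequality just obtained is impossible). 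Since $\p,\p'$ and $\q,\q'$ are probability vectors, the linear term contributes $a+b$ to both $\sum_i\phi(p'_i,q'_i)$ and $\sum_j\phi(p_j,q_j)$ and cancels, so the strict inequality forces $\sum_i(q'_i-\lambda_tp'_i)_+>\sum_j(q_j-\lambda_tp_j)_+$ for some $t$. But the left-hand side is the value at $(-\lambda_t,1)$ of the support function of $\mt(\p',\q')$, and the right-hand side that of $\mt(\p,\q)$ --- both read straight off $\mt(\p,\q)=\{(\p\cdot\t,\q\cdot\t):\0\le\t\le(1,\dots,1)^T\}$ --- so $\mt(\p',\q')\not\subseteq\mt(\p,\q)$, which is the contrapositive we wanted.

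I expect the only genuinely nontrivial step to be the quadrant decomposition of $\phi$; it is a routine piece of planar convex geometry (locate the chain of vertices facing the positive quadrant, order its affine pieces by slope, and check that the ramp coefficients are positive), but it must be done with care in the degenerate cases where some of the points $(u_k,v_k)$ coincide or fail to be in convex position. Everything else --- the forward direction, the Farkas setup, and the final reading-off of support functions --- is formal. As an alternative to the duality argument one can construct $\mE$ by hand from the Lorenz curves: since $\mt(\p',\q')\subseteq\mt(\p,\q)$ places the whole curve $\mL(\p',\q')$ inside $\mt(\p,\q)$, one attempts to build a nested family $\0=\t_0\le\t_1\le\dots\le\t_m=(1,\dots,1)^T$ with $\p\cdot\t_\ell=\sum_{x\le\ell}p'_x$ and $\q\cdot\t_\ell=\sum_{x\le\ell}q'_x$, and then takes the $\ell$-th row of $\mE$ to be $\t_\ell-\t_{\ell-1}$ (which is automatically nonnegative, column-stochastic, and sends $(\p,\q)$ to $(\p',\q')$); there, however, the obstacle migrates to proving that every incremental step is feasible, which requires the concavity of the Lorenz curves together with the domination hypothesis and a greedy, highest-likelihood-ratio-first choice of the $\t_\ell$. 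For that reason I would favor the duality route, in which the combinatorial content is isolated into a single clean lemma.
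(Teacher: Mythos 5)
The paper does not actually prove this statement---it is quoted from Blackwell~\cite{blackwell1953} as a known result---so there is no in-paper argument to compare against line by line. Judged on its own, your proof is correct and complete. The forward direction via $\s=\mE^T\t$ is exactly right. For the converse, the Farkas certificate $\mathbf u\cdot\p'+\mathbf v\cdot\q'>\sum_j\max_k(u_kp_j+v_kq_j)$ is the correct dual of the feasibility LP (eliminating the normalization multipliers $w_j$ by setting $w_j=-\max_k(u_kp_j+v_kq_j)$), and the decomposition of $\phi(p,q)=\max_k(u_kp+v_kq)$ on the closed positive quadrant is cleanest if you phrase it one-dimensionally: $g(\lambda)\eqdef\max_k(u_k+v_k\lambda)$ is convex piecewise linear on $[0,\infty)$, hence $g(\lambda)=g(0)+g'(0^+)\lambda+\sum_tc_t(\lambda-\lambda_t)_+$ with $c_t>0$ the slope jumps, and homogeneity gives $\phi(p,q)=pg(q/p)=ap+bq+\sum_tc_t(q-\lambda_tp)_+$ (the boundary cases $p=0$ and $q=0$ check out since $b+\sum_tc_t=\max_kv_k$). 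This makes the ``degenerate cases'' you worry about a non-issue: no convex-position hypothesis on the points $(u_k,v_k)$ is needed. The identification $\sum_i(q_i-\lambda p_i)_+=\max\{\q\cdot\t-\lambda\,\p\cdot\t:\0\leq\t\leq(1,\dots,1)^T\}$ as the support function of $\mt(\p,\q)$ in direction $(-\lambda,1)$ then closes the contrapositive. It is worth noting that your ``alternative'' greedy Lorenz-curve construction is the classical route and is in fact the one the paper implicitly relies on elsewhere: the proof of its Theorem~\ref{thm:cc} builds the optimal dominating pair by exactly the highest-likelihood-ratio-first vertex selection you describe, so the two approaches are complementary---the duality proof isolates the combinatorics into the ramp decomposition, while the constructive proof yields the explicit stochastic matrix that the paper's closed formula needs.
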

The theorem above provides a geometric characterization to relative majorization; that is,  $(\p,\q)\succ(\p',\q')$ if and only if the lower Lorenz curve of $(\p',\q')$ is nowhere below the lower Lorenz curve of $(\p,\q)$.
Relative majorization has another remarkable property that was proven in~\cite{GT2020a}.
\begin{theorem*}[\cite{GT2020a}]
Let $\p,\q\in\md(X)$ and set $a\eqdef|X|$. Suppose that the components of $\q$ are positive and rational. That is, there exists $n_1,...,n_a\in\mbb{N}$ such that
\be
\q=\left(\frac{n_1}{n},...,\frac{n_a}{n}\right)\quad,\quad n\eqdef\sum_{x=1}^{a}n_x\;.
\ee
Let
\be
\r\eqdef\bigoplus_{x=1}^{a}p_x\u^{(n_{x})}\in\md(n)\;,
\ee
where each $\u^{(n_x)}$ is $n_x$-dimensional uniform distribution.
Then,
\be\label{simp}
(\p,\q)\sim(\r,\u)
\ee
where $\u\in\md(n)$ is the $n$-dimensional uniform distribution.
\end{theorem*}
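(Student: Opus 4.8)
The plan is to verify $(\p,\q)\sim(\r,\u)$ directly, by exhibiting, for each of the two relative-majorizations $(\p,\q)\succ(\r,\u)$ and $(\r,\u)\succ(\p,\q)$, an explicit column-stochastic matrix that implements it. To set up, fix a classical system $Y$ with $|Y|=n$ and partition $\{1,\dots,n\}$ into consecutive blocks $B_1,\dots,B_a$ with $|B_x|=n_x$ (each block is nonempty because $\q$ has positive entries); then $\r=\bigoplus_{x}p_x\u^{(n_x)}$ is the probability vector on $Y$ that is constant $=p_x/n_x$ on $B_x$, while $\u$ is constant $=1/n$ throughout. Write $\e_x$ for the point mass at $x\in X$.

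For $(\p,\q)\succ(\r,\u)$ I would take $\mE\in\cptp(X\to Y)$ defined by $\mE(\e_x)\eqdef\u^{(n_x)}$, so that the $x$-th column of $\mE$ is the uniform distribution on the block $B_x$; this is evidently nonnegative with unit column sums. Since the blocks are disjoint, $\mE(\p)=\sum_x p_x\u^{(n_x)}=\bigoplus_x p_x\u^{(n_x)}=\r$, and on $B_x$ the vector $\mE(\q)=\sum_x q_x\u^{(n_x)}$ takes the constant value $q_x/n_x=(n_x/n)/n_x=1/n$, hence $\mE(\q)=\u$. This last step is precisely where the hypothesis that $\q$ is rational with common denominator $n$ is used.

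For the reverse relation $(\r,\u)\succ(\p,\q)$ I would take the coarse-graining channel $\mF\in\cptp(Y\to X)$ that sends every basis vector lying in $B_x$ to $\e_x$; this is again a bona fide stochastic matrix. Then $\mF$ deposits the total mass of its input on each block at the corresponding $\e_x$: the mass of $\r$ on $B_x$ is $p_x$, so $\mF(\r)=\sum_x p_x\e_x=\p$, and the mass of $\u$ on $B_x$ is $n_x/n$, so $\mF(\u)=\sum_x(n_x/n)\e_x=\q$. Combined with the previous paragraph this yields $(\p,\q)\sim(\r,\u)$; as a consistency check one also sees $\mF\circ\mE=\id_X$.

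I do not expect a genuine obstacle: the whole argument is a direct construction, and the only points requiring care are checking that $\mE$ and $\mF$ are column-stochastic and bookkeeping the block masses correctly. As an alternative one could instead invoke Blackwell's theorem recalled above and verify that $(\p,\q)$ and $(\r,\u)$ have the same testing region --- passing from $(\p,\q)$ to $(\r,\u)$ merely subdivides the $x$-th edge of the lower Lorenz curve $\mL(\p,\q)$ (the edge of slope $p_x/q_x$) into $n_x$ collinear sub-edges, leaving the curve unchanged --- but the two explicit maps above make the proof self-contained.
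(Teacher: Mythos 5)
Your proof is correct. The paper itself offers no proof of this statement---it is quoted verbatim from~\cite{GT2020a}---so there is nothing internal to compare against, but your two explicit stochastic maps do exactly what is needed: the "splitting" channel $\mE$ (whose $x$-th column is uniform on the block $B_x$) sends $(\p,\q)$ to $(\r,\u)$, using the rationality of $\q$ precisely where you say, and the coarse-graining $\mF$ sends $(\r,\u)$ back to $(\p,\q)$, so both relative majorizations hold and $(\p,\q)\sim(\r,\u)$. Your alternative check via Blackwell's theorem is also sound, since within each block the ratios $(p_x/n_x)/(1/n)=p_x/q_x$ are constant, so the lower Lorenz curve of $(\r,\u)$ merely subdivides each edge of $\mL(\p,\q)$ into collinear pieces and the testing regions coincide.
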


\section{Channel Divergences}

{ Quantum divergences have numerous applications in quantum information theory~\cite{Wilde2013} and quantum resource theories~\cite{CG2019}. As discussed in the preliminary section, their defining property is the DPI. Therefore, to define a channel divergence one has to extend the DPI to the channel domain. Since the most general and physically realizable operation that can be applied to a quantum channel is a superchannel, we will define channel divergences in terms of superchannels.}

We first start with the formal definition of a channel divergence and a channel relative entropy.
We will use the notation $\sD$ for a general channel divergence, to distinguish it from a general quantum divergence $\D$, or a general classical divergence $\xD$.

\begin{definition}\label{qdefd}
Let $$\sD:\bigcup_{A,B}\Big\{\cptp(A\to B)\times\cptp(A\to B)\Big\}\to\mbb{R}\cup\{\infty\}$$ be a function acting on pairs of quantum channels in finite dimensions.
\begin{enumerate}
\item The function $\sD$ is called a \emph{channel divergence}  if it satisfies the generalized Data Processing Inequality (DPI). That is, for any $\mM,\mN\in\cptp(A\to B)$ and a superchannel $\Theta\in\super(AB\to A'B')$
\be\nonumber
\sD\big(\Theta[\mM]\big\|\Theta[\mN]\big)\leq \sD(\mM\|\mN)\;.
\ee
\item A channel divergence $\sD$ is called a \emph{channel relative entropy} if it satisfies the following additional properties:
\begin{enumerate}
\item Additivity. For any $\mM_1,\mM_2\in\cptp(A\to B)$ and any $\mN_1, \mN_2\in\cptp(A'\to B')$
\be\nonumber
\sD\left(\mM_1\otimes\mM_2\big\|\mN_1\otimes\mN_2\right)= \sD(\mM_1\|\mN_1)+\sD(\mM_2\|\mN_2)\;.
\ee
\item Normalization. 
{
\be\nonumber
\sD\left(\begin{bmatrix} 1& 0\\ 0 & 0\end{bmatrix}\;\Big\|\; \begin{bmatrix} 1/2 & 0\\ 0 &1/2\end{bmatrix}\right)=1\;.
\ee
where quantum states are viewed as channels with one dimensional input.}
\end{enumerate}
\end{enumerate}
\end{definition}

\begin{remark}
The term ``channel relative entropy" is sometimes referred in literature to a quantity that extends the Umegaki relative entropy as in ~\eqref{qdsup}. However, in~\cite{Fang2020} it was shown not to be additive under tensor products and therefore does not fit with our definition above of a relative entropy. Moreover, its regularized version, which was shown in~\cite{Fang2020} to be equal to the amortized divergence (and which is known to be additive~\cite{Fang2020b}), is the one that appears in most applications. We therefore choose to call the expression in ~\eqref{qdsup}  a channel \emph{divergence} rather than a channel relative entropy.
\end{remark}

{ As an example, consider $D_{\max}(\mM\|\mN)$ as defined in~\eqref{max1}.
This channel divergence has several operational interpretations in quantum information~\cite{Kaur-2018a,Gour2018b,LW2019,LY2020,GW2019,Wang2019,Wang2019b,GS2020,Bauml2020,Fang2019,Fang2020,Katariya2020}. One can easily verify that it is additive and satisfies the normalization condition of a relative entropy. It also satisfies the DPI as defined above. This can be seen from the fact that if $t\mN-\mM$ is a CP map, then also $\Theta\left[\mN-t\mM\right]=\Theta[\mN]-t\Theta[\mM]$ is a CP map, so that $D_{\max}(\Theta[\mM]\|\Theta[\mN])\leq D_{\max}(\mM\|\mN)$. Therefore, according to the definition above, $D_{\max}(\mM\|\mN)$ is a channel relative entropy.

Another important example is the hypothesis testing divergence $D_{\min}^{\epsilon}(\mM\|\mN)$ as defined in~\eqref{min2}. This divergence plays a key role in hypothesis testing with quantum channels~\cite{Wang2019}, although it is not additive (unless $\eps=0$) which means that it is not a relative entropy according to our definition. Still it satisfies the DPI for channels~\cite{Gour2019}. 

Also the diamond norm is a channel divergence according to our definition above since it satisfies the channel DPI~\cite{Gour2019}, and it has an operational interpretation in terms of the optimal success probability of distinguishing between two channels (see e.g.~\cite{Watrous2018}). This again supports our definition of channel divergences, particularly the extension of the DPI with superchannels. This channel version of the DPI is satisfied by channel divergences that has physical and operational interpretations. We will discuss more examples of such channel divergences with operational interpretation  in the concluding section.
}

Channel divergences can be viewed as a generalization of quantum divergences since the set $\cptp(A\to B)$ with $|A|=1$ can be viewed as the set of quantum states $\md(B)$. Hence, we view here the quantum states in $\md(B)$ as a special type of  quantum channels, i.e. those in $\cptp(1\to B)$ where $1$ represents the one dimensional trivial system. However, there is another type of quantum channels that can be identified with quantum states. These are the replacement channels. Let $\sigma\in\md(B)$ and define the channel $\mR_\sigma\in\cptp(A\to B)$ with $|A|>1$ as 
\be\label{replace}
\mR_\sigma(\omega^A)\eqdef\tr[\omega^A]\sigma^B\quad\quad\forall\;\omega\in\ml(A)\;.
\ee
Therefore, this channel is uniquely determined by the dimension of system $A$ and the state $\sigma^B$. It is therefore natural to ask if channel divergences between replacement channels reduce to quantum divergences between the states that define the 
replacement channels. Not too surprising, we will see below that the answer to this question is on the affirmative.

\subsection{Basic Properties}

Channels divergences and relative entropies borrow some of their properties from quantum divergences. In this section we discuss a few of these basic properties. { The significance of these properties is that they are satisfied by all channel divergences (or relative entropies) and therefore are not unique to a particular divergence. Later on we will see that some divergences have additional unique properties that do not satisfied by all divergences.} We will say that a channel divergence $\sD$ is faithful if $\sD(\mN\|\mM)=0$ implies that $\mM=\mN$.

\begin{theorem}[Properties of Channel Divergences]\label{properties}
Let $\sD$ be a channel divergence. Then,
\begin{enumerate}
\item If $\sD(1\|1)=0$ (here 1 stands for the trivial channel in $\cptp(1\to 1)$) then for any two channels $\mM,\mN\in\cptp(A\to B)$ 
\be
\sD(\mM\|\mN)\geq 0\;,
\ee
with equality if $\mM=\mN$.
\item If $\sD$ is a channel relative entropy then $\sD(1\|1)=0$.
\item $\sD$ is faithful if and only if its reduction to classical states (i.e. probability vectors)  is faithful.
\item For any two replacement channels, $\mR_{\sigma_1},\mR_{\sigma_2}\in\cptp(A\to B)$, as defined in~\eqref{replace} with $\sigma_1,\sigma_2\in\md(B)$ and $|A|>1$
\be
\sD\left(\mR_{\sigma_1}\big\|\mR_{\sigma_2}\right)=\sD(\sigma_1\|\sigma_2)\;.
\ee  
\item Suppose $\sD$ is a channel relative entropy, and let $\mE_1,...,\mE_n\in\cptp(A\to B)$ be a set of $n$ orthogonal quantum channels; i.e. their Choi matrices satisfies $\tr[J_{\mE_j}^{AB}J_{\mE_k}^{AB}]=0$ for all $j\neq k\in[n]$. Then, for any probability vector $\p=\{p_x\}_{x=1}^n$, and $\mN\eqdef\sum_{x=1}^np_x\mE_x$,  
\be
\sD\left(\mE_x\big\|\mN\right)=-\log(p_x)\quad\forall\;x=1,...,n.
\ee
\item If $\sD$ is a channel relative entropy then
for all $\mM,\mN\in\cptp(A\to B)$
\be
D_{\min}(\mM\|\mN)\leq \sD(\mM\|\mN)\leq D_{\max}(\mM\|\mN)\;,
\ee
where $D_{\min}$ and $D_{\max}$ are the channel min and max relative entropies as defined in~\eqref{min2} and~\eqref{max1}, respectively.
\item Let $\sD$ be a channel relative entropy, $\mR\in\cptp(A\to B)$ be the completely randomizing channel (i.e.\ $\mR=\mR_{\u^B}$ meaning $\mR(\rho^A)=\u^B$ is the maximally mixed state $\u^B$ for all $\rho\in\md(A)$), and $\mV\in\cptp({A\to B})$ be an isometry channel (we assume $|A|\leq |B|$). Then,
\be
\sD\left(\mV^{A\to B}\big\|\mR^{A\to B}\right)=\log|AB|\;.
\ee
\item If $\sD$ is a channel relative entropy then for any $\mN,\mM,\mE\in\cptp(A\to B)$ 
\be\label{triang}
\sD(\mN\|\mM)\leq\sD(\mN\|\mE)+D_{\max}(\mE\|\mM)\;.
\ee
\item If $\sD$ is a channel relative entropy then for any $\mN, \mE, \mM \in\cptp(A\to B)$ 
	\ba\label{conti}
		&\sD(\mN\|\mM) - \sD(\mE\|\mM)\\ 
		&\leq \min_{0\leq s\leq 2^{-D_{\max}(\mE\|\mN)}} D_{\max}\left(\mN+s(\mM-\mE)\big\|\mM\right)
	\ea
	Moreover, if $J_{\mN},\;J_{\mE},$ and $J_{\mM}$ have full support then
		\begin{align}\label{bound0}
	\sD({\mN}\|{\mM}) - \sD({\mE}\|{\mM})\leq \log \left( 1 + \frac{  \| J_{\mN} - J_{\mE} \|_{\infty} } {\lambda_{\min}(J_{\mE}) \lambda_{\min}(J_{\mM}) }  \right)\;.
	\end{align}
\end{enumerate}
\end{theorem}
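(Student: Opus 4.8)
Every clause is the generalized DPI evaluated on a well-chosen superchannel, each time reducing the claim either to a known state-domain fact or to an elementary identity. The recurring superchannels are: (a) \emph{kill} — ``prepare a fixed input, apply the channel, trace out the output,'' which maps any $\mN\in\cptp(A\to B)$ to the trivial channel $1$; (b) \emph{create} — the pre/post pair $(\mE,\mF)$ with trivial inner channel, realizing $1\mapsto\mF\circ\mE$, an arbitrary channel; (c) \emph{probe} — $\mN\mapsto(\operatorname{id}_R\otimes\mN)(\psi)$ for a fixed $\psi\in\md(RA)$, whose maximally-entangled instance returns the Choi state $|A|^{-1}J_\mN$; (d) \emph{measure} — (c) followed by a measurement channel; (e) \emph{dress} — tensoring with a fixed pair of states (using additivity); and (f) \emph{mix} — $\mathcal L\mapsto t\mathcal L+(1-t)\mathcal K$ for a fixed $\mathcal K\in\cptp(A\to B)$ and $t\in[0,1]$, realized by a pre-processing that tosses a classical coin and, with probability $1-t$, diverts the input to a reference while feeding $\mathcal L$ a dummy, followed by a post-processing that reads a copy of the coin and either keeps $\mathcal L$'s output or applies $\mathcal K$ to the diverted input.

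\textbf{Items 1--4.} For item 1, (a) applied to $(\mN,\mM)$ gives $\sD(1\|1)\le\sD(\mN\|\mM)$ and (b) applied to $(1,1)$ gives $\sD(\mM\|\mM)\le\sD(1\|1)$; with $\sD(1\|1)=0$ this is exactly the claim. Item 2: additivity gives $\sD(1\|1)=2\sD(1\|1)$, and (a) applied to the normalization pair shows $\sD(1\|1)\le1<\infty$, hence $\sD(1\|1)=0$. Item 3: ``$\Rightarrow$'' is immediate since classical states are channels; for ``$\Leftarrow$'', distinct channels have distinct Choi states, so via (c)+(d) a two-outcome distinguishing measurement produces distinct classical $\p\neq\q$ with $\sD(\p\|\q)\le\sD(\mN\|\mM)$, whence $\sD(\mN\|\mM)=0$ would force $\sD(\p\|\q)=0$ (using the standing nonnegativity of divergences, which for relative entropies is items 1--2) and contradict faithfulness of the reduction. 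Item 4: feeding $\mR_{\sigma_i}$ a fixed input returns $\sigma_i$, giving one inequality, while an (e)-type superchannel $\sigma_i\mapsto\mR_{\sigma_i}$ gives the other.

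\textbf{Items 5--7.} The engine is that the restriction of $\sD$ to states is a quantum relative entropy, hence lies between $D_{\min}$ and $D_{\max}$~\cite{GT2020b}; on pairs $(e_x,\p)$ and on (pure, maximally mixed) pairs these bounds coincide, so $\sD$ is pinned there. Item 5: orthogonal Choi matrices give orthogonal support projectors on $RB$; (c)+(d) with these projectors sends $\mE_x\mapsto e_x$ and $\mN\mapsto\p$, whence $\sD(\mE_x\|\mN)\ge\sD(e_x\|\p)=-\log p_x$, while $D_{\max}(\mE_x\|\mN)=-\log p_x$ by a one-line Choi computation, so item 6 supplies the matching upper bound. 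Item 6, lower bound: probe with the optimal $\psi$ and measure $\{\Pi_{\mM(\psi)},I-\Pi_{\mM(\psi)}\}$, landing on $(e_1,(q,1-q))$ with $q=\tr[\mN(\psi)\Pi_{\mM(\psi)}]$, so $\sD(\mM\|\mN)\ge-\log q=D_{\min}(\mM\|\mN)$. Item 6, upper bound: with $t=2^{D_{\max}(\mM\|\mN)}$ write $\mN=\tfrac1t\mM+(1-\tfrac1t)\mathcal K$ with $\mathcal K\in\cptp$ (completely positive since $t\mN\ge\mM$, trace-preserving by the trace count), and note the superchannel ``read the classical bit, apply $\mM$ or $\mathcal K$ accordingly'' maps $(e_1,(\tfrac1t,1-\tfrac1t))\mapsto(\mM,\mN)$, giving $\sD(\mM\|\mN)\le\log t=D_{\max}(\mM\|\mN)$. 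Item 7: the Choi state of $\mV$ is pure and that of $\mR$ is maximally mixed on $RB$, so (c) gives $\sD(\mV\|\mR)\ge\log|AB|$, and $D_{\max}(\mV\|\mR)=\log|AB|$ from $\min\{t:tJ_\mR\ge J_\mV\}$, which item 6 matches.

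\textbf{Items 8--9 and the main obstacle.} Granting item 8, item 9 is quick: for $0\le s\le2^{-D_{\max}(\mE\|\mN)}$ the map $\mN_s:=\mN+s(\mM-\mE)$ is a channel and $\mN=s\mE+(1-s)\mV_s$, $\mN_s=s\mM+(1-s)\mV_s$ with $\mV_s=(\mN-s\mE)/(1-s)\in\cptp$; hence the \emph{mix} superchannel (f) with fixed channel $\mV_s$ and weight $s$ maps $\mE\mapsto\mN$, $\mM\mapsto\mN_s$, so DPI gives $\sD(\mN\|\mN_s)\le\sD(\mE\|\mM)$, and combining with item 8, $\sD(\mN\|\mM)\le\sD(\mN\|\mN_s)+D_{\max}(\mN_s\|\mM)\le\sD(\mE\|\mM)+D_{\max}(\mN_s\|\mM)$; minimizing over $s$ yields the first bound, and the explicit $\log(1+\cdots)$ bound follows by taking $s$ near the top of the allowed range and estimating $D_{\max}(\mN_s\|\mM)$ with $\lambda_{\min}(J_\mE)$, $\lambda_{\min}(J_\mM)$ and $\|J_\mN-J_\mE\|_\infty$ — a routine operator-norm computation. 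The real work is item 8. In the state domain $\D(\rho\|\sigma)\le\D(\rho\|\tau)+D_{\max}(\tau\|\sigma)$ is proved by dressing $(\rho,\tau)$ with a classical flag of divergence $D_{\max}(\tau\|\sigma)$ and applying ``keep the system on flag $0$, \emph{re-prepare} $\tau'$ on flag $1$,'' where $\sigma=2^{-s}\tau+(1-2^{-s})\tau'$. The channel analogue would replace $\mE$ by the unrelated completion $\mathcal K$ (with $\mM=2^{-s}\mE+(1-2^{-s})\mathcal K$) on the flag-$1$ branch, and a superchannel's post-processing cannot do this since $\mathcal K$ need not factor through $\mE$ — I expect this to be the main hurdle. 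The plan is to route through an \emph{auxiliary channel carrying the flag in its input}: a superchannel's pre-processing acts before the inner channel, so it \emph{can} branch on an input flag — diverting the input to a reference and re-running $\mathcal K$ on it on the $1$-branch while passing the $0$-branch through untouched — precisely the conditional step a post-processing cannot perform; a second superchannel prepending the flag in state $2^{-s}|0\rangle\langle0|+(1-2^{-s})|1\rangle\langle1|$ then folds the auxiliary channel back onto $(\mN,\mM)$. The crux is to design the two auxiliary channels so that both are completely positive and both the $\mN$- and $\mM$-branches emerge correctly after composing the two superchannels; reconciling the branches is likely to require an extra internal randomization, and this is where the argument is most technical.
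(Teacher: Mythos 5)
Items 1--7 are correct and follow essentially the paper's own constructions (your item 5 upper bound via the computation $D_{\max}(\mE_x\|\mN)=-\log p_x$ plus item 6, rather than the paper's direct superchannel $|x\rl x|\mapsto\mE_x$, is a harmless variation, and your handling of the $\infty$ case in item 2 is slightly more careful than the paper's). Your item 9 is a genuinely different and cleaner route: the paper proves it from scratch with a second ``conditional replacement'' supermap, whereas you derive it from item 8 together with the mixing superchannel $\mathcal{L}\mapsto s\mathcal{L}+(1-s)\mV_s$, which \emph{is} unambiguously a superchannel because the coin is tossed in the pre-processing. The problem is that item 8 — on which your item 9 now rests — is not proved, and the obstacle you flag is real and unresolved by your sketch.

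Concretely: the needed object is a superchannel $\Theta\in\super(AB\otimes X\to AB)$ acting on $\cptp(A\to BX)$ with $\Theta[\mN\otimes|0\rl 0|]=\mN$ and $\Theta[\mE\otimes\q]=2^{-d}\mE+(1-2^{-d})\mF=\mM$. Any such $\Theta$ must, on the flag-$1$ branch, output $\mF(\omega)$ weighted by the probability of the outcome $X=1$; since both that probability and $\mF(\omega)$ depend on the input, the candidate map $\omega\mapsto\tr_X[\Pi_0\mP(\omega)]+\mF(\omega)\,\tr[\Pi_1\mP(\omega)]$ is quadratic in $\omega$, and the linearized version $\omega\mapsto\tr_X[\Pi_0\mP(\omega)]+c(\mP)\,\mF(\omega)$ fails to be trace-preserving on general $\mP\in\cptp(A\to BX)$, hence is not a superchannel. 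The information-theoretic root is the one you identify: the flag is an \emph{output} of the channel, so it becomes available only after the single use, by which time the input has either been consumed by the channel (needed for the $0$-branch) or diverted to memory for $\mF$ (needed for the $1$-branch) — and no pre-processing can do both. Your proposed fix of ``prepending the flag on the input'' does not escape this, because the flag's distribution ($|0\rl 0|$ versus $\q$) is precisely what distinguishes the two channels being compared; it must therefore be generated by the channel, not supplied by the user, so the pre-processing cannot branch on it. This is why the analogous constructions succeed in items 5 and 6 (there the flag-carrying factor has trivial input, so the whole input can be parked in memory and the conditioning done in the post-processing) but break for item 8. You should be aware that the paper's own proof of item 8 (and its independent proof of item 9) asserts exactly this conditional-replacement superchannel with only the informal justification ``measure $X$; on outcome $1$ replace the input channel by $\mF$,'' so your proposal has not missed an idea that the paper supplies — but as it stands your argument for items 8 and 9 is incomplete, and completing it requires either exhibiting a valid superchannel achieving the two required evaluations or a different proof of the inequality \eqref{triang} altogether.
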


\begin{remark}
Property 8 implies that for any $\mN,\mE,\mF\in\cptp(A\to B)$
\be
\big|\sD(\mN\|\mE)-\sD(\mN\|\mF)\big|\leq D_{T}(\mE\|\mF)
\ee
where
\be
D_{T}(\mE\|\mF)\eqdef\max\big\{D_{\max}(\mE\|\mF),D_{\max}(\mF\|\mE)\big\}
\ee
is a metric (whose state version is known as the Thompson metric) on $\cptp(A\to B)$. Hence, in particular, any channel relative entropy $\sD$ is continuous in its second argument on the subset of $\cptp(A\to B)$ consisting of channels that have strictly positive Choi matrices. Similarly, Property 9 implies a continuity in the first argument of $\sD$. That is, if $\mE$ is very close to $\mN$ then $s$ can be taken to be very close to one as long as $\supp(J_{\mE})\subseteq\supp(J_{\mN})$.  Note that in this case, if we also have $\supp(J_{\mN})\subseteq\supp(J_{\mM})$
then the continuity of $D_{\max}$ implies that $D_{\max}\left(\mN+s(\mM-\mE)\big\|\mM\right)$ goes to zero as $s$ goes to one
(recall that if $s=1$ then $\mE=\mN$).
\end{remark}

\subsection{Divergences of Classical Channels}\label{classical}

The properties discussed in the previous subsection applies to all quantum-channel divergences. { Here we show that there are additional properties satisfied by channel relative entropies, when the inputs are restricted to be classical. That is, in this subsection we consider classical dynamical divergences; i.e. divergences of classical channels. Most interestingly,  we will see that there is only one classical-channel relative entropy that reduces to the KL divergence.}
 We start with the following theorem.

\begin{theorem}\label{thm2}
Let $\sD$ be a classical channel divergence that reduces to the classical (state) divergence $\xD$ on classical states in $\md(X)\times\mD(X)$. Suppose further that $\xD$ is quasi-convex. Then, for all classical channels $\mM,\mN\in\cptp(X\to Y)$
\be
\underline{\xD}(\mM\|\mN)\leq\sD(\mM\|\mN)\leq\overline{\xD}(\mM\|\mN)
\ee
where
\ba
&\underline{\xD}(\mM\|\mN)\eqdef\max_{x\in\{1,...,|X|\}}\xD\left(\mM(|x\lr x|)\big\|\mN(|x\lr x|)\right)\\
&\overline{\xD}(\mM\|\mN)\eqdef\inf_{|Z|\in\mbb{N}}\xD\left(\p\|\q\right)
\ea
where the infimum is over all $\p,\q\in\md(Z)$ that satisfies
$(\p,\q)\succ\big(\mM(|x\lr x|),\mN(|x\lr x|)\big)$ for all $\;x\in[|X|]$.
Moreover, $\underline{\xD}$ is a classical channel relative entropy, and $\overline{\xD}$ is a normalized classical channel divergence.
\end{theorem}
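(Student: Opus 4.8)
The plan is to establish the two inequalities separately and then verify the structural claims about $\underline{\xD}$ and $\overline{\xD}$. For the lower bound $\underline{\xD}(\mM\|\mN)\le\sD(\mM\|\mN)$, I would exhibit, for each fixed $x\in[|X|]$, a superchannel $\Theta_x\in\super(XY\to 1Y)$ that sends a channel $\mK\in\cptp(X\to Y)$ to the state $\mK(|x\lr x|)\in\md(Y)=\cptp(1\to Y)$; concretely, $\Theta_x$ has the trivial pre-processing map that prepares $|x\lr x|^X$ (no reference system needed) and the trivial post-processing. Then the generalized DPI gives $\sD(\mM\|\mN)\ge\sD(\Theta_x[\mM]\|\Theta_x[\mN])=\sD(\mM(|x\lr x|)\|\mN(|x\lr x|))=\xD(\mM(|x\lr x|)\|\mN(|x\lr x|))$, using that $\sD$ reduces to $\xD$ on classical states. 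Taking the maximum over $x$ yields the bound; quasi-convexity of $\xD$ is not needed here.

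For the upper bound $\sD(\mM\|\mN)\le\overline{\xD}(\mM\|\mN)$, fix any $\p,\q\in\md(Z)$ with $(\p,\q)\succ(\mM(|x\lr x|),\mN(|x\lr x|))$ for all $x$. By definition of relative majorization there is a stochastic matrix $\mE_x\in\cptp(Z\to Y)$ with $\mE_x(\p)=\mM(|x\lr x|)$ and $\mE_x(\q)=\mN(|x\lr x|)$. I want to package these into a single superchannel $\Theta$ carrying the pair of replacement-type channels $(\mR_\p,\mR_\q)\in\cptp(X\to Z)$ (or really the pair of constant channels on some suitable input) to $(\mM,\mN)$. The natural construction is: pre-processing copies the classical input $x$ into a reference register while feeding a fixed input to $\mR_\p$/$\mR_\q$, and the post-processing applies the controlled stochastic map $\sum_x |x\lr x|_R\otimes\mE_x$. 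Applying DPI and Property~4 of Theorem~\ref{properties} (replacement channels reduce to states) gives $\sD(\mM\|\mN)\le\sD(\mR_\p\|\mR_\q)=\sD(\p\|\q)=\xD(\p\|\q)$; the infimum over admissible $(\p,\q)$ gives $\overline{\xD}(\mM\|\mN)$. Here quasi-convexity of $\xD$ is the ingredient that guarantees such a common majorizing pair exists at all with finite value (e.g. one may take $\p=\q$ block-direct-sums built from the individual columns and invoke the Blackwell/testing-region characterization together with quasi-convexity to control $\xD(\p\|\q)$), which is why the hypothesis appears.

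Finally I would verify the last sentence. For $\underline{\xD}$: additivity follows because the columns of a tensor product channel are tensor products of columns and $\xD$ is additive, combined with the elementary identity $\max_x\max_y(f(x)+g(y))=\max_x f(x)+\max_y g(y)$; normalization and the DPI for $\underline{\xD}$ follow from the corresponding properties of $\xD$ together with the fact that the action of a classical superchannel on the columns of a channel is again a classical data-processing of those columns (this uses the explicit pre/post form of classical superchannels). For $\overline{\xD}$: normalization is a direct check on the $2\times2$ example, and the (generalized) DPI follows because if $\Theta[\mM]$ has columns that are processings of $\mM$'s columns, then any pair $(\p,\q)$ majorizing all columns of $\mM$ also majorizes all columns of $\Theta[\mM]$, so the feasible set only grows and the infimum can only decrease. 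I expect the main obstacle to be the second step: making the superchannel that realizes $(\mR_\p,\mR_\q)\succ(\mM,\mN)$ fully rigorous — in particular checking that the controlled post-processing $\sum_x|x\lr x|_R\otimes\mE_x$ is genuinely trace-preserving and that the reference-register bookkeeping matches the superchannel definition — and, relatedly, pinning down exactly how quasi-convexity of $\xD$ enters to guarantee the infimum defining $\overline{\xD}$ is over a nonempty set with the right value.
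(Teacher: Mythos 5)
Your two DPI arguments for the displayed inequalities are essentially the paper's: the lower bound comes from the superchannels $\Theta_x$ that prepare $|x\rl x|$ at the input, and the upper bound from observing that $(\p,\q)\succ\big(\mM(|x\rl x|),\mN(|x\rl x|)\big)$ for all $x$ is exactly the condition that some superchannel $\Theta\in\super(Z\to XY)$ (append $\p$ or $\q$ to the input and apply the controlled post-processing $\mE^{XZ\to Y}$ with $\mE(|x\rl x|\otimes\cdot)=\mE_x(\cdot)$) maps the pair $(\p,\q)$ to $(\mM,\mN)$; the paper then invokes the optimality property of Theorem~\ref{outs}. Neither inequality needs quasi-convexity, as you correctly note for the first one.

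The genuine gap is in the ``moreover'' clause, and you have mislocated the role of quasi-convexity. It is not needed to guarantee that a common majorizing pair exists: the feasible set in the definition of $\overline{\xD}$ is always nonempty (the pair $\p=(1,0)$, $\q=(0,1)$ relatively majorizes every pair, since one may take the stochastic matrix whose two columns are the targets), and quasi-convexity plays no role in the upper bound. Where it is indispensable is in showing that the closed-form expression $\underline{\xD}(\mM\|\mN)=\max_x\xD\big(\mM(|x\rl x|)\big\|\mN(|x\rl x|)\big)$ satisfies the generalized DPI under superchannels, which is part of the claim that $\underline{\xD}$ is a channel relative entropy. Your assertion that ``the action of a classical superchannel on the columns of a channel is again a classical data-processing of those columns'' is false in general: a superchannel $\Theta[\mM]=\mF^{WY\to Y'}\circ\mM\circ\mE^{X'\to WX}$ may pre-process $|x'\rl x'|$ into a correlated distribution on $WX$, so the column $\Theta[\mM](|x'\rl x'|)$ is a post-processed \emph{mixture} over $x$ of the columns $\mM(|x\rl x|)$, not a processing of any single column. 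Bounding $\xD$ of such a mixture by $\max_x\xD\big(\mM(|x\rl x|)\big\|\mN(|x\rl x|)\big)$ is precisely where quasi-convexity (combined with the DPI of $\xD$) enters; equivalently, this is what identifies the closed form with the abstract minimal extension $\sup_\Theta\xD(\Theta[\mM]\|\Theta[\mN])$ of Theorem~\ref{outs}, from which the DPI is automatic. Without this step your proposal establishes the two displayed inequalities but not that $\underline{\xD}$ is itself a channel relative entropy.
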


\begin{remark}
In the next section, we will develop a general framework to extend channel divergences from one domain to a larger one, and
the optimality of the divergences $\underline{\xD}$ and $\overline{\xD}$ will follow trivially from that general formalism. Hence, the theorem above can be viewed as a corollary of the third property in Theorem~\ref{outs} of the next section. Moreover, we will see shortly from the closed formula for $\overline{\xD}(\mM\|\mN)$ that w.l.o.g. we can bound $|Z|\leq |Y|$. 
\end{remark}

To illustrate the expression above for $\overline{\xD}$, in Fig.~\ref{3curves} we draw three lower Lorenz curves associated with the two channels $\mN,\mM\in\cptp(X\to Y)$, where $|X|=2$ and $|Y|=4$. In this example, the channel $\mM$ is defined via the probability vectors $\m_0\eqdef\mM(|0\lr 0|)=(1/3,1/4,1/4,1/6)$ and $\m_1\eqdef\mM(|1\lr 1|)=(5/12,1/6,1/4,1/6)$, and the channel $\mN$ via $\n_0\eqdef\mN(|0\lr 0|)=(1/12,1/6,1/3,5/12)$
and $\n_1\eqdef\mN(|1\lr 1|)=(1/12,1/12,1/2,1/3)$. The figure helps to see how to compute the optimal pair $(\p,\q)$ that satisfies 
both $(\p,\q)\succ(\m_0,\n_0)$ and $(\p,\q)\succ(\m_1,\n_1)$. The Lorenz curve, $\mL(\p,\q)$, must be the closest possible curve to both $\mL(\m_0,\n_0)$ and $\mL(\m_1,\n_1)$ and it must also be below both of them. In Fig.~\ref{3curves}, the dashed line is the Lorenz curve of $\mL(\p,\q)$. Its vertices are $(0,0)$, $(5/12,1/12)$, $(7/12,2/12)$, $(5/6,7/12)$ and $(1,1)$. This reveals that the optimal $\p$ and $\q$ are given by $\p=(5/12,1/6,1/4,1/6)$ and $\q=(1/12,1/12,5/12,5/12)$.

\begin{figure}[h]\centering
    \includegraphics[width=0.3\textwidth]{3curves}
  \caption{\linespread{1}\selectfont{\small Three Lower Lorenz Curves. The channels $\mM,\mN\in\cptp(X\to Y)$ with $|X|=2$ and $|Y|=4$.}}
  \label{3curves}
\end{figure}

For the general case, we provide in the theorem below a closed formula for $\overline{\xD}(\mM\|\mN)$ for any two classical channels $\mM,\mN\in\cptp(X\to Y)$ in finite dimensions $|X|,|Y|<\infty$. For any $x=1,...,|X|$, we will denote by $(\m_x,\n_x)\eqdef\big(\mM(|x\lr x|),\mN(|x\lr x|)\big)$
and by $M=(M_{y|x})$ and $N=(N_{y|x})$ the $|Y|\times |X|$ matrices whose components are $M_{y|x}=\la y|\m_x|y\ra$ and $N_{y|x}=\la y|\n_x|y\ra$, respectively. We also rearrange the components of the columns of $M$ and $N$ such that for each $x=1,...,|X|$
\be
\frac{M_{1|x}}{N_{1|x}}\geq\frac{M_{2|x}}{N_{2|x}}\geq\cdots\geq\frac{M_{|Y||x}}{N_{|Y||x}}\;.
\ee
Therefore, aside from the points $(0,0)$ and $(1,1)$, for any $x=1,...,|X|$, the vertices of the lower Lorenz curve of $\mL(\m_x,\n_x)$ are given by $\{(a_{zx},b_{zx})\}_{z=1}^{|Y|-1}$, where
\be
a_{zx}\eqdef\sum_{y=1}^{z}M_{y|x}\quad\text{and}\quad b_{zx}\eqdef \sum_{y=1}^{z}N_{y|x}\;.
\ee 

\begin{theorem}[Closed Formula]\label{thm:cc}
Let $\D$ be a classical divergences and let $\mN,\mM\in\cptp(X\to Y)$. Using the notations above, the maximal classical channel extension $\overline{\xD}$ is given by
\be
\overline{\xD}(\mM\|\mN)=\xD(\p\|\q)\;,
\ee
where $\p=\{p_z\}_{z=1}^{|Y|}$ and $\q=\{q_z\}_{z=1}^{|Y|}$ are $|Y|$-dimensional probability vectors given by
\ba
&p_1=M_{1x_1}\quad{,}\quad p_z=a_{zx_z}-a_{(z-1)x_{z-1}}\\
&q_1=N_{1x_1}\quad{,}\quad q_z=b_{zx_z}-b_{(z-1)x_{z-1}}\quad\forall\;z=2,...,|Y|
\ea
where $x_1,...,x_{|Y|}\in\{1,...,|X|\}$ are defined by induction via the relations
\be
\frac{N_{1x_1}}{M_{1x_1}}=\min_{x\in[|X|]}\frac{N_{1x}}{M_{1x}}\;,
\ee
and for any $z=2,...,|Y|$
\be\label{defk}
\frac{b_{zx_z}-b_{(z-1)x_{z-1}}}{a_{zx_z}-a_{(z-1)x_{z-1}}}=\min_{x\in\{1,...,|X|\}}\frac{b_{zx}-b_{(z-1)x_{z-1}}}{a_{zx}-a_{(z-1)x_{z-1}}}\;.
\ee
\end{theorem}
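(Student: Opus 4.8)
The plan is to show that the pair $(\p,\q)$ constructed by the greedy induction is the $\succ$-minimal pair dominating all the $(\m_x,\n_x)$, i.e. that its lower Lorenz curve $\mL(\p,\q)$ is the pointwise upper envelope (from below) of the curves $\{\mL(\m_x,\n_x)\}_{x=1}^{|X|}$, and that this envelope is itself a genuine Lorenz curve. Once this is established, Blackwell's theorem and Theorem~\ref{thm2} give $\overline{\xD}(\mM\|\mN)=\xD(\p\|\q)$: indeed any pair $(\p',\q')$ with $(\p',\q')\succ(\m_x,\n_x)$ for all $x$ has $\mt(\p',\q')\supseteq\mt(\m_x,\n_x)$, hence $\mL(\p',\q')$ lies nowhere above the envelope; and the envelope $\mL(\p,\q)$ itself lies nowhere above any $\mL(\p',\q')$, so $(\p,\q)\succ(\p',\q')$, whence $\xD(\p\|\q)\ge\xD(\p'\|\q')$ by the DPI for $\xD$. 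Combined with $(\p,\q)\succ(\m_x,\n_x)$ (which is exactly the statement that $\mL(\p,\q)$ stays below each $\mL(\m_x,\n_x)$), this identifies $(\p,\q)$ as the infimum and pins $\overline{\xD}$ to $\xD(\p\|\q)$.

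First I would recall the key geometric fact about Lorenz curves: a lower Lorenz curve is the graph of a concave, piecewise-linear, nondecreasing function on $[0,1]$ from $(0,0)$ to $(1,1)$, and the successive slopes along $\mL(\m_x,\n_x)$ are exactly the sorted likelihood ratios $M_{1|x}/N_{1|x}\ge\cdots\ge M_{|Y||x}/N_{|Y||x}$ — wait, one must be careful that the curve is parametrized so that the horizontal coordinate is the $\p$-mass; with the ordering convention fixed in the excerpt the $z$-th edge of $\mL(\m_x,\n_x)$ runs from $(a_{(z-1)x},b_{(z-1)x})$ to $(a_{zx},b_{zx})$. The lower envelope of a finite family of concave curves through the same two endpoints is again concave, piecewise linear, and through those endpoints, hence is a legitimate Lorenz curve; the pair $(\p,\q)$ reading off its edge-vectors is a valid pair of probability vectors of dimension at most the total number of distinct breakpoints, which one checks is $\le|Y|$ after the greedy selection collapses redundant vertices. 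The heart of the argument is then to verify by induction on $z$ that the greedy rule \eqref{defk} — at each step pick the index $x_z$ minimizing the slope of the next candidate edge emanating from the current vertex $(a_{(z-1)x_{z-1}},b_{(z-1)x_{z-1}})$ — actually traces this envelope. The induction hypothesis is: after $z-1$ steps the constructed vertex lies on the envelope and on some individual curve $\mL(\m_{x_{z-1}},\n_{x_{z-1}})$, and no curve dips below the polyline built so far.

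The main obstacle I anticipate is the inductive step: one must show that extending from the current envelope vertex $v_{z-1}$ along the minimal-slope candidate edge does not \emph{overshoot} any curve — that is, the new edge stays weakly below every $\mL(\m_x,\n_x)$ — and simultaneously that the new vertex $v_z$ actually sits \emph{on} some curve so the polyline is tight. The subtlety is that the candidate edge for index $x$ is the segment from $v_{z-1}$ to $(a_{zx},b_{zx})$, but $v_{z-1}$ need not be a vertex of $\mL(\m_x,\n_x)$, so "the slope of the next edge of curve $x$" is really the slope of the chord from $v_{z-1}$ to the next vertex of curve $x$ lying to the right of $v_{z-1}$; concavity of each $\mL(\m_x,\n_x)$ ensures this chord lies below that curve, and minimality over $x$ together with concavity of the envelope gives that the chosen chord lies below \emph{all} of them. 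Checking that $v_{z-1}$ lies below or on curve $x$ to begin with (so that the chord-slope comparison is the right one) is where the induction hypothesis is used, and handling the edge case where the minimizing $x$ changes between steps — so that $v_z$ becomes a breakpoint of the envelope that is a vertex of a \emph{new} curve — requires a short argument that the slopes are nonincreasing across the switch, which again follows from concavity of each individual curve and the greedy choice. Once this combinatorial lemma is in place, the formulas for $p_z,q_z$ as the coordinate differences $a_{zx_z}-a_{(z-1)x_{z-1}}$ and $b_{zx_z}-b_{(z-1)x_{z-1}}$ are immediate, and the theorem follows.
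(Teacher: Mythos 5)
Your overall strategy is the same as the paper's: build the optimal lower Lorenz curve vertex by vertex with the greedy slope\nobreakdash-minimization rule, check feasibility (the polyline lies weakly below every $\mL(\m_x,\n_x)$) and optimality (every feasible curve lies weakly below the polyline), then convert the geometry into the value of the infimum via Blackwell's theorem and the DPI. Two steps as written would fail, though. First, the direction of the concluding inequality is inverted. To identify $(\p,\q)$ as attaining the \emph{infimum} defining $\overline{\xD}$ you need that every feasible pair satisfies $(\p',\q')\succ(\p,\q)$, whence $\xD(\p'\|\q')\geq\xD(\p\|\q)$; you instead conclude $(\p,\q)\succ(\p',\q')$ and $\xD(\p\|\q)\geq\xD(\p'\|\q')$, which together with feasibility of $(\p,\q)$ would make it the \emph{maximizer}. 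Note also that your two clauses --- ``$\mL(\p',\q')$ lies nowhere above the envelope'' and ``the envelope lies nowhere above any $\mL(\p',\q')$'' --- cannot both hold unless the curves coincide; only the first is true, and only the first is needed.

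Second, your justification that the envelope is a legitimate Lorenz curve rests on the wrong convexity fact. With the ordering convention $M_{1|x}/N_{1|x}\geq\cdots\geq M_{|Y||x}/N_{|Y||x}$, the edge slopes of $\mL(\m_x,\n_x)$ are $N_{z|x}/M_{z|x}$ and are nondecreasing, so each lower Lorenz curve is \emph{convex}, not concave; the pointwise minimum of convex functions need not be convex, hence need not be a Lorenz curve of any probability pair. What the greedy rule actually produces is a convex minorant threaded through vertices of the individual curves, and the step still missing from your plan (and left implicit in the paper's own proof as well) is that the greedy slopes $\bigl(b_{zx_z}-b_{(z-1)x_{z-1}}\bigr)/\bigl(a_{zx_z}-a_{(z-1)x_{z-1}}\bigr)$ are nondecreasing in $z$, so that the vector $(\p,\q)$ read off the polyline is already correctly ordered and the polyline genuinely equals $\mL(\p,\q)$. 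Your inductive feasibility argument (each greedy chord stays below every curve by minimality of its slope together with convexity of each $\mL(\m_x,\n_x)$) and your optimality argument (any feasible curve passes weakly below each selected vertex, hence below the whole polyline) coincide with the paper's and go through once these two items are repaired.
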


The maximal divergence $\overline{\xD}$ is in general not additive, and therefore, is not a relative entropy. 
In the next result, we prove that if $\sD$ is a channel relative entropy that reduces to the KL relative entropy, then it is unique.

\begin{myt}{\color{yellow} Uniqueness of the Channel KL-Relative Entropy}
\begin{theorem}\label{mainresult}
Let $\sD$ be a classical channel divergence that reduces to the Kullback-Leibler divergence, $D$, on classical states. If $\sD$ is continuous in its second argument then for all $\mN,\mM\in\cptp(X\to Y)$
\be
\sD(\mM\|\mN)=\max_{x\in\{1,...,|X|\}}D\left(\mM(|x\lr x|)\big\|\mN(|x\lr x|)\right)\;.
\ee
\end{theorem}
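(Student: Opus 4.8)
The plan is to bracket $\sD(\mM\|\mN)$ between the minimal and maximal classical-channel extensions of the Kullback--Leibler divergence and then collapse the bracket using the continuity hypothesis. First I would invoke Theorem~\ref{thm2} with $\xD=D$ the KL divergence: $D$ is jointly convex, hence quasi-convex, so Theorem~\ref{thm2} gives
\be
\max_{x}D\!\left(\mM(|x\lr x|)\big\|\mN(|x\lr x|)\right)=\underline{D}(\mM\|\mN)\le \sD(\mM\|\mN)\le\overline{D}(\mM\|\mN)
\ee
for all $\mM,\mN\in\cptp(X\to Y)$. The left inequality is already the ``$\ge$'' half of the claim (it also follows directly from the generalized DPI by feeding the fixed inputs $|x\lr x|$ into the pair of channels). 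So the entire content is the reverse bound $\sD(\mM\|\mN)\le\max_x D(\mM(|x\lr x|)\|\mN(|x\lr x|))$, and this is exactly where continuity must enter, because in general $\overline{D}(\mM\|\mN)>\underline{D}(\mM\|\mN)$ (the example behind Fig.~\ref{3curves} has a strict gap). Note also that $\sD$ is \emph{not} assumed additive, so the triangle and continuity estimates of Theorem~\ref{properties}, Properties~8--9, are not directly available; the continuity-in-the-second-argument hypothesis is the substitute that must do their work.

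For the reverse inequality I would first reduce, using only the generalized DPI together with the relative-majorization structure theorem of~\cite{GT2020a}, to the case in which the second argument is a completely randomizing (uniform-replacement) channel. Assume first that $\mN$ has rational entries with common denominator $n$, $N_{y|x}=n_{y|x}/n$, and build two superchannels: (i) one that copies the input $x$ to a reference register and then refines each output letter $y$ uniformly into $n_{y|x}$ sub-letters, and (ii) the controlled coarse-graining that undoes this. Transformation (i) maps $\mN$ to $\mR_{\u^{(n)}}$ and maps $\mM$ to the channel $\hat{\mM}$ with $\hat{\mM}(|x\lr x|)=\bigoplus_y M_{y|x}\u^{(n_{y|x})}$; transformation (ii) maps $(\hat{\mM},\mR_{\u^{(n)}})$ back to $(\mM,\mN)$. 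Applying the generalized DPI in both directions gives $\sD(\mM\|\mN)=\sD(\hat{\mM}\|\mR_{\u^{(n)}})$, and by the relative-majorization theorem $\big(\hat{\mM}(|x\lr x|),\u^{(n)}\big)\sim\big(\mM(|x\lr x|),\mN(|x\lr x|)\big)$, so $\max_x D(\hat{\mM}(|x\lr x|)\|\u^{(n)})=\max_x D(\mM(|x\lr x|)\|\mN(|x\lr x|))$. The general $\mN$ then follows by approximating with rational channels and invoking continuity of $\sD$ in the second argument.

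It remains to prove $\sD(\hat{\mM}\|\mR_{\u^{(n)}})=\max_x D(\hat{\mM}(|x\lr x|)\|\u^{(n)})$ --- that against a flat, input-independent reference the maximal extension already collapses onto the minimal one for any divergence restricting to KL --- and this is the step I expect to be the main obstacle, and the one that genuinely consumes the continuity hypothesis. One has to be careful here: no DPI-only argument can close the $\underline{D}$--$\overline{D}$ gap, since for a fixed first argument a small perturbation of the second argument cannot make one conditional testing region contain another unless it already does, and the uniform-target reduction does not by itself make one $\hat{\m}_x$ majorize the others. I therefore anticipate the endgame is a limiting argument that uses continuity together with the replacement-channel identity (Theorem~\ref{properties}, Property~4), the orthogonal-channel identity (Property~5), and $D_{\max}$-type perturbation bounds: perturb $\mR_{\u^{(n)}}$ inside the full-Choi-support region, control $\sD(\hat{\mM}\|\cdot)$ on the perturbation by an expression built from these ingredients (possibly after an auxiliary tensor-power or classical-flag construction that exploits the additivity of $\max_x D(\hat{\m}_x\|\u)$), and let the perturbation tend to $\mR_{\u^{(n)}}$. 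The delicate point is to arrange that this bound converges to $\max_x D(\hat{\m}_x\|\u^{(n)})$ rather than merely to $\overline{D}(\hat{\mM}\|\mR_{\u^{(n)}})$.
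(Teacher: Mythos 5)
Your bracketing setup, the lower bound via the DPI applied to the fixed inputs $|x\lr x|$, the reduction (for rational $\mN$) of the pair $(\mM,\mN)$ to $(\hat{\mM},\mR_{\u^{(n)}})$ with $\hat{\mM}(|x\lr x|)=\r_x\eqdef\bigoplus_y M_{y|x}\u^{(n_{y|x})}$ via the relative-majorization equivalence $(\m_x,\n_x)\sim(\r_x,\u^{(n)})$, and the use of continuity only to pass from rational to arbitrary $\mN$ --- all of this coincides with the paper's proof. But the step you defer as ``the main obstacle'' is the entire content of the theorem, and the mechanism you sketch for it (perturb the flat reference, control $\sD(\hat{\mM}\|\cdot)$ by $D_{\max}$-type bounds, take a limit) is not the one that works: as you yourself observe, any such estimate naturally lands on $\overline{D}$ rather than on $\underline{D}$, and continuity of $\sD$ in its second argument is not what collapses that gap. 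So the proposal has a genuine hole exactly where the theorem lives.

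The paper closes the gap by regularization and typicality, not by perturbation. The key computation is $\lim_{k\to\infty}\frac1k\,\bD\big(\mM^{\otimes k}\big\|\mN^{\otimes k}\big)\leq\max_xD(\m_x\|\n_x)$. After the uniform-reference reduction, $\bD(\mM^{\otimes k}\|\mN^{\otimes k})$ is an infimum of $D(\p\|\q)$ over pairs relatively majorizing every $(\r_{x^k},\u^{(n^k)})$, and $\r_{x^k}=\r_1^{\otimes kt_1}\otimes\cdots\otimes\r_a^{\otimes kt_a}$ depends only on the type $\t$ of $x^k$; restricting to $\q=\u^{(n^k)}$ converts the infimum into maximizing $H(\p)$ over all $\p$ that simultaneously majorize every $\r_{x^k}$. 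By the asymptotic equipartition property each $\r_x^{\otimes k_x}$ is majorized by an almost flat vector on roughly $2^{k_xH(\r_x)}$ outcomes (Lemma~\ref{assis}), so a single almost flat $\p$ of entropy roughly $k\max_xH(\r_x)$ dominates all types at once (Lemma~\ref{usf} handles the padding needed to make the block lengths uniform), which yields $\bD^{\reg}(\mM\|\mN)\leq\log n-\max_xH(\r_x)=\max_xD(\m_x\|\n_x)=\uD(\mM\|\mN)$. The chain $\uD^{\reg}\leq\sD\leq\bD^{\reg}\leq\uD=\uD^{\reg}$ then forces equality everywhere. Note also that the link $\sD\leq\bD^{\reg}$ requires weak additivity of $\sD$ (the discussion preceding the theorem makes clear the intended hypothesis is that $\sD$ is a channel relative entropy), so your remark that additivity is unavailable misreads the hypotheses rather than identifying a constraint you must work around; the ingredient your outline is actually missing is the tensor-power flattening argument.
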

\end{myt}

The uniqueness theorem above holds only for the KL divergence and it is not clear to the author if this uniqueness still holds for the case that the KL divergence is replaced with the R\'enyi divergences. In~\cite{GT2020b} (cf.~\cite{Matsumoto2018b,WGE2017}) it was shown that the Umegaki relative entropy (and in particular the KL-divergence) is the only asymptotically continuous divergence. Therefore, the uniqueness theorem above implies that there is only one classical channel divergence that on classical states is asymptotically continuous.

\section{Optimal Extensions}

In this section we apply the extension techniques developed in~\cite{GT2020b} for general resource theories to study the optimal extensions of a classical divergence to a channel divergence. { We point out that the extensions we consider are optimal in the sense that they are either minimal or maximal among all possible extensions. For example, if $\sD(\mN\|\mM)$ is a channel divergence that reduces to a classical state divergence $\xD(\p\|\q)$ when $\mN=\p$ and $\mM=\q$ are classical states, then $\sD(\mN\|\mM)$ is bounded from below by the minimal extension of $\xD(\p\|\q)$ to the channel domain, and bounded from above by the maximal extension of $\xD(\p\|\q)$ to the channel domain.}
One can also consider extensions of quantum divergences to channel divergences. As we will see, such extensions give rise to additional types of channel divergences. We start with the general framework for channel extensions of divergences.

\subsection{General Framework for Extensions}

In the following theorem we apply the results that were given in~\cite{GT2020b} for a general resource theory, to channel-extensions of divergences. We start with the definition of an $\mr$-divergence.
\begin{definition}\label{rdiv}
Let $\mr(A\to B)\subset\cptp(A\to B)$ be a subset of quantum channels for any two physical systems $A$ and $B$. A function $$\C:\bigcup_{A,B}\mr(A\to B)\times\mr(A\to B)\to\mbb{R}_{+}$$ is called an $\mr$-divergence if for any $\mM,\mN\in\mr(A\to B)$ and any superchannel $\Theta\in\super(AB\to A'B')$ such that $\Theta[\mM]\in\mr(A'\to B')$ and $\Theta[\mN]\in\mr(A'\to B')$
\be
\C\big(\Theta[\mM]\big\|\Theta[\mN]\big)\leq \C(\mM\|\mN)\;.
\ee
\end{definition}

Any $\mr$-divergence has two optimal extensions to a quantum channel divergence: 
\begin{enumerate}
\item The minimal channel-extension, for any $\mM,\mN\in\cptp(A\to B)$
\be\label{min}
\underline{\C}(\mM\|\mN)\eqdef\sup\C(\Theta[\mM]\|\Theta[\mN])
\ee
where the supremum is over all systems $A',B'$ and all
$\Theta\in\super(AB\to A'B')$ such that $\Theta[\mM],\Theta[\mN]\in\mr(A'\to B')$.
\item The maximal channel-extension, for any $\mM,\mN\in\cptp(A\to B)$
\be\label{max}
\overline{\C}(\mM\|\mN)\eqdef\inf\C(\mE\|\mF)
\ee
where the infimum is also over all systems $A',B'$, and all $\mE,\mF\in\mr(A'\to B')$ such that there exists $\Theta\in\super(A'B'\to AB)$ that satisfies $\mM=\Theta[\mE]$ and $\mN=\Theta[\mF]$.
\end{enumerate}

\begin{remark}
In the definition above we assumed that $\mr(A\to B)$ is a subset of quantum channels. By taking $\mr(A\to B)$ to be a subset of replacement channels as in~\eqref{replace}, the extensions above can also be applied to state divergences.
\end{remark}

\begin{theorem}\label{outs}
 Let $\mr(A\to B)\subset\cptp(A\to B)$, and let $\C$ be an $\mr$-divergence. Then, its maximal and minimal channel-extensions $\overline{C}$ and $\underline{C}$ have the following  properties:
\begin{enumerate} 
\item \textbf{Reduction.} For any $\mM,\mN\in\mr(A\to B)$
\be
\underline{\C}(\mM\|\mN)=\overline{\C}(\mM\|\mN)={\C}(\mM\|\mN)\;.
\ee
\item \textbf{Data Processing Inequality.} For any $\mM,\mN\in\cptp(A\to B)$ and any $\Theta\in\super(AB\to A'B')$
\ba
&\underline{\C}\big(\Theta[\mM]\big\|\Theta[\mN]\big)\leq\underline{\C}(\mM\|\mN)\;\text{and},\\
&\overline{\C}\big(\Theta[\mM]\big\|\Theta[\mN]\big)\leq\overline{\C}(\mM\|\mN)\;.
\ea
\item \textbf{Optimality.} Any quantum channel divergence $\sD$ that reduces to $\C$ on pairs of channels in $\mr(A\to B)$, must satisfy for all $\mM,\mN\in\cptp(A\to B)$
\be\label{bounds9}
\underline{\C}(\mM\|\mN)\leq \sD(\mM\|\mN)\leq\overline{\C}(\mM\|\mN)\;.
\ee
\item \textbf{Sub/Super Additivity.} Suppose $\C$ is weakly additive;  that is,
for any $k\in\mbb{N}$
\be
\C(\mM^{\otimes k}\|\mN^{\otimes k})=k\C(\mM\|\mN)
\ee
Then, $\underline{\C}$ is super-additive and $\overline{\C}$ is sub-additve. Explicitly, for any $\mM_1,\mM_2\in\cptp(A\to B)$ and any $\mN_1, \mN_2\in\cptp(A'\to B')$
\ba
&\underline{\C}\left(\mM_1\otimes\mM_2\big\|\mN_1\otimes\mN_2\right)\geq \underline{\C}(\mM_1\|\mN_1)+\underline{\C}(\mM_2\|\mN_2)\\
&\overline{\C}\left(\mM_1\otimes\mM_2\big\|\mN_1\otimes\mN_2\right)\leq \overline{\C}(\mM_1\|\mN_1)+\overline{\C}(\mM_2\|\mN_2)\;.
\ea
\item \textbf{Regularization.} If $\C$ is weakly additive under tensor products then any weakly additive quantum channel divergence $\sD$ that reduces to $\C$ on pairs of channels in $\mr(A\to B)$, must satisfy for all $\mM,\mN\in\cptp(A\to B)$
\be\label{bounds10}
\underline{\C}^\reg(\mM\|\mN)\leq \sD(\mM\|\mN)\leq\overline{\C}^\reg(\mM\|\mN)\;,
\ee
where
\ba\label{xzx}
&\underline{\C}^\reg(\mM\|\mN)=\lim_{n\to\infty}\frac1n\underline{\C}\left(\mM^{\otimes n}\big\|\mN^{\otimes n}\right)\;\text{and},\\
&\overline{\C}^\reg(\mM\|\mN)=\lim_{n\to\infty}\frac1n\overline{\C}\left(\mM^{\otimes n}\big\|\mN^{\otimes n}\right)\;,
\ea
and $\underline{\C}^{\reg}$ and $\overline{\C}^{\reg}$ are themselves weakly additive channel divergences.
\end{enumerate}
\end{theorem}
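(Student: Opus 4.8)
The plan is to derive all five properties from three elementary closure facts about superchannels --- the identity supermap is a superchannel, a composition of superchannels is a superchannel, and a tensor product of superchannels is a superchannel --- together with the data processing inequality built into the definitions of $\C$ and of any channel divergence $\sD$. It is convenient to call a superchannel $\Theta$ \emph{admissible} for $\underline{\C}(\mM\|\mN)$ when $\Theta[\mM],\Theta[\mN]\in\mr$, and a triple $(\mE,\mF,\Theta)$ \emph{admissible} for $\overline{\C}(\mM\|\mN)$ when $\mE,\mF\in\mr$ and $\mM=\Theta[\mE]$, $\mN=\Theta[\mF]$; in the settings of interest these families are nonempty, which I will assume throughout.

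First I would establish Properties 1--3, which form the core. \textbf{Reduction}: for $\mM,\mN\in\mr$ the identity supermap is admissible, giving $\underline{\C}(\mM\|\mN)\ge\C(\mM\|\mN)$ and $\overline{\C}(\mM\|\mN)\le\C(\mM\|\mN)$; the reverse inequalities are precisely the DPI of the $\mr$-divergence $\C$ applied to an arbitrary admissible $\Theta$ (yielding $\C(\Theta[\mM]\|\Theta[\mN])\le\C(\mM\|\mN)$) and to an arbitrary admissible $(\mE,\mF,\Theta)$ (yielding $\C(\mM\|\mN)=\C(\Theta[\mE]\|\Theta[\mF])\le\C(\mE\|\mF)$). \textbf{DPI}: if $\Xi$ is admissible for $\underline{\C}(\Theta[\mM]\|\Theta[\mN])$ then $\Xi\circ\Theta$ is admissible for $\underline{\C}(\mM\|\mN)$, so the supremum in~\eqref{min} for $\Theta[\mM],\Theta[\mN]$ ranges over a subset of the one for $\mM,\mN$; dually, if $(\mE,\mF,\Xi)$ is admissible for $\overline{\C}(\mM\|\mN)$ then $(\mE,\mF,\Theta\circ\Xi)$ is admissible for $\overline{\C}(\Theta[\mM]\|\Theta[\mN])$, so the infimum in~\eqref{max} for $\Theta[\mM],\Theta[\mN]$ ranges over a superset of the one for $\mM,\mN$. \textbf{Optimality}: for any channel divergence $\sD$ reducing to $\C$ on $\mr$, the DPI of $\sD$ gives $\sD(\mM\|\mN)\ge\sD(\Theta[\mM]\|\Theta[\mN])=\C(\Theta[\mM]\|\Theta[\mN])$ for every admissible $\Theta$, and $\sD(\mM\|\mN)=\sD(\Xi[\mE]\|\Xi[\mF])\le\sD(\mE\|\mF)=\C(\mE\|\mF)$ for every admissible $(\mE,\mF,\Xi)$; taking the supremum and the infimum respectively gives~\eqref{bounds9}. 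Together, Property 2 shows $\underline{\C},\overline{\C}$ are channel divergences and Property 1 shows they reduce to $\C$, so Property 3 identifies them as the least and greatest such divergences.

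For Property 4, I would fix $\eps>0$, pick near-optimizers for each tensor factor, and combine them by tensoring the superchannels. For the minimal extension, if $\Theta_i$ is admissible and $\eps$-optimal for $\underline{\C}(\mM_i\|\mN_i)$, then $\Theta_1\otimes\Theta_2$ sends $\mM_1\otimes\mM_2$ to $\Theta_1[\mM_1]\otimes\Theta_2[\mM_2]$, which lies in $\mr$ because $\mr$ is closed under tensor products (true for all the families considered here --- e.g.\ replacement channels and classical channels --- and already implicit in the weak-additivity hypothesis, which presupposes $\mM^{\otimes k}\in\mr$); hence $\underline{\C}(\mM_1\otimes\mM_2\|\mN_1\otimes\mN_2)\ge\C\big(\Theta_1[\mM_1]\otimes\Theta_2[\mM_2]\,\big\|\,\Theta_1[\mN_1]\otimes\Theta_2[\mN_2]\big)$. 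Dually, tensoring $\eps$-optimal admissible triples for the maximal extension gives $\overline{\C}(\mM_1\otimes\mM_2\|\mN_1\otimes\mN_2)\le\C(\mE_1\otimes\mE_2\|\mF_1\otimes\mF_2)$. What then remains, in both cases, is to compare $\C$ on a tensor product of two distinct pairs of $\mr$-channels with the sum of the two individual values of $\C$ --- in the super-additive direction for $\underline{\C}$ and the sub-additive direction for $\overline{\C}$ --- and to let $\eps\to0$. This comparison is the one genuinely delicate step: the super-additive direction that comes for free from the DPI of $\C$ only yields the maximum of the two terms, so the full sum must be drawn from the additivity structure of $\C$ on $\mr$. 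I would isolate this as a lemma about $\mr$-divergences and appeal to the corresponding statement in~\cite{GT2020b}; with it in hand, Property 4 follows.

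Finally, Property 5 is a formal consequence of Property 4. Super-additivity of $\underline{\C}$ makes $n\mapsto\underline{\C}(\mM^{\otimes n}\|\mN^{\otimes n})$ a super-additive sequence and sub-additivity of $\overline{\C}$ makes $n\mapsto\overline{\C}(\mM^{\otimes n}\|\mN^{\otimes n})$ a sub-additive one, so Fekete's lemma yields the limits in~\eqref{xzx} as $\sup_n$ and $\inf_n$ of the normalized sequences (both nonnegative since $\C\ge0$). That $\underline{\C}^{\reg},\overline{\C}^{\reg}$ satisfy the channel DPI follows by applying Property 2 to $\Theta^{\otimes n}$, using $\Theta^{\otimes n}[\mM^{\otimes n}]=(\Theta[\mM])^{\otimes n}$, and dividing by $n$ before passing to the limit; their weak additivity follows because the limit along the subsequence indexed by $kn$ coincides with the full limit. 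For the displayed bounds, if $\sD$ is any weakly additive channel divergence reducing to $\C$ on $\mr$, then Property 3 applied to $(\mM^{\otimes n},\mN^{\otimes n})$ gives $\underline{\C}(\mM^{\otimes n}\|\mN^{\otimes n})\le\sD(\mM^{\otimes n}\|\mN^{\otimes n})=n\,\sD(\mM\|\mN)\le\overline{\C}(\mM^{\otimes n}\|\mN^{\otimes n})$, and dividing by $n$ and sending $n\to\infty$ yields~\eqref{bounds10}. The main obstacle is the delicate comparison step in Property 4; everything else is bookkeeping with the closure properties of superchannels.
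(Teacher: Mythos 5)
Your proposal is correct and follows essentially the same route as the paper's proof: Properties 1--3 from the identity supermap, closure of superchannels under composition, and the DPI of $\C$ and $\sD$; Property 4 by restricting to tensor-product optimizers; Property 5 by applying Property 3 to $\mM^{\otimes n},\mN^{\otimes n}$ and regularizing. The one point worth noting is that the ``genuinely delicate step'' you isolate in Property 4 --- passing from $\C\big(\mE_1\otimes\mE_2\big\|\mF_1\otimes\mF_2\big)$ to the sum $\C(\mE_1\|\mF_1)+\C(\mE_2\|\mF_2)$ for \emph{distinct} pairs in $\mr$ --- is real: weak additivity as stated only controls identical tensor factors (and the DPI alone yields only the maximum of the two terms, e.g.\ $\max\{f,g\}$ for two additive divergences $f,g$ is weakly additive and satisfies the DPI but is not super-additive), and the paper's own proof of Property 4 passes over this in a single sentence without supplying the needed cross-term inequality. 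So your deferral of that step to a lemma to be imported from~\cite{GT2020b} is an honest flag of a soft spot shared by the paper's argument rather than a defect of your proposal; everything else matches the paper's proof in substance, and your explicit Fekete/subsequence argument for the existence and weak additivity of the regularized quantities is more detailed than what the paper records.
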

\begin{remark}
In the case that $\C$ is additive (even weakly additive) the minimal and maximal extensions are super-additive and sub-additive, respectively. This, in turn, implies that the limits in~\eqref{xzx} exists so that $\overline{\C}^\reg$ and $\underline{\C}^\reg$ are well defined. Moreover, in general, the bounds on $\sD$ in~\eqref{bounds10} are tighter than the bounds in~\eqref{bounds9}. This assertion follows from the fact that that $\underline{\C}$ is super-additive and in particular satisfies $\underline{\C}^{\reg}(\mN\|\mM)\geq \underline{\C}(\mN\|\mM)$. Similarly, the sub-additivity of $\overline{\C}$ implies that $\overline{\C}^{\reg}(\mN\|\mM)\leq \overline{\C}(\mN\|\mM)$.
\end{remark}

In the following subsections we apply Theorem~\ref{outs} to the cases that $\mr$ is the subset of all quantum states (i.e. replacement channels) and the subset of all classical states. We will see that this give rise to several optimal channel-extensions of state divergences. We start, however, by using the theorem above to prove the uniqueness of the max channel relative entropy.

\subsection{Uniqueness of the max relative entropy}

The max relative entropy is defined for any $\rho,\sigma\in\md(A)$ as
\be
D_{\max}(\rho\|\sigma)\eqdef\log\min\Big\{t\;:\;t\sigma\geq\rho\quad t\in\mbb{R}\Big\}\;.
\ee
The max relative entropy is unique with respect to its monotonicity property. Unlike the relative entropy and all the other R\'enyi entropies, it behaves monotonically under any CP map (not necessarily trace preserving or trace non-increasing). More precisely, let $\rho,\sigma\in\md(A)$, and let $\mE\in\cp(A\to B)$ be such that
$\mE(\rho)$ and $\mE(\sigma)$ are normalized quantum states in $\md(B)$. Since we do not assume here that $\mE$ is trace non-increasing we cannot conclude that there exists a CPTP map that achieves the same task; i.e. taking the pair $(\rho,\sigma)$ to the pair $(\mE(\rho),\mE(\sigma))$. Yet, the max divergence behaves monotonically under such maps; explicitly, for a given $\rho,\sigma\in\md(A)$,
\be
 D_{\max}\big(\mE(\rho)\|\mE(\sigma)\big)\leq D_{\max}(\rho\|\sigma)\;,
\ee
for any $\mE\in\cp(A\to B)$ for which  $\mE(\rho),\mE(\sigma)\in\md(B)$ for the given $\rho$ and $\sigma$. Note that we assumed here that $D_{\max}$ is defined only on pairs of normalized states. Extensions to subnormalized states can be made using the techniques studied in~\cite{GT2020b}.

For quantum channels, $D_{\max}$ has been defined analogously to the states case as~\cite{LW2019}
\be
D_{\max}(\mN\|\mM)\eqdef\log\min\Big\{t\in\mbb{R}\;:\;t\mM\geq\mN\Big\}\;.
\ee
One can easily see that similar to the states case, $D_{\max}(\mN\|\mM)$ behaves monotonically under any CP preserving (CPP) supermap that takes the pair of channels $(\mM,\mN)$ to any other pair of channels $(\mM',\mN')$. Specifically, let $\Theta\in{\rm CPP}(AB\to A'B')$ be a CPP supermap that is not necessarily a superchannel, and suppose that $\mM'\eqdef\Theta[\mN]$ and $\mN'\eqdef\Theta[\mN]$ are quantum channels in $\cptp(A'\to B')$. Then,
\be
D_{\max}(\mM'\|\mN')\leq D_{\max}(\mM\|\mN)\;.
\ee

We show here that the extension of $D_{\max}$ from classical states to quantum channels is unique.
We already know from Property 6 of Theorem~\ref{properties} that any channel relative entropy cannot exceed $D_{\max}$.
In fact, in the proof of Property 6 of Theorem~\ref{properties} we only use the normalization property of relative entropy. Therefore, one can conclude something slightly stronger that it is even not possible to extend $D_{\max}$  to a non-additive channel divergence. 

From the optimality property of Theorem~\ref{outs} it follows that in order to prove uniqueness, it is sufficient to show that the maximal and minimal extensions of $D_{\max}$ are equal to each other. Applying the general framework for extensions developed in the previous subsection, the maximal and minimal extensions of $D_{\max}$ to quantum channels ,denoted by $\bD_{\max}$ and $\uD_{\max}$, respectively, are defined for any $\mN,\mM\in\cptp(A\to B)$ as
\ba\label{dmax1}
&\underbar{D}_{\max}(\mN\|\mM)\eqdef\sup_{\substack{|X|\in\mbb{N}\\ \Theta\in\super(AB\to X)}} D_{\max}\Big(\Theta[\mN]\big\|\Theta[\mM]\Big)\;;\\
&\bD_{\max}(\mN\|\mM)\eqdef\inf_{\substack{|X|\in\mbb{N}\;,\;\p,\q\in\md(X)\\ \mN=\Theta[\p]\;,\;\mM=\Theta[\q]\\\Theta\in\super(X\to AB)}}D_{\max}(\p\|\q)\;.
\ea

\begin{theorem}[Uniqueness of the dynamical max relative entropy]\label{maxu}
Let $\sD$ be a channel divergence that reduces to $D_{\max}$ on classical probability distributions; i.e. for any classical system $X$ and $\p,\q\in\md(X)$, $\sD(\p\|\q)=D_{\max}(\p\|\q)$. Then, for all $\mN,\mM\in\cptp(A\to B)$
\be
\sD(\mN\|\mM)=D_{\max}(\mN\|\mM)\;.
\ee
\end{theorem}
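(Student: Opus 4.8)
The plan is to read the result off the optimality property (item~3 of Theorem~\ref{outs}). Take $\mr$ to be the subset of (replacement channels associated with) classical states, and let $\C\eqdef D_{\max}$ restricted to these; this is an $\mr$-divergence because the classical $D_{\max}$ obeys the DPI. Then any channel divergence $\sD$ that reduces to $D_{\max}$ on classical distributions satisfies
\be
\uD_{\max}(\mN\|\mM)\le\sD(\mN\|\mM)\le\bD_{\max}(\mN\|\mM)\qquad\forall\,\mN,\mM\in\cptp(A\to B)\;,
\ee
with $\uD_{\max},\bD_{\max}$ the minimal and maximal classical-state extensions written out in~\eqref{dmax1}. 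Hence it is enough to establish the two inequalities
\be
\bD_{\max}(\mN\|\mM)\le D_{\max}(\mN\|\mM)\le\uD_{\max}(\mN\|\mM)\;,
\ee
where $D_{\max}(\mN\|\mM)=\log\min\{t:t\mM\ge\mN\}$ is the channel max relative entropy of~\eqref{max1}; the reverse inequalities then hold automatically by the DPI of $D_{\max}$ (which reduces to classical $D_{\max}$ on states), so all three coincide and $\sD(\mN\|\mM)=D_{\max}(\mN\|\mM)$ follows.

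For $\bD_{\max}(\mN\|\mM)\le D_{\max}(\mN\|\mM)$, I would exhibit one classical lift. Set $t\eqdef 2^{D_{\max}(\mN\|\mM)}$; the bound is vacuous if $t=\infty$, and if $t=1$ then $\mM-\mN$ is a CP map that annihilates the trace, forcing $\mN=\mM$ and both sides to be $0$. For $t>1$ the map $\mL\eqdef(t\mM-\mN)/(t-1)$ is CPTP. Let $|X|=2$ and let $\Theta\in\super(X\to AB)$ be the superchannel that dephases the classical bit and then applies $\mN$ or $\mL$ conditioned on its value, so that $\Theta[\operatorname{diag}(p_0,p_1)]=p_0\mN+p_1\mL$; taking $\p=(1,0)$ and $\q=(1/t,1-1/t)$ gives $\Theta[\p]=\mN$ and $\Theta[\q]=\tfrac1t\mN+\tfrac{t-1}{t}\mL=\mM$, whence $\bD_{\max}(\mN\|\mM)\le D_{\max}(\p\|\q)=\log t=D_{\max}(\mN\|\mM)$.

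For $D_{\max}(\mN\|\mM)\le\uD_{\max}(\mN\|\mM)$, I would exhibit one classical processing. Since $t\mM\ge\mN\iff tJ_\mM\ge J_\mN$ (the Choi criterion for complete positivity), one has $D_{\max}(\mN\|\mM)=D_{\max}(\widetilde J_\mN\|\widetilde J_\mM)$ for the normalized Choi states $\widetilde J_\mathcal{K}\eqdef J_\mathcal{K}/|A|$ (up to a unitary relabeling of the tensor factors, under which $D_{\max}$ is invariant, and using that $D_{\max}$ of states is invariant under a common positive rescaling). The superchannel $\Theta\in\super(AB\to X)$ that prepares a maximally entangled state $\Phi^{AR}$ with $R\cong A$, feeds $A$ into the input channel, and then performs a two-outcome measurement $\{P,I-P\}$ on $BR$, sends $\mathcal K\mapsto\big(\tr[\widetilde J_{\mathcal K}P],\,1-\tr[\widetilde J_{\mathcal K}P]\big)$. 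Choosing $P$ to be the rank-one projector onto $\widetilde J_\mM^{-1/2}\ket v$, where $\ket v$ is a top eigenvector of $\widetilde J_\mM^{-1/2}\widetilde J_\mN\widetilde J_\mM^{-1/2}$ restricted to $\supp\widetilde J_\mM$ (and, if $\widetilde J_\mN\not\ll\widetilde J_\mM$, instead a measurement separating the two supports, which already yields $\infty$), a short computation using $\widetilde J_\mN\ket x=t\widetilde J_\mM\ket x$ for $\ket x\eqdef\widetilde J_\mM^{-1/2}\ket v$ together with $t\ge 1$ (from $D_{\max}\ge 0$) shows $D_{\max}\big(\Theta[\mN]\|\Theta[\mM]\big)=\log t=D_{\max}(\mN\|\mM)$. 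Hence $\uD_{\max}(\mN\|\mM)\ge D_{\max}(\mN\|\mM)$, and the sandwich is complete.

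The main obstacle is precisely this last step: one must know that the max relative entropy of two quantum states is attained by a (two-outcome projective) measurement — in sharp contrast to divergences such as the Umegaki relative entropy, whose measured variant is strictly smaller — so that restricting the minimal extension to classical outputs costs nothing. The eigenvector choice of $P$ above is what realizes this equality; the remaining ingredients (checking that the two maps fit the pre-/post-processing template of a superchannel, that $\mL$ and the dephasing-and-apply channel are CPTP, and the support bookkeeping in the infinite/non-full-support cases) are routine.
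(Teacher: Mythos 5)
Your proof is correct and follows the same strategy as the paper's: sandwich $\sD$ between the minimal and maximal classical-state extensions of $D_{\max}$ via the optimality property of Theorem~\ref{outs}, establish $\bD_{\max}\le D_{\max}$ by the explicit two-point lift built from $\mL=(t\mM-\mN)/(t-1)$ (exactly the construction underlying Property~6 of Theorem~\ref{properties}, which is what the paper invokes for this direction), and establish $\uD_{\max}\ge D_{\max}$ by a classical post-measurement on the Choi state. The only substantive difference is in that last step: the paper dephases $\mN(\phi_+^{RA})$ and $\mM(\phi_+^{RA})$ in the eigenbasis of $sJ_\mM-J_\mN$ and argues via the vanishing eigenvalue, whereas you use a rank-one two-outcome measurement along the top eigenvector of $\widetilde J_\mM^{-1/2}\widetilde J_\mN\widetilde J_\mM^{-1/2}$ --- both are valid realizations of the fact that the measured max relative entropy of states equals $D_{\max}$, and your computation (including the $t=1$, $t=\infty$, and support edge cases) checks out.
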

\begin{remark}
Note that we do not assume that $\sD$ is a channel relative entropy (i.e. additive), only a divergence that reduces to the max relative entropy on classical states. The main idea of the proof is to show that the two expressions in~\eqref{dmax1} are both equal to $D_{\max}$.
\end{remark}

\subsection{Extension from a quantum (state) divergence to a channel divergence}

In this section we study the optimal extensions of quantum state divergences to quantum channel divergences.
For a given quantum state divergence $\D$ we denote by $\ubd$ its minimal channel-extension. According to~\eqref{min} the minimal channel-extension is given by
\begin{align}
&\ubd(\mN\|\mM)\eqdef\sup_{\Theta\in\super(AB\to R')}\D\big(\Theta\left[\mN\right]\big\|\Theta[\mM]\big)=\sup_{\substack{\mE\in\cptp\\\psi\in\md(RA)}}\nonumber\\
& \D\Big(\mE^{BR\to R'}\circ\mN^{A\to B}(\psi^{AR})\Big\|\mE^{BR\to R'}\circ\mM^{A\to B}(\psi^{AR})\Big)\nonumber\\
&=\max_{\psi\in\md(RA)}\D\Big(\mN^{A\to B}(\psi^{AR})\big\|\mM^{A\to B}(\psi^{AR})\Big)\label{qdsup}
\end{align}
where in the last equality the supremum has been replaced with a maximum since w.l.o.g. we can assume that $|R|=|A|$ and that $\psi^{RA}$ is a pure state~\cite{Cooney-2016a,Felix2018}. 

Similarly, by the definition in~\eqref{max}, the maximal channel-extension is given by
\begin{align}
&\bbd(\mN\|\mM)\\
&\eqdef\inf_{\substack{\Theta\in\super(R\to AB)\\\rho,\sigma\in\md(R)}} \Big\{\D(\rho\|\sigma)\;:\;\mN=\Theta[\rho],\;\mM=\Theta[\sigma]\Big\}\nonumber\\
&=\inf_{\substack{\mE\in\cptp(RA\to B)\\\rho,\sigma\in\md(R)}}\Big\{ \D(\rho\|\sigma)\;:\; \mN=\mE_\rho,\;\mM=\mE_\sigma\Big\}\label{117}
\end{align}
where for any density matrices $\rho,\sigma\in\md(R)$ and channel
$\mE\in\cptp(RA\to B)$ we denote
\ba\label{ep}
&\mE_\rho^{A\to B}(\omega^A)=\mE^{RA\to B}(\rho^{R}\otimes\omega^A)\quad\text{and},\\
&\mE^{A\to B}_\sigma(\omega^A)=\mE^{RA\to B}( \sigma^{R}\otimes\omega^A)\quad\forall\omega\in\ml(A)\;.
\ea
The channels above have been studied under the name \emph{environment-parametrized channels}~\cite{DW2019a,DW2019b,TW2016}, and the expression in~\eqref{qdsup} has been used in the literature for the cases that $\D$ is the trace norm (in which case $\ubd$ becomes the diamond norm~\cite{AKN1998}), Umegaki relative entropy, and quantum R\'enyi divergences~\cite{Cooney-2016a,Felix2018}. The following corollary is the restatement of Theorem~\ref{outs} for the optimal channel-extensions of quantum state divergences.

\begin{corollary}
Let $\D$ be a quantum (state) divergence, and let $\ubd$ and $\bbd$ be its minimal and maximal extensions to quantum channels. 
Then,
\begin{enumerate}
\item Both $\ubd$ and $\bbd$ are quantum-channel divergences.
\item Both $\ubd$ and $\bbd$ reduces to $\D$ on quantum states.
\item Any other channel divergences $\sD$ that reduces to $\D$ on quantum states must satisfy for all $\mN,\mM\in\cptp(A\to B)$
\be\label{ncad}
\ubd(\mN\|\mM)\leq \sD(\mN\|\mM)\leq\bbd(\mN\|\mM)\;.
\ee
\item If $\D$ is a weakly additive quantum state divergence then $\ubd$ is super-additive and $\bbd$ is sub-additive with respect to tensor products. Explicitly, for any $\mM_1,\mM_2\in\cptp(A\to B)$ and any $\mN_1, \mN_2\in\cptp(A'\to B')$
\ba
&\ubd\left(\mM_1\otimes\mM_2\big\|\mN_1\otimes\mN_2\right)\geq \ubd(\mM_1\|\mN_1)+\ubd(\mM_2\|\mN_2)\\
&\bbd\left(\mM_1\otimes\mM_2\big\|\mN_1\otimes\mN_2\right)\leq \bbd(\mM_1\|\mN_1)+\bbd(\mM_2\|\mN_2).
\ea
\item If $\D$ is a weakly additive quantum state divergence then any weakly additive quantum channel divergence $\sD$ that reduces to $\D$ on quantum states, must satisfy for all $\mM,\mN\in\cptp(A\to B)$
\be
\underline{\D}^\reg(\mM\|\mN)\leq \sD(\mM\|\mN)\leq\overline{\D}^\reg(\mM\|\mN)\;,
\ee
where
\ba
&\underline{\D}^\reg(\mM\|\mN)=\lim_{n\to\infty}\frac1n\underline{\D}\left(\mM^{\otimes n}\big\|\mN^{\otimes n}\right)\;\text{and},\\
&\overline{\D}^\reg(\mM\|\mN)=\lim_{n\to\infty}\frac1n\overline{\D}\left(\mM^{\otimes n}\big\|\mN^{\otimes n}\right)\;.
\ea
and $\ubd^{\reg}$ and $\bbd^{\reg}$ are themselves weakly additive normalized channel divergences.
\end{enumerate}
\end{corollary}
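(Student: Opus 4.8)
The plan is to obtain the corollary as the instance of Theorem~\ref{outs} in which, for all systems, $\mr(A\to B)$ is taken to be the set of replacement channels $\mR_\sigma^{A\to B}$ of Eq.~\eqref{replace} with $\sigma\in\md(B)$, and the given quantum state divergence $\D$ is promoted to the $\mr$-divergence $\C\big(\mR_\rho^{A\to B}\big\|\mR_\sigma^{A\to B}\big)\eqdef\D(\rho\|\sigma)$. First I would check that $\C$ is well posed (a replacement channel is determined by the pair $(|A|,\sigma)$) and satisfies Definition~\ref{rdiv}: if a superchannel $\Theta\in\super(AB\to A'B')$ maps $\mR_\rho,\mR_\sigma$ into replacement channels $\mR_{\rho'},\mR_{\sigma'}$, then writing $\Theta[\cdot]=\mF^{BR\to B'}\circ(\cdot)\circ\mE^{A'\to AR}$ and evaluating on a fixed input state shows $\rho'=\mathcal{G}(\rho)$ and $\sigma'=\mathcal{G}(\sigma)$ for one common channel $\mathcal{G}\in\cptp(B\to B')$, whence $\C(\mR_{\rho'}\|\mR_{\sigma'})=\D(\rho'\|\sigma')\le\D(\rho\|\sigma)$ by the DPI for $\D$. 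Weak additivity of $\C$ is inherited from that of $\D$ via $\mR_\rho^{\otimes k}=\mR_{\rho^{\otimes k}}$.

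The second step is to identify the abstract extensions~\eqref{min} and~\eqref{max} of this $\C$ with the closed expressions~\eqref{qdsup} and~\eqref{117} for $\ubd$ and $\bbd$. The crucial observation is that optimizing over superchannels whose relevant images are replacement channels is interchangeable with optimizing over superchannels whose relevant images are states (i.e.\ one-argument channels), because composing with the free superchannel ``feed a fixed input state'' turns a replacement channel into the state that defines it, while composing with the free superchannel ``output the replacement channel of the given state'' does the converse; both are readily checked to be superchannels. Once the images are states, the canonical pre-/post-processing form of a superchannel yields, for the minimal extension, $\D\big(\mF(\mN(\psi))\big\|\mF(\mM(\psi))\big)$ with $\psi$ a state on $RA$; this is maximized by taking $\mF$ to be the identity channel (by DPI of $\D$), and the standard purification together with the dimension reduction $|R|=|A|$ of~\cite{Cooney-2016a,Felix2018} collapses the optimization to pure $\psi\in\md(RA)$, recovering~\eqref{qdsup}. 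Dually, the structure of superchannels with one-dimensional input forces $\mN=\mE_\rho$ and $\mM=\mE_\sigma$ for a common $\mE\in\cptp(RA\to B)$ as in~\eqref{ep}, recovering~\eqref{117}.

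With these identifications the five parts of the corollary are read off termwise from Theorem~\ref{outs}: the Data Processing Inequality (its property~2) makes $\ubd,\bbd$ channel divergences, giving part~1; the Reduction property (property~1) gives part~2; the Optimality property (property~3) gives part~3; the Sub/Super-Additivity property (property~4), available because $\C$ is weakly additive, gives part~4; and the Regularization property (property~5) gives part~5, the super-/sub-additivity from part~4 guaranteeing via Fekete's subadditivity lemma that the regularized limits $\ubd^{\reg},\bbd^{\reg}$ exist. Normalization of $\ubd^{\reg}$ and $\bbd^{\reg}$ (assuming $\D$ is normalized) follows from part~2, since on states $\ubd$ and $\bbd$, and hence their regularizations, coincide with $\D$.

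I expect essentially all of the genuine content to sit in the second step: verifying that the replacement-channel choice of $\mr$ is interchangeable with the trivial-input choice, and that the superchannel optimizations in~\eqref{min} and~\eqref{max} then collapse to the concrete pure-state and environment-parametrized forms. In particular, the dimension- and purity-reduction for $\ubd$ and the passage from the superchannel form of $\bbd$ to the environment-parametrized form are the only non-routine ingredients; everything else is a direct transcription of Theorem~\ref{outs}.
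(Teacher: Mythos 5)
Your proposal is correct and follows essentially the same route as the paper, which presents this corollary as a direct instantiation of Theorem~\ref{outs} with $\mr(A\to B)$ taken to be the quantum states viewed as (replacement) channels, the concrete forms~\eqref{qdsup} and~\eqref{117} being obtained exactly as you describe. Your extra care in checking that the replacement-channel and trivial-input identifications of states are interchangeable, and that $\C$ is a well-posed $\mr$-divergence, fills in details the paper leaves implicit (cf.\ Property 4 of Theorem~\ref{properties}) but does not change the argument.
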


In addition to the corollary above, we have the following property for the maximal extension.
\begin{theorem}\label{lem1}
Let $\D$ be a jointly convex quantum divergence. Then, its maximal channel-extension $\bbd$ is also jointly convex.
\end{theorem}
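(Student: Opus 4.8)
The plan is to prove that the maximal channel-extension $\bbd$ inherits joint convexity directly from its definition~\eqref{117}. Recall that
\be\nonumber
\bbd(\mN\|\mM)=\inf\Big\{\D(\rho\|\sigma)\;:\;\mN=\mE_\rho,\;\mM=\mE_\sigma,\;\mE\in\cptp(RA\to B),\;\rho,\sigma\in\md(R)\Big\}\;,
\ee
where $\mE_\rho(\omega)=\mE(\rho\otimes\omega)$. First I would fix two pairs of channels $(\mN_1,\mM_1)$ and $(\mN_2,\mM_2)$ in $\cptp(A\to B)$ together with a mixing parameter $t\in[0,1]$, and set $\mN\eqdef t\mN_1+(1-t)\mN_2$, $\mM\eqdef t\mM_1+(1-t)\mM_2$. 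The goal is $\bbd(\mN\|\mM)\leq t\,\bbd(\mN_1\|\mM_1)+(1-t)\,\bbd(\mN_2\|\mM_2)$, which by the definition of infimum reduces to producing, for any near-optimal environment representations of the two constituent pairs, an environment representation of the mixed pair whose $\D$-value is at most the convex combination of the two values.

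The key construction is to combine the two environment representations using a classical flag. Suppose $\mN_i=\mathcal{E}^{(i)}_{\rho_i}$ and $\mM_i=\mathcal{E}^{(i)}_{\sigma_i}$ with $\mathcal{E}^{(i)}\in\cptp(R_iA\to B)$ for $i=1,2$. Introduce a qubit flag system $F$ and the reference system $R\eqdef F R_1 R_2$ (padding with fixed pure states on the unused $R_j$ so the dimensions match). Define the environment states
\ba\nonumber
&\rho\eqdef t\,\op{0}{0}^F\otimes\rho_1^{R_1}\otimes\tau_2^{R_2}+(1-t)\,\op{1}{1}^F\otimes\tau_1^{R_1}\otimes\rho_2^{R_2}\;,\\
&\sigma\eqdef t\,\op{0}{0}^F\otimes\sigma_1^{R_1}\otimes\tau_2^{R_2}+(1-t)\,\op{1}{1}^F\otimes\tau_1^{R_1}\otimes\sigma_2^{R_2}\;,
\ea
for arbitrary fixed states $\tau_j$, and define the channel $\mE\in\cptp(FR_1R_2A\to B)$ that reads the flag $F$ and applies $\mathcal{E}^{(1)}$ to $(R_1,A)$ if $F=0$ and $\mathcal{E}^{(2)}$ to $(R_2,A)$ if $F=1$. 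Then one checks $\mE_\rho=t\mN_1+(1-t)\mN_2=\mN$ and $\mE_\sigma=t\mM_1+(1-t)\mM_2=\mM$, so $(\mN,\mM)$ has this representation. Consequently $\bbd(\mN\|\mM)\leq\D(\rho\|\sigma)$. Since $\rho$ and $\sigma$ are block-diagonal in the flag basis with blocks proportional to $\rho_i$ and $\sigma_i$ (tensored with a common fixed state), joint convexity of $\D$ on states, together with additivity of $\D$ under tensoring with a fixed state on both arguments — which follows from DPI applied to the partial trace and to appending a state — gives $\D(\rho\|\sigma)\leq t\,\D(\rho_1\|\sigma_1)+(1-t)\,\D(\rho_2\|\sigma_2)$. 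Taking the infimum over the representations of $(\mN_1,\mM_1)$ and $(\mN_2,\mM_2)$ yields the claim.

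The main obstacle I anticipate is the bookkeeping needed to legitimately combine two representations with a priori different reference systems $R_1,R_2$ into a single one while staying inside the class of $\cptp(RA\to B)$ maps, and making sure the flag-controlled channel is genuinely CPTP (it is, as a direct sum of CPTP maps controlled by an orthogonal measurement of $F$). A subtler point is whether the value $\D(\rho\|\sigma)$ for the block state equals the convex combination exactly or only bounds it: for a general jointly convex $\D$ we only get the inequality $\D(\rho\|\sigma)\le t\D(\rho_1\|\sigma_1)+(1-t)\D(\rho_2\|\sigma_2)$, which is all that is needed (equality would require, e.g., the direct-sum property of $\D$, which holds for relative entropies but is not assumed here). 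One should also handle the degenerate cases $t=0,1$ and the case where $\bbd(\mN_i\|\mM_i)=\infty$ trivially, and note that if either $(\mN_i,\mM_i)$ has no environment representation at all then $\bbd(\mN_i\|\mM_i)=\infty$ and the inequality is vacuous.
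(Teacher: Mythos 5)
Your proof is correct and follows essentially the same route as the paper: both restrict the infimum defining $\bbd$ to environment representations built from a classical flag register that controls which of the two (near-optimal) representations of $(\mN_i,\mM_i)$ is applied, and then invoke joint convexity of $\D$ together with its invariance under tensoring both arguments with a common state. If anything, you are slightly more careful than the paper's write-up, since you note explicitly that only the inequality $\D(\rho\|\sigma)\le t\,\D(\rho_1\|\sigma_1)+(1-t)\,\D(\rho_2\|\sigma_2)$ is needed (rather than the direct-sum equality, which does not follow from joint convexity alone).
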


The channel divergence $\ubd$  has been shown in~\cite{Gour2018b} to satisfy the generalized DPI.
For the case that $\D=D$ is the Umegaki relative entropy, it was shown in~\cite{Fang2020} that it satisfies a chain rule. The latter property in particular implies that its regularization can be expressed as~\cite{Fang2020}
\ba\label{amortized}
&\uD^{\reg}(\mN\|\mM)=\sup_{\rho,\sigma\in\md(RA)}\\
&\Big\{D\left(\mM^{A\to B}\left(\rho^{RA}\right)\big\|\mN^{A\to B}\left(\sigma^{RA}\right)\right)-D\left(\rho^{RA}\big\|\sigma^{RA}\right)\Big\}
\ea
where the expression on the RHS is known as the amortized divergence~\cite{Berta2018}. We will see below that any channel relative entropy that reduces to the Kullback-Leibler divergence on classical states must be no smaller than the above expression.

\subsection{Extensions from classical state divergences to channel divergences}

In this subsection we study optimal channel-extensions of a classical state divergence $\xD$. We define the following four optimal extensions of $\xD$ to quantum channel divergences.
\begin{definition} \label{subnp}
Let $\xD:\md(X)\times\md(X)\to \mbb{R}_+$ be a classical (state) divergence, and let $\mN,\mM\in\cptp(A\to B)$ be two quantum channels. We define four extensions of $\xD$ to quantum channels:
\begin{enumerate}
\item The minimal extension of $\xD$,   
\be
\underline{\xD}(\mN\|\mM)\eqdef\sup_{\substack{|X|\in\mbb{N}\\ \Theta\in\super(AB\to X)}} \xD\Big(\Theta\left[\mN\right]\big\|\Theta[\mM]\Big)\;.
\ee
\item The maximal extension of $\xD$,
\be\label{123g}
\overline{\xD}(\mN\|\mM)\eqdef\inf_{\substack{|X|\in\mbb{N}\;,\;\p,\q\in\md(X)\\ \mN=\Theta[\p]\;,\;\mM=\Theta[\q]\\\Theta\in\super(X\to AB)}}\xD(\p\|\q)\;.
\ee
\item The geometric extension of $\xD$,
\be\label{asd}
\widehat{\xD}(\mN\|\mM)\eqdef\sup_{\substack{|R|\in\mbb{N}\\ \Theta\in\super(AB\to R)}} \overline{\xD}\Big(\Theta\left[\mN\right]\big\|\Theta[\mM]\Big)
\ee
where $\overline{\xD}$ is the maximal quantum \emph{state}-extension of $\xD$. 
\item The min-max extension of $\xD$,
\be\label{mmce}
\underline{\xD}^{\uparrow}(\mN\|\mM)\eqdef\inf_{\substack{|R|\in\mbb{N}\;,\;\rho,\sigma\in\md(R)\\ \mN=\Theta[\rho]\;,\;\mM=\Theta[\sigma]\\\Theta\in\super(R\to AB)}}\underline{\xD}(\rho\|\sigma)\;,
\ee
where $\underline{\xD}$ is the minimal quantum \emph{state}-extension of $\xD$. 
\end{enumerate}
\end{definition}

From Theorem~\ref{outs} it follows that all the four functions above satisfy the generalized data processing inequality (and therefore they are indeed divergences), and they all reduce to the classical divergence $\xD$ on classical states. Moreover, for a pair of quantum states $\rho,\sigma\in\md(A)$ we have $\widehat{\xD}(\rho\|\sigma)=\overline{\xD}(\rho\|\sigma)$ and
$\underline{\xD}^{\uparrow}(\rho\|\sigma)=\underline{\xD}(\rho\|\sigma)$. In addition, Theorem~\ref{outs} implies that any channel divergence $\sD$, that reduces on classical states to a classical divergence $\xD$,  must satisfy for all $\mN,\mM\in\cptp(A\to B)$
\be
\underline{\xD}(\mN\|\mM)\leq \sD(\mN\|\mM)\leq\overline{\xD}(\mN\|\mM)\;.
\ee
This in particular applies to the cases $\sD=\widehat{\xD}$ and $\sD=\underline{\xD}^\uparrow$, and also note that in general, the optimal channel-extensions in~\eqref{qdsup} and~\eqref{117} of a quantum state divergence $\D$, that reduces to a classical state divergence $\xD$ on classical states, satisfy for all $\mM,\mN\in\cptp(A\to B)$
\be
\underline{\xD}(\mM\|\mN)\leq\ubd(\mM\|\mN)\leq\bbd(\mM\|\mN)\leq\overline{\xD}(\mM\|\mN)\;. 
\ee

\subsubsection{The minimal channel extension and its regularization}

Note that the minimal extension $\underline{\xD}$ can be expressed as
\ba\label{minimalex}
&\underline{\xD}(\mN\|\mM)\eqdef\\
&\sup \xD\Big(\mE_{BR\to X}\circ\mN_{A\to B}(\psi_{AR})\big\|\mE_{BR\to X}\circ\mM_{A\to B}(\psi_{AR})\Big)
\ea
where the supremum is over all systems $X,R$, over all $\psi\in\md(AR)$, and over all $\mE\in\cptp(BR\to X)$.
Note that w.l.o.g. we can assume that $\psi_{RA}$ is a pure state.
There is at least one divergence for which the expression above { coincides} with the optimization given in~\eqref{qdsup}.
\begin{theorem}\label{lem17}
Let $\mM,\mN\in\cptp(A\to B)$ and for any $\epsilon\in[0,1)$, let $\xD=D_{\min}^{\epsilon}$ be the classical hypothesis testing divergence { (i.e. the classical version of the one defined in~\eqref{ht}).} 
Then,
\ba
&\underline{\xD}(\mM\|\mN)=D_{\min}^{\epsilon}(\mM\|\mN)\\
&\eqdef\sup_{\psi\in\md(RA)}D_{\min}^{\epsilon}(\mM^{A\to B}(\psi^{RA})\|\mN^{A\to B}(\psi^{RA}))\;.
\ea
\end{theorem}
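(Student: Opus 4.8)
The plan is to prove the two inequalities $\underline{\xD}(\mM\|\mN)\le D_{\min}^{\epsilon}(\mM\|\mN)$ and $D_{\min}^{\epsilon}(\mM\|\mN)\le\underline{\xD}(\mM\|\mN)$ separately, where the channel quantity $D_{\min}^{\epsilon}(\mM\|\mN)\eqdef\sup_{\psi\in\md(RA)}D_{\min}^{\epsilon}\big(\mM^{A\to B}(\psi^{RA})\big\|\mN^{A\to B}(\psi^{RA})\big)$ is built from the quantum state divergence~\eqref{ht}. I will use two elementary facts about the state $D_{\min}^{\epsilon}$: (i) it obeys the data-processing inequality under CPTP maps; and (ii) on a pair of diagonal (classical) states it agrees with the classical hypothesis-testing divergence $\xD$, since in~\eqref{ht} one may pinch the optimal test $E$ into the common eigenbasis without altering $\tr[\rho E]$ or $\tr[\sigma E]$, and a diagonal $E$ is exactly a classical test.

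For the first inequality I would invoke the formula~\eqref{minimalex}: a term in the supremum defining $\underline{\xD}(\mM\|\mN)$ is $\xD\big(\mF\circ\mM^{A\to B}(\psi^{AR})\big\|\mF\circ\mN^{A\to B}(\psi^{AR})\big)$ for some pure $\psi\in\md(AR)$ and $\mF\in\cptp(BR\to X)$ with $X$ classical. Since the two arguments are then diagonal, fact (ii) turns this into $D_{\min}^{\epsilon}\big(\mF(\mM(\psi))\big\|\mF(\mN(\psi))\big)$, and fact (i) applied to $\mF$ bounds it by $D_{\min}^{\epsilon}\big(\mM(\psi)\big\|\mN(\psi)\big)$, which is at most $D_{\min}^{\epsilon}(\mM\|\mN)$. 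Taking the supremum over all admissible $X,R,\psi,\mF$ yields $\underline{\xD}(\mM\|\mN)\le D_{\min}^{\epsilon}(\mM\|\mN)$.

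For the reverse inequality the crucial ingredient is that a two-outcome measurement suffices to attain the quantum hypothesis-testing divergence of any pair of states: if $E^{*}$ is optimal in~\eqref{ht} for states $\rho,\sigma$, then the binary POVM channel $\mF=\{E^{*},I-E^{*}\}$ sends $(\rho,\sigma)$ to the classical pair $\p=(\tr[\rho E^{*}],\,1-\tr[\rho E^{*}])$, $\q=(\tr[\sigma E^{*}],\,1-\tr[\sigma E^{*}])$, and the classical test $\t=(1,0)$ is feasible because $\tr[\rho E^{*}]\ge1-\epsilon$, giving $\xD(\p\|\q)\ge-\log\tr[\sigma E^{*}]=D_{\min}^{\epsilon}(\rho\|\sigma)$; the opposite inequality is data processing, so $\xD(\mF(\rho)\|\mF(\sigma))=D_{\min}^{\epsilon}(\rho\|\sigma)$. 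I would then fix an arbitrary pure $\psi\in\md(RA)$, apply this with $\rho=\mM^{A\to B}(\psi^{RA})$ and $\sigma=\mN^{A\to B}(\psi^{RA})$ to obtain $\mF\in\cptp(BR\to X)$, and note that $(\psi,\mF)$ is an admissible choice in~\eqref{minimalex} realizing the value $D_{\min}^{\epsilon}(\mM(\psi)\|\mN(\psi))$; hence $\underline{\xD}(\mM\|\mN)\ge D_{\min}^{\epsilon}(\mM(\psi)\|\mN(\psi))$, and a supremum over $\psi$ finishes the argument.

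I expect the only genuine obstacle to be the two-outcome-measurement statement in the third paragraph, and even this reduces to checking that the test $\t=(1,0)$ on the induced binary distribution is feasible --- which is precisely where the hypothesis $\epsilon\in[0,1)$ is used (for $\epsilon=1$ the constraint $\tr[\rho E]\ge 1-\epsilon$ is vacuous and the statement degenerates). Everything else is routine bookkeeping with the definition~\eqref{minimalex} of the minimal channel extension and the data-processing inequality.
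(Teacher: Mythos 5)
Your proof is correct and is in substance the same as the paper's: both arguments rest on the adjoint identity $\tr\big[T\,\mE(\omega)\big]=\tr\big[\mE^{\dagger}(T)\,\omega\big]$, which the paper uses to identify the feasible set of (measurement channel, classical test) pairs with the feasible set of quantum test operators $0\leq\Gamma^{RB}\leq I^{RB}$ in a single stroke, while you split the same content into two inequalities --- data processing of the state $D_{\min}^{\epsilon}$ for the direction $\underline{\xD}\leq D_{\min}^{\epsilon}$, and achievability via the binary POVM $\{E^{*},I-E^{*}\}$ with classical test $\t=(1,0)$ for the converse. Your binary measurement is precisely the witness the paper needs for its claim that every admissible $\Gamma^{RB}$ arises as $\mE^{\dagger}_{X\to RB}(T)$, so the two write-ups differ only in packaging, and your observation that the classical reduction of the state $D_{\min}^{\epsilon}$ is obtained by pinching the test is the one small step the paper leaves implicit.
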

\begin{remark}
The above theorem implies that any channel divergence $\sD$ that reduces to $D_{\min}^{\epsilon}$ on \emph{classical} states satisfies for all $\mM,\mN\in\cptp(A\to B)$
\be
\sD(\mM\|\mN)\geq D_{\min}^{\epsilon}(\mM\|\mN)\;.
\ee 
\end{remark}

The minimal channel extension is typically not additive even if the classical divergence $\xD$ is additive (i.e. $\xD$ is a relative entropy).
However, from Theorem~\ref{outs} we know that if $\xD$ is a classical relative entropy then its channel extension $\underline{\xD}$ is super-additive. This means that the limit in its regularization exists;
\be
\underline{\xD}^{\reg}(\mN\|\mM)\eqdef\lim_{n\to \infty}\frac{1}{n}\underline{\xD}(\mN^{\otimes n}\|\mM^{\otimes n})\;.
\ee
From Theorem~\ref{outs} it follows that $\underline{\xD}^{\reg}$ is a weakly additive divergence, and all channel relative entropies
that reduces to $\xD$ on classical states must be no smaller than it.

Suppose now that $\xD=D$ is the KL divergence. In this case, we denote by $\uD^{\reg}$ its minimal regularized channel extension. Another closely related quantity that plays important role in applications is the regularized version of the minimal channel extension of the (quantum) Umegaki relative entropy, denoted as $D^\reg$. That is,
\be
D^{\reg}(\mM\|\mN)\eqdef\lim_{n\to \infty}\frac{1}{n}D(\mM^{\otimes n}\|\mN^{\otimes n})
\ee 
where 
\be
D(\mM\|\mN)\eqdef\sup_{\psi\in\md(RA)}D\Big(\mM^{A\to B}(\psi^{AR})\big\|\mN^{A\to B}(\psi^{AR})\Big)
\ee
with $D$ being the Umegaki relative  entropy. Since $\uD^\reg$ is the minimal channel extension we must have
\be\label{12345}
\uD^{\reg}(\mM\|\mN)\leq D^{\reg}(\mM\|\mN)\;.
\ee
Note that if $\mM$ and $\mN$ in the equation above are quantum states (i.e. the input dimension $|A|=1$) then the equality holds. This is due to the fact that that on quantum states, the Umegaki relative entropy $D^{\reg}(\rho\|\sigma)=D(\rho\|\sigma)$ equals to the minimal additive state extension of the KL relative entropy (see~\eqref{wer}). We now show that the equality also holds for any two channels.

\begin{theorem}\label{eqreg}
Let $\mM,\mN\in\cptp(A\to B)$ be two quantum channels. Then,
\be
\uD^{\reg}(\mM\|\mN)= D^{\reg}(\mM\|\mN)\;.
\ee
That is, $D^{\reg}(\mM\|\mN)$ is the smallest channel relative entropy that reduces to the KL relative entropy on classical states.
\end{theorem}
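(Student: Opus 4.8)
The plan is to prove the nontrivial inequality $\uD^{\reg}(\mM\|\mN) \geq D^{\reg}(\mM\|\mN)$, since the reverse inequality is already recorded in~\eqref{12345}. The key observation is that the regularized measured relative entropy recovers the Umegaki relative entropy on states, as in~\eqref{wer}: $D(\rho\|\sigma) = \uD^{\reg}(\rho\|\sigma) = \lim_n \tfrac1n \sup_{\mathcal E\in\cptp(A^n\to X)} D_{KL}(\mathcal E(\rho^{\otimes n})\|\mathcal E(\sigma^{\otimes n}))$. I would like to lift this to the channel level by inserting it inside the optimization over input states that defines $D(\mM\|\mN)$.

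First I would fix a pair of channels $\mM,\mN\in\cptp(A\to B)$ and, for each $n$, expand
\be\nonumber
\uD(\mM^{\otimes n}\|\mN^{\otimes n}) = \sup_{\psi,\mathcal E} D_{KL}\Big(\mathcal E\circ\mM^{\otimes n}(\psi)\big\|\mathcal E\circ\mN^{\otimes n}(\psi)\Big),
\ee
where $\psi\in\md(RA^n)$ and $\mathcal E\in\cptp(BR\otimes\cdots\to X)$ ranges over measurements on the output. The crucial point is that for a \emph{fixed} input state $\psi$, the pair of output states $\big(\mM^{\otimes n}(\psi),\,\mN^{\otimes n}(\psi)\big)$ is itself an $n$-fold tensor power only when $\psi$ is a tensor power; in general it is not. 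To get around this, I would choose the input state cleverly: take $\psi = \phi^{\otimes n}$ where $\phi\in\md(RA)$ is the optimizer (or a near-optimizer) for $D(\mM\|\mN)$, so that $\mM^{\otimes n}(\phi^{\otimes n}) = \big(\mM(\phi)\big)^{\otimes n}$ and likewise for $\mN$. Then
\be\nonumber
\uD(\mM^{\otimes n}\|\mN^{\otimes n}) \geq \uD\Big(\big(\mM(\phi)\big)^{\otimes n}\,\big\|\,\big(\mN(\phi)\big)^{\otimes n}\Big),
\ee
and dividing by $n$ and taking $n\to\infty$, the right-hand side converges to $\uD^{\reg}(\mM(\phi)\|\mN(\phi)) = D(\mM(\phi)\|\mN(\phi))$ by~\eqref{wer}. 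Taking the supremum over $\phi$ gives $\uD^{\reg}(\mM\|\mN) \geq D(\mM\|\mN)$. This, however, only bounds $\uD^{\reg}$ below by the \emph{single-letter} $D(\mM\|\mN)$, not by $D^{\reg}(\mM\|\mN)$.

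To upgrade to the regularized quantity, I would run the same argument at the level of the channel $\mM^{\otimes m}$ for arbitrary $m$: applying the inequality just derived to the channels $\mM^{\otimes m},\mN^{\otimes m}\in\cptp(A^m\to B^m)$ yields $\underline{(D)}^{\reg}_{\text{(as extension of the KL divergence in dim }|A|^m)}(\mM^{\otimes m}\|\mN^{\otimes m}) \geq D(\mM^{\otimes m}\|\mN^{\otimes m})$. Here I must be careful that the minimal-extension operation commutes appropriately with taking tensor powers of the channel — precisely, that $\uD\big((\mM^{\otimes m})^{\otimes k}\|(\mN^{\otimes m})^{\otimes k}\big) = \uD(\mM^{\otimes mk}\|\mN^{\otimes mk})$, which is immediate since both sides optimize over all input states and output measurements on $A^{mk}\to B^{mk}$, the grouping into $m$-blocks being irrelevant. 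Consequently
\be\nonumber
\uD^{\reg}(\mM\|\mN) = \lim_{k\to\infty}\tfrac1{mk}\,\uD\big(\mM^{\otimes mk}\|\mN^{\otimes mk}\big) = \tfrac1m\,\underline{(D^{\otimes m})}^{\reg}(\mM^{\otimes m}\|\mN^{\otimes m}) \geq \tfrac1m D(\mM^{\otimes m}\|\mN^{\otimes m}).
\ee
Letting $m\to\infty$ gives $\uD^{\reg}(\mM\|\mN) \geq D^{\reg}(\mM\|\mN)$, which combined with~\eqref{12345} yields equality. Finally, the closing sentence — that $D^{\reg}$ is the \emph{smallest} channel relative entropy reducing to KL on classical states — follows because $\uD^{\reg}$ is, by the regularization/optimality part of Theorem~\ref{outs} (the classical-state analogue of the reasoning around~\eqref{bounds10}), a lower bound on every weakly additive channel divergence that reduces to $D$ on classical states, and in particular on every channel relative entropy with that reduction property.

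\textbf{Main obstacle.} The delicate step is the interchange of limits and optimizations: one must confirm that the single-letter lower bound $\uD^{\reg}(\mM\|\mN)\geq D(\mM\|\mN)$ is legitimately boosted to $D^{\reg}(\mM\|\mN)$ via the blocking argument, i.e. that the "$\sup_\psi$" inside $\uD$ genuinely sees the tensor-power inputs $\phi^{\otimes n}$ with $\phi$ optimal for $D(\mM^{\otimes m}\|\mN^{\otimes m})$, and that the limit defining $\uD^{\reg}$ exists and equals $\tfrac1m$ times the analogous limit for $\mM^{\otimes m}$ — this last point rests on super-additivity of $\uD$ (Theorem~\ref{outs}, part 4) together with Fekete's lemma. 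The rest is an assembly of~\eqref{wer}, Theorem~\ref{outs}, and~\eqref{12345}.
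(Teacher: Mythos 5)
Your proposal is correct, and it takes a genuinely different route from the paper. The paper proves $\uD^{\reg}(\mM\|\mN)\geq D^{\reg}(\mM\|\mN)$ by going through hypothesis testing: it combines the bound $D_{\min}^{\epsilon}(\rho\|\sigma)\leq\frac{1}{1-\epsilon}\left(D(\rho\|\sigma)+h_2(\epsilon)\right)$ with Theorem~\ref{lem17} (the minimal channel extension of the classical hypothesis-testing divergence equals the channel hypothesis-testing divergence) to obtain $D_{\min}^{\epsilon}(\mM^{\otimes n}\|\mN^{\otimes n})\leq\frac{1}{1-\epsilon}\left(\uD(\mM^{\otimes n}\|\mN^{\otimes n})+h_2(\epsilon)\right)$, and then invokes the external channel Stein's lemma of~\cite{XinWilde2020}, which identifies $\lim_{\epsilon\to 0^+}\liminf_n\frac1n D_{\min}^{\epsilon}(\mM^{\otimes n}\|\mN^{\otimes n})$ with $D^{\reg}(\mM\|\mN)$. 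You instead restrict the supremum defining $\uD(\mM^{\otimes n}\|\mN^{\otimes n})$ to i.i.d.\ inputs $\phi^{\otimes n}$, apply the state-level Hiai--Petz identity~\eqref{wer} to the resulting tensor-power output states to get $\uD^{\reg}(\mM\|\mN)\geq D(\mM\|\mN)$, and then boost to $D^{\reg}$ by blocking, using super-additivity of $\uD$ (Theorem~\ref{outs}, part 4) and Fekete's lemma to justify that the regularized quantity is insensitive to grouping into $m$-blocks. All the steps you flag as delicate do go through: the tensor-power input $\phi^{\otimes k}$ with $\phi\in\md(RA^m)$ optimal for $D(\mM^{\otimes m}\|\mN^{\otimes m})$ is an admissible input for $\mM^{\otimes mk}$, and $(\mM^{\otimes m})^{\otimes k}$ is literally the same channel as $\mM^{\otimes mk}$, so the minimal extensions coincide. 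What each approach buys: yours is more self-contained, needing only the state-level asymptotic result~\eqref{wer} rather than the considerably deeper channel-level Stein's lemma; the paper's version reuses Theorem~\ref{lem17} and makes the operational hypothesis-testing content of $D^{\reg}$ explicit. Your closing argument for the ``smallest channel relative entropy'' claim matches the paper's (both rest on the regularization property of Theorem~\ref{outs}).
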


Starting with a classical relative entropy $\xD$, the process at which we arrived the weakly additive channel divergence $\underline{\xD}^{\reg}$ had 2 steps: (1) Extend $\xD$ to the minimal channel divergence $\underline{\xD}$, and (2) regularize $\underline{\xD}$ to obtain a weakly additive divergence. One can also introduce regularization in the state level and only then apply the channel extension. Specifically, starting with a classical relative entropy $\xD$, we apply the following four steps:
\begin{enumerate}
\item Extend $\xD$ to the minimal quantum state divergence $\underline{\xD}$.
\item Regularize $\underline{\xD}$ to get a weakly additive quantum state divergence $\underline{\xD}^{\reg}$.
\item Using the minimal extension, extend $\underline{\xD}^{\reg}$ to a channel divergence $\underline{\xD}_{ch}$.
\item Regularize $\underline{\xD}_{ch}$ to get a weakly additive channel divergence $\underline{\xD}_{ch}^{\reg}$.
\end{enumerate}
Fig.~\ref{orderex} illustrate these four steps. 

\begin{figure}[h]\centering
    \includegraphics[width=0.5\textwidth]{order}
  \caption{\linespread{1}\selectfont{\small The order between extensions and regularizations matters.
   }}
  \label{orderex}
\end{figure}

In~\cite{Tomamichel2015} it was shown that if $\xD$ is the classical R\'enyi entropy with $\alpha\in[1/2,\infty]$ then $\underline{\xD}^{\reg}$ is the sandwiched quantum relative entropy of order $\alpha$~\cite{WWY2014,MDS+2013}. Therefore, in this case, $\underline{\xD}_{ch}(\mM\|\mN)=D_\alpha(\mM\|\mN)$ is simply the channel extension of the sandwich relative entropy so that $\underline{\xD}_{ch}^{\reg}(\mM\|\mN)=D_\alpha^{\reg}(\mM\|\mN)$ is just the regularization of $D_\alpha(\mM\|\mN)$. In Theorem~\ref{eqreg} above we showed that for $\alpha=1$, 
$
D^{\reg}_\alpha(\mM\|\mN)=\uD^{\reg}_{\alpha}(\mM\|\mN)
$
which means that for $\alpha=1$, $D_\alpha^{\reg}$ is the smallest weakly additive divergence that reduces to the KL relative  entropy on classical states. The question remains open if this equality holds for all $\alpha\in[1/2,\infty]$.

\subsubsection{The maximal channel extension}

To the author's knowledge, the maximal channel extension in~\eqref{123g} is new, and was not studied before.
Note that the infimum in~\eqref{123g} can be expressed as 
\be
\overline{\xD}(\mM\|\mN)\eqdef\inf_{\substack{|X|\in\mbb{N}\;,\;\p,\q\in\md(X)\\ \mM=\sum_xp_x\mE_x\;,\;\mN=\sum_xq_x\mE_x\\\{\mE_x\}\subset\cptp(A\to B)}}\xD(\p\|\q)
\ee
For the case that $\mM$ is an isometry we get the following result.
\begin{theorem}\label{lem.3}
Let $\mV\in\cptp(A\to B)$ be an isometry channel defined via $\mV(\rho)=V\rho V^*$, for all $\rho\in\md(A)$, and with isometry matrix $V$ (i.e. $V^*V=I^A$). Then, for any $\mN\in\cptp(A\to B)$
\be
\overline{\xD}(\mV\|\mN)=D_{\max}(\mV\|\mN)=\log\tr\left[ J_\mN^{-1}J_\mV\right]\;.
\ee
\end{theorem}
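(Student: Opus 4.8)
The plan is to evaluate $\overline{\xD}(\mV\|\mN)$ directly from its definition — the infimum of $\xD(\p\|\q)$ over decompositions $\mV=\sum_x p_x\mE_x$, $\mN=\sum_x q_x\mE_x$ with $\{\mE_x\}\subset\cptp(A\to B)$ — exploiting the rigidity that comes from $\mV$ being an extreme point of $\cptp(A\to B)$. The Choi matrix $J_\mV=\op{\Omega_V}{\Omega_V}$ with $\ket{\Omega_V}=\sum_i\ket{i}\otimes V\ket{i}$ has rank one, and every rank-one element of $\{J\geq 0:\tr_B J=I^A\}$ is extreme in that set (if $J=\tfrac12(J_1+J_2)$ with $J_i\geq 0$ then $J_1,J_2$ are supported on $\mathrm{range}(J)$, hence proportional to $J$, and $\tr_B J_i=I^A$ pins down the constant). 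Hence in any admissible decomposition every $\mE_x$ with $p_x>0$ equals $\mV$; merging those indices and relabelling, we may take $\p=(1,0,\dots,0)$ a point mass at $x=1$, $\mE_1=\mV$, and $\sum_{x\neq1}q_x\mE_x=\mN-q_1\mV$. Complete positivity of the right-hand side is equivalent to $J_\mN\geq q_1 J_\mV$, so the only remaining freedom is $0\leq q_1\leq q_1^{\max}\eqdef\max\{t\geq 0: J_\mN\geq tJ_\mV\}$.

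With this reduction the lower bound is immediate: coarse-graining the outcomes $\{2,\dots,|X|\}$ into one is a stochastic map sending $\p\mapsto(1,0)$, $\q\mapsto(q_1,1-q_1)$, so DPI for $\xD$ gives $\xD(\p\|\q)\geq\xD\big((1,0)\|(q_1,1-q_1)\big)$. The point-mass pair is elementary: $D_{\min}\big((1,0)\|(q,1-q)\big)=-\log\tr[(q,1-q)\Pi_{(1,0)}]=-\log q$ and $D_{\max}\big((1,0)\|(q,1-q)\big)=\log\min\{t:t(q,1-q)\geq(1,0)\}=-\log q$, so for any classical relative entropy (indeed any normalized divergence) the sandwiching $D_{\min}\leq\xD\leq D_{\max}$ forces $\xD\big((1,0)\|(q,1-q)\big)=-\log q$; hence $\xD(\p\|\q)\geq-\log q_1\geq-\log q_1^{\max}$ and $\overline{\xD}(\mV\|\mN)\geq-\log q_1^{\max}$. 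For the matching upper bound, assume $q_1^{\max}<1$ (if $q_1^{\max}=1$ then $J_\mN=J_\mV$ since both have partial trace $I^A$, so $\mN=\mV$ and the identity is trivial) and take the explicit decomposition $\mE_1=\mV$, $\mE_2\eqdef(1-q_1^{\max})^{-1}(\mN-q_1^{\max}\mV)$: it is CP by the definition of $q_1^{\max}$ and TP since $\tr_B J_{\mE_2}=(1-q_1^{\max})^{-1}(I^A-q_1^{\max}I^A)=I^A$, and it realizes $\mV=1\cdot\mE_1$, $\mN=q_1^{\max}\mE_1+(1-q_1^{\max})\mE_2$; therefore $\overline{\xD}(\mV\|\mN)\leq\xD\big((1,0)\|(q_1^{\max},1-q_1^{\max})\big)=-\log q_1^{\max}$. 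Thus $\overline{\xD}(\mV\|\mN)=-\log q_1^{\max}$.

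It remains to identify this number. With $s_{\min}\eqdef\min\{s\in\mbb{R}:sJ_\mN\geq J_\mV\}$, the equivalences $J_\mN\geq tJ_\mV\iff\tfrac1t J_\mN\geq J_\mV\iff\tfrac1t\geq s_{\min}$ give $q_1^{\max}=1/s_{\min}$, so $-\log q_1^{\max}=\log s_{\min}=D_{\max}(\mV\|\mN)$ by definition of the channel max relative entropy. Finally, using that $J_\mV=\op{\Omega_V}{\Omega_V}$ is rank one, for $J_\mN$ invertible we have $sJ_\mN\geq\op{\Omega_V}{\Omega_V}\iff sI\geq J_\mN^{-1/2}\op{\Omega_V}{\Omega_V}J_\mN^{-1/2}\iff s\geq\bra{\Omega_V}J_\mN^{-1}\ket{\Omega_V}=\tr[J_\mN^{-1}J_\mV]$, whence $s_{\min}=\tr[J_\mN^{-1}J_\mV]$ and $\overline{\xD}(\mV\|\mN)=\log\tr[J_\mN^{-1}J_\mV]$ (if $J_\mN$ is singular, $J_\mN^{-1}$ is the pseudo-inverse and all three quantities equal $+\infty$ exactly when $\supp J_\mV\not\subseteq\supp J_\mN$). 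The crux of the proof is the extremality step: it collapses the a priori enormous family of decompositions to a single scalar parameter $q_1$, after which DPI plus the point-mass evaluation of $\xD$ finish everything. The only place the answer's independence of $\xD$ is used is that last evaluation, which relies on normalization of $\xD$; for a general divergence one obtains only $\overline{\xD}(\mV\|\mN)=\xD\big((1,0)\|(2^{-D_{\max}(\mV\|\mN)},1-2^{-D_{\max}(\mV\|\mN)})\big)$.
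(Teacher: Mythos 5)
Your proof is correct and follows essentially the same route as the paper's: extremality of the isometry channel forces every $\mE_x$ with $p_x>0$ to equal $\mV$, a classical coarse-graining plus the DPI reduces the optimization to a binary pair $\big((1,0),(s,1-s)\big)$ whose relative entropy is $-\log s$, and the CP constraint pins the optimal $s$ at $2^{-D_{\max}(\mV\|\mN)}$. You additionally supply details the paper leaves implicit --- a justification of the extremality step via the rank-one Choi matrix, the explicit achieving decomposition, and the derivation of the closed form $\log\tr\left[J_\mN^{-1}J_\mV\right]$ --- which is welcome, though note that the evaluation $\xD\big((1,0)\big\|(s,1-s)\big)=-\log s$ really does require $\xD$ to be a relative entropy (normalization alone does not suffice without additivity), exactly as you flag in your closing remark.
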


In section~\ref{classical} we  provided a closed formula of this divergence for the classical case. We saw that the formula reveals that this divergence is not additive (even for classical channels), and therefore is not a relative entropy. Recall, however, that from Theorem~\ref{outs} we know that if $\xD$ is a classical relative entropy then its channel extension $\overline{\xD}$ is sub-additive. This means that the limit in its regularization exists, and $\overline{\xD}^{\reg}\leq\overline{\xD}$. The divergence $\overline{\xD}^{\reg}$ is weakly additive, and it remains open to determine if it is a relative entropy (i.e. fully additive).

\subsubsection{The geometric channel relative entropy}

Given a classical divergence $\xD$, its maximal extension to quantum states is given for all $\rho,\sigma\in\md(A)$ by 
\be
\overline{\xD}(\rho\|\sigma)\eqdef\inf_{\substack{|X|\in\mbb{N},\;\p,\q\in\md(X) \\
\mE(\p)=\rho,\;\mE(\q)=\sigma\\ \mE\in\cptp(X\to A)}}\xD(\p\|\q)\;.
\ee
The geometric divergence is defined as the minimal channel-extension of this maximal state-extension of $\xD$.
It can be expressed as (cf.~\eqref{asd})
\ba
\widehat{\xD}(\mM\|\mN)\eqdef\sup_{\psi\in\md(R A)} \overline{\xD}\Big(\mM^{A\to B}\left(\psi^{R A}\right)\Big\|\mN^{A\to B}\left(\psi^{R A}\right)\Big)
\ea
For the case that $\xD=D_\alpha$ is the classical R\'enyi entropy with $\alpha\in(0,2]$ and $\alpha\neq 1$, it was proved in~\cite{Fang2019,Katariya2020} that
\be\label{formula}
\widehat{\xD}(\mM\|\mN)=\widehat{D}_\alpha(\mM\|\mN)\eqdef\frac{1}{\alpha-1}\log\widehat{Q}_\alpha(\mM\|\mN)
\ee
where 
\begin{widetext}
\be
\widehat{Q}_\alpha(\mM\|\mN)\eqdef\begin{cases}\left\|\tr_B\left[G_\alpha(J_{\mM}^{AB},J_{\mN}^{AB})\right]\right\|_{\infty} & \text{if }\alpha\in(1,2]\text{ and }\supp(J_\mM^{AB})\subseteq\supp(J_\mN^{AB})\\
\lambda_{\min}\left(\tr_B\left[G_\alpha(J_{\mM}^{AB},J_{\mN}^{AB})\right]\right) & \text{if }\alpha\in(0,1)\text{ and }\supp(J_\mM^{AB})\subseteq\supp(J_\mN^{AB})\\
\infty & \text{if }\alpha\in(1,2]\text{ and }\supp(J_\mM^{AB})\not\subseteq\supp(J_\mN^{AB})\\
\lim\limits_{\epsilon\to 0^+}\lambda_{\min}\left(\tr_B\left[G_\alpha(J_{\mM_\epsilon}^{AB},J_{\mN}^{AB})\right]\right) &\text{if }\alpha\in(0,1)\text{ and }\supp(J_\mM^{AB})\not\subseteq\supp(J_\mN^{AB})
\end{cases}
\ee
\end{widetext}
with $J_{\mM_\epsilon}^{AB}\eqdef J_{\mM}^{AB}+\epsilon I^{AB}$ and 
\be
G_\alpha(X,Y)\eqdef Y^{\frac12}\left(Y^{-\frac12}X Y^{-\frac12}\right)^\alpha Y^{\frac12}\quad\forall\;X,Y> 0
\ee
For $\alpha=1$ it is given by
\begin{align*}
&\widehat{D}_{\alpha=1}(\mM\|\mN)=\widehat{D}(\mM\|\mN)\eqdef\\
&\begin{cases}
\left\|\tr_B\left[\widehat{G}(J_{\mM}^{AB},J_{\mN}^{AB})\right]\right\|_{\infty} &\text{if }\supp(J_\mM^{AB})\subseteq\supp(J_\mN^{AB})\\
\infty &\text{otherwise}
\end{cases}
\end{align*}
where 
\be
\widehat{G}(X,Y)\eqdef X^{\frac12}\log\left(X^{\frac12}Y^{-1}X^{\frac12}\right)X^{\frac12}\;.
\ee

For all $\alpha\in(0,2]$, the formula above gives for an isometry $\mV$ and a channel $\mN$
\be
\widehat{D}_\alpha(\mV\|\mN)=D_{\max}(\mV\|\mN)\;.
\ee
Hence, due to Theorem~\ref{lem.3}, $\widehat{D}_\alpha(\mV\|\mN)$ coincides with the maximal quantum-channel extension $\bD_\alpha(\mV\|\mN)$. However, recall that in general we have $\widehat{D}_\alpha(\mM\|\mN)\leq \bD_\alpha(\mM\|\mN)$, and we saw in Sec.~\ref{classical} that the inequality can be strict even on classical channels. In fact, since $\widehat{D}_\alpha$ is additive, we must have  $\widehat{D}_\alpha(\mM\|\mN)\leq \bD_\alpha^\reg(\mM\|\mN)$, and it is left open if this inequality can be strict for some choices of $\mM$ and $\mN$.

The formula~\eqref{formula} reveals that $\widehat{D}_\alpha$ is additive (at least for $\alpha\in(0,2]$).
In fact, to the authors knowledge, with the exception of $D_{\max}$, this function is the only known channel relative entropy, since all other channel divergences discussed in this paper are at most known to be weakly additive and the question whether they are fully additive is open.  In~\cite{Fang2019} $\widehat{D}_\alpha$ was used to derive upper bounds on certain QIP tasks, and in~\cite{Katariya2020} it was used to upper-bound some optimal rates in the context of channel discrimination.

 \section{Outlook}

\subsection{Some applications of channel divergences}

One of the fundamental tasks in quantum information theory is the distinguishability of two quantum channels $\mN,\mM\in\cptp(A\to B)$. In~\cite{Wang2019} it was shown that the asymptotically optimal discrimination rate of 
the type-II error exponent for parallel strategy is given by $D^\reg(\mN\|\mM)$, and for the adaptive strategy by the amortized divergence as given in~\eqref{amortized}. Remarkable, in~\cite{Fang2020} it was shown that the amortized divergence equals $D^\reg$ indicating that both the parallel and adaptive strategies yield the same error exponent.

Despite the physical significance of $D^\reg(\mN\|\mM)$ for channel discrimination, we know very little about this quantity due to the fact that it is defined in terms of an optimization over a non-compact domain. Therefore, in such scenarios, it is often useful to study the properties of such uncomputable functions. 

It is known~\cite{Fang2020b} that $D^\reg(\mN\|\mM)$ is an additive function and therefore it fits our definition of a relative entropy. This immediately implies that it satisfies all the properties outlined in Theorem~\ref{properties}. Particularly, it satisfies the continuity properties given in~\eqref{conti} and~\eqref{triang} indicating that if two channels $\mN,\mN'\in\cptp(A\to B)$ are close (in terms of the Thompson metric $D_T$), then they have similar discrimination rates with any other channel $\mM\in\cptp(A\to B)$. 
We point out that even though the Umegaki relative entropy is continuous (in fact asymptotically continuous) one cannot conclude that these properties carry over to $D^\reg$ due to the optimization involved with unbounded dimensions.

Theorem~\ref{eqreg} establishes that $D^\reg(\mN\|\mM)$ is the smallest channel relative entropy among all channel relative entropies that reduces to the KL-divergence on classical states. This means that it is not possible to find a computable lower bound  for $D^\reg(\mN\|\mM)$ that is itself a channel divergence. On the other hand, the geometric channel relative entropy $\widehat{D}$ provides such an upper bound~\cite{Fang2019}. 

Channel relative entropies have several other applications in quantum Shannon theory. For example, consider two distant parties, Alice and Bob, who share a classical-quantum channel $\mN\in\cptp(X\to B)$. This channel can be used to transmit classical information at a rate that is given by the Holevo information (see e.g.~\cite{Wilde2013}). Interestingly,  the Holevo information $\chi(\mN)$ can be expressed as
\be
\chi\left(\mN^{X\to B}\right)=\min_{\mE}D\left(\mN^{X\to B}\big\|\mE^{X\to B}\right)
\ee
where the minimum is over all replacement channels of the form $\mE(|x\lr x|)=\sigma^B$ for all $x$, and a fixed $\sigma\in\md(B)$. We can therefore interpret the Holevo quantity as a channel divergence distance to the set of ``free" channels.
In this context, the only free channels are the replacement channels. Also the entanglement-assisted classical communication rate can be expressed in a similar fashion. These examples indicate that also in the channel domain divergences and relative entropies as defined in this paper are expected to play a key role in the recently active developing field of dynamical quantum resource theories~\cite{Cooney-2016a,Felix2018,Berta2018,Gour2019,Fang2019,Kaur-2018a,Gour2018b,LW2019,LY2020,GW2019,Wang2019,Wang2019b,GS2020,Bauml2020,Fang2019,Fang2020,Katariya2020}. 

\subsection{Conclusions}

In this paper we introduced an axiomatic approach to dynamical divergences. This approached is minimalistic in the sense that  we only require channel divergences to satisfy the generalized DPI under superchannels, and channel relative entropies to be in addition additive and normalized. Remarkably, we showed that these axioms are sufficient to induce enough structure, leading to numerous properties satisfied by all channel relative entropies. One of our main results, is a uniqueness theorem, Theorem~\ref{mainresult}, in which we show that in the classical domain, there exists only one channel relative entropy, that reduces to the Kullback-Leibler divergence on classical states (i.e. probability vectors). In the quantum case, it is known that this uniqueness does not hold even for quantum states, but we were able to show that the amortized relative entropy as defined in~\eqref{amortized} is the smallest channel relative entropy that reduces to the Kullback-Leibler divergence on classical states.
Due to the one-to-one correspondence between classical entropies and classical relative entropies~\cite{GT2020a}, this means that the amortized relative entropy is in fact the smallest one that reduces on a pair of classical states $(\p,\u)$ (here $\u$ is the uniform distribution) to the log of the dimension minus the Shannon entropy.

There are many open problems for future investigations. For example, in the classical domain, for $\alpha\neq 1$, is the classical-channel extension of the $\alpha$-R\'enyi relative entropy unique? Another interesting problem is whether the regularization of the maximal channel extension of relative entropies coincide with the geometric channel relative entropies. Finally, another interesting question is whether the maximal channel extensions of relative entropies satisfy a `chain rule' similar to the one satisfied by the minimal channel extension~\eqref{qdsup}~\cite{Fang2020}. 

\begin{acknowledgments}
The author would like to thank Marco Tomamichel for numerous fruitful discussions on the subject of this paper. The author is also grateful for many fruitful discussions with Nilanjana Datta, Kun Fang, Xin Wang, and Mark Wilde on topics related to this work.
The author acknowledge support from the Natural Sciences and Engineering Research Council of Canada (NSERC).
\end{acknowledgments}

\bibliography{QRTbib}

\newpage
\onecolumngrid



\section{Supplemental Material}

\begin{center}\Large\bfseries
Proof of Theorem~\ref{properties}
\end{center}

We prove all the properties outlined in Theorem~\ref{properties} as separated lemmas.

\begin{lemma}
Channel relative entropies are normalized divergences.
\end{lemma}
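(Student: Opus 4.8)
The plan is to read the lemma as the assertion that the reduction of a channel relative entropy $\sD$ to the state domain is a genuine (state) divergence obeying the normalization axiom — in fact a full state relative entropy, of which ``normalized divergence'' is the weakest part. For $\rho,\sigma\in\md(A)$ I would use the identification $\md(A)=\cptp(1\to A)$ and set $\D(\rho\|\sigma)\eqdef\sD(\rho\|\sigma)$, the right-hand side being $\sD$ evaluated on the pair of one-input channels corresponding to $\rho$ and $\sigma$. The task is then to verify the three axioms of a state (relative) divergence for $\D$: data processing, additivity, and normalization.

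First, DPI: for any $\mE\in\cptp(A\to B)$ I would exhibit $\mE$ — viewed through $\mbb{L}(1A\to 1B)=\ml(A\to B)$ — as a superchannel in $\super(1A\to 1B)$, by taking the reference system trivial, the pre-processing trivial, and the post-processing map equal to $\mE$, so that $\Theta[\rho]=\mE(\rho)$ for every state $\rho$. The generalized DPI of $\sD$ under this $\Theta$ is then exactly the state DPI $\D(\mE(\rho)\|\mE(\sigma))\le\D(\rho\|\sigma)$. Next, additivity comes for free: $\rho_1\otimes\rho_2$, read as a channel in $\cptp(1\to A_1A_2)$, is the tensor product of the one-input channels $\rho_1$ and $\rho_2$ (the trivial input satisfies $1\otimes 1=1$), so the channel-additivity axiom of $\sD$ transcribes verbatim into state additivity. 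Finally, normalization is immediate: the normalization axiom for a channel relative entropy is phrased precisely on the pair of $2\times2$ states $\big(\mathrm{diag}(1,0),\mathrm{diag}(1/2,1/2)\big)$ regarded as one-input channels, so it is literally the state-normalization condition for $\D$; restricting further to diagonal density matrices gives the classical analogue. Hence $\D$ is a (classical and quantum) relative entropy, in particular a normalized divergence, and from this point on all the structural consequences of~\cite{GT2020a,GT2020b} for state (relative) divergences may be quoted in the proofs of the remaining properties of Theorem~\ref{properties}.

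I do not expect a genuine obstacle here; the only point demanding care is the bookkeeping with the trivial system — pinning down the identifications $\cptp(1\to A)=\md(A)$ and $\mbb{L}(1A\to 1B)=\ml(A\to B)$, and checking that pre/post-processing with a trivial reference realizes every CPTP map as a superchannel on the one-input slot, so that the state DPI one extracts is exactly the desired one (neither weaker nor stronger). Once these identifications are fixed, the argument is a direct transcription of the definitions.
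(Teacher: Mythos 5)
Your reading of the lemma is not the one the paper intends, and as a result your argument, while correct in its own terms, never establishes the fact the paper actually uses downstream. In this paper ``normalized divergence'' refers to the condition $\sD(1\|1)=0$, where $1$ is the trivial channel in $\cptp(1\to 1)$ --- this is exactly the hypothesis of Property~1 of Theorem~\ref{properties}, which converts it into non-negativity $\sD(\mM\|\mN)\geq 0$ with equality when $\mM=\mN$. The paper's proof is accordingly a one-liner that uses only additivity with the trivial channel:
\begin{equation*}
\sD(\mN\|\mM)=\sD(\mN\otimes 1\|\mM\otimes 1)=\sD(\mN\|\mM)+\sD(1\|1)\;,
\end{equation*}
hence $\sD(1\|1)=0$. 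You instead interpreted ``normalized'' as the axiom $\sD\bigl(\mathrm{diag}(1,0)\,\big\|\,\mathrm{diag}(1/2,1/2)\bigr)=1$ and set out to verify that the restriction of $\sD$ to $\cptp(1\to A)=\md(A)$ is a state relative entropy. That verification is sound --- every $\mE\in\cptp(A\to B)$ is indeed a superchannel in $\super(1A\to 1B)$ with trivial reference and pre-processing, tensor products of one-input channels are tensor products of states, and the normalization axiom transcribes verbatim --- but it is answering a different question.

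The gap is small and easily closed: from the additivity you already invoke, applied to the pair $(1,1)$ viewed as states of the trivial system (or equivalently to any pair tensored with the trivial channel as above), you get $\sD(1\|1)=2\,\sD(1\|1)$ and hence $\sD(1\|1)=0$. Without that explicit step, your proof does not supply the input needed for the following lemma in the supplemental material (non-negativity of channel divergences with $\sD(1\|1)=0$). I would add that single line and, if you wish, keep the rest of your argument as a useful side remark that the state-reduction of any channel relative entropy is itself a state relative entropy in the sense of~\cite{GT2020a,GT2020b}; that observation is correct and is implicitly used elsewhere in the paper, but it is not what this lemma is recording.
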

\begin{proof}
Let $\sD$ be a channel relative entropy, and $\mN,\mM\in\cptp(A\to B)$. Then, $\sD(\mN\|\mM)=\sD(\mN\otimes 1\|\mM\otimes 1)=\sD(\mN\|\mM)+\sD(1\|1)$. Hence, $\sD(1\|1)=0$.
\end{proof}
\begin{lemma}
Let $\sD$ be a channel divergence. Then, for any two channels $\mM,\mN\in\cptp(A\to B)$ we have
\be
\sD(\mM\|\mN)\geq 0\;,
\ee
with equality if $\mM=\mN$.
\end{lemma}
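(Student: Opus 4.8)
The plan is to invoke the generalized DPI (Definition~\ref{qdefd}) twice, against two superchannels built by hand that link the pair $(\mM,\mN)$ to the pair $(1,1)$ of trivial channels in $\cptp(1\to 1)$. The value $\sD(1\|1)$ will be the hinge of the argument, and it equals $0$ either by the hypothesis of Property~1 of Theorem~\ref{properties} or, when $\sD$ is a channel relative entropy, by the previous lemma; I will make this dependence explicit rather than hide it.

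\textbf{Lower bound.} First I would fix an arbitrary state $\omega\in\md(A)$ and take $\Theta_0\in\super(AB\to 11)$ to be the superchannel with trivial reference system, pre-processing $\mE_0=\omega\in\cptp(1\to A)$ and post-processing $\mF_0=\tr_B\in\cptp(B\to 1)$ (these being the only options once the identifications $\cptp(1\to A)=\md(A)$ and $\cptp(B\to 1)=\{\tr_B\}$ from the Preliminaries are used). Because every channel is trace preserving, $\Theta_0[\mK]=\tr[\mK(\omega)]=1$ for all $\mK\in\cptp(A\to B)$, so applying the DPI to $\Theta_0$ with inputs $\mM$ and $\mN$ gives $\sD(1\|1)=\sD(\Theta_0[\mM]\,\|\,\Theta_0[\mN])\le\sD(\mM\|\mN)$; hence $\sD(\mM\|\mN)\ge\sD(1\|1)=0$.

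\textbf{Equality case.} For $\mM=\mN$ I need the reverse inequality $\sD(\mN\|\mN)\le 0$, which I would obtain by running a superchannel in the other direction: take $\Theta_1\in\super(11\to AB)$ with reference system $R=B$, pre-processing $\mE_1=\mN$ regarded as an element of $\cptp(A\to R)$, and post-processing $\mF_1=\id_B$ regarded as an element of $\cptp(R\to B)$. Since $\cptp(1\to 1)$ contains only the trivial channel, the defining formula of a superchannel collapses to $\Theta_1[1]=\mF_1\circ\mE_1=\id_B\circ\mN=\mN$. Then the DPI applied to $\Theta_1$ yields $\sD(\mN\|\mN)=\sD(\Theta_1[1]\,\|\,\Theta_1[1])\le\sD(1\|1)=0$, and together with the lower bound this forces $\sD(\mN\|\mN)=0$.

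\textbf{Main obstacle.} There is no genuine obstacle here; the only care needed is bookkeeping, namely checking that $\Theta_0$ and $\Theta_1$ really do fit the pre-processing/post-processing template of a superchannel, which amounts to handling the trivial-system identifications consistently. Conceptually, $\Theta_0$ and $\Theta_1$ together witness that $(\mN,\mN)$ and $(1,1)$ are reachable from one another by superchannels, so \emph{every} channel divergence must assign them the same value; the relative-entropy axioms (through the preceding lemma) or the stated hypothesis of Property~1 enter only to pin that common value at $0$.
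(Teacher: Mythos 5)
Your proof is correct, but it takes a different route from the paper's. The paper's proof is a one-liner: it picks a superchannel $\Theta\in\super(AB\to X)$ collapsing the two channels to classical states and writes $\sD(\mM\|\mN)\geq\sD(\Theta[\mM]\|\Theta[\mN])\geq 0$, delegating both the nonnegativity on classical states and the vanishing on identical pairs to the classical results of~\cite{GT2020a}; it does not explicitly construct the reverse map needed to turn the equality claim $\sD(\mM\|\mM)=0$ into a two-sided bound. You instead anchor everything at the trivial pair $(1,1)$: the superchannel $\Theta_0$ (prepare $\omega$, apply the channel, trace out) gives the lower bound $\sD(\mM\|\mN)\geq\sD(1\|1)$, and the superchannel $\Theta_1$ (with $R=B$, pre-processing $\mN$, post-processing $\id_B$) gives the reverse inequality $\sD(\mN\|\mN)\leq\sD(1\|1)$; both constructions fit the pre-/post-processing template once the trivial-system identifications $\cptp(1\to A)=\md(A)$ and $\cptp(B\to 1)=\{\tr_B\}$ are made, exactly as you check. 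What your version buys is self-containedness (no appeal to the classical theory) and an explicit, airtight treatment of the equality case; what the paper's version buys is brevity and reuse of machinery already established for states. You are also right to flag that the hypothesis $\sD(1\|1)=0$ is genuinely needed — a divergence shifted by a constant still satisfies the DPI — so the version of the lemma stated without that hypothesis (or without the relative-entropy axioms that imply it via additivity) is literally too strong, and your explicit tracking of where $\sD(1\|1)=0$ enters is a point in your favor rather than a defect.
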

\begin{proof}
Let $\Theta\in\super(AB\to X)$ be a superchannel that takes quantum channels to classical states. Then,
\be
\sD(\mM\|\mN)\geq\sD\left(\Theta[\mM]\big\|\Theta[\mN]\right)\geq 0
\ee 
with equality if $\Theta[\mN]=\Theta[\mM]$, where we used the fact that on classical states $\sD$ is a classical divergence.
\end{proof}

\begin{lemma}
Let $\mE_{\sigma_1},\mE_{\sigma_2}\in\cptp(A\to B)$ be two replacement channels as defined in~\eqref{replace}, with $\sigma_1,\sigma_2\in\md(B)$ and $|A|>1$. Let $\sD$ be a channel divergence. Then,
\be
\sD\left(\mE_{\sigma_1}\big\|\mE_{\sigma_2}\right)=\sD(\sigma_1\|\sigma_2)\;,
\ee  
where on the RHS, $\sigma_1$ and $\sigma_2$ are viewed as channels in $\cptp(1\to B)$.
\end{lemma}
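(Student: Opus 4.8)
The plan is to obtain the equality from a two-sided application of the generalized data processing inequality, exhibiting in each direction an explicit superchannel that interconverts the pair $(\sigma_1,\sigma_2)$ (viewed as channels in $\cptp(1\to B)$) with the pair of replacement channels $(\mE_{\sigma_1},\mE_{\sigma_2})$.

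First I would prove $\sD(\mE_{\sigma_1}\|\mE_{\sigma_2})\leq\sD(\sigma_1\|\sigma_2)$ by constructing a superchannel $\Theta\in\super(1B\to AB)$ with $\Theta[\sigma_i]=\mE_{\sigma_i}$ for $i=1,2$. A superchannel from $\ml(1\to B)$ to $\ml(A\to B)$ is specified by a reference system $R$, a pre-processing map $\mA\in\cptp(A\to R)$, and a post-processing map $\mF\in\cptp(BR\to B)$, acting as $\Theta[\mathcal N^{1\to B}]=\mF^{BR\to B}\circ\mathcal N^{1\to B}\circ\mA^{A\to R}$. Taking $R$ to be the trivial system, $\mA=\tr_A$, and $\mF=\id_B$, one gets $\Theta[\sigma_i](\omega^A)=\tr[\omega^A]\,\sigma_i^B=\mE_{\sigma_i}(\omega^A)$ for all $\omega\in\ml(A)$, so the DPI gives the claimed inequality.

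For the reverse inequality $\sD(\sigma_1\|\sigma_2)\leq\sD(\mE_{\sigma_1}\|\mE_{\sigma_2})$ I would construct a superchannel $\Theta'\in\super(AB\to 1B)$ with $\Theta'[\mE_{\sigma_i}]=\sigma_i$. Such a superchannel is given by a pre-processing map $\mA'\in\cptp(1\to AR)$ — i.e.\ the preparation of a state on $AR$ — and a post-processing map $\mF'\in\cptp(BR\to B)$, acting as $\Theta'[\mathcal N^{A\to B}]=\mF'^{BR\to B}\circ\mathcal N^{A\to B}\circ\mA'^{1\to AR}$. Choosing again $R$ trivial, $\mA'$ the preparation of an arbitrary fixed state $\tau\in\md(A)$, and $\mF'=\id_B$, one obtains $\Theta'[\mE_{\sigma_i}]=\mE_{\sigma_i}(\tau)=\sigma_i$. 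A second application of the DPI yields the matching inequality, and combining the two bounds gives the stated equality.

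The argument is essentially routine; the only point requiring care is the bookkeeping in the definition of a superchannel, namely correctly tracking which system plays the role of channel input and output before and after, so that the pre- and post-processing maps have the right signatures. I would also remark that the hypothesis $|A|>1$ is not actually needed for the proof (the case $|A|=1$ being the trivial identification of $\mE_\sigma$ with $\sigma$); it serves only to make the statement non-degenerate.
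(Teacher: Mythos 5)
Your proof is correct and follows essentially the same route as the paper: one superchannel that turns a state $\sigma_i\in\cptp(1\to B)$ into the replacement channel by precomposing with $\tr_A$, and one that recovers $\sigma_i$ from $\mE_{\sigma_i}$ by feeding in a fixed input state, with the DPI applied in each direction. The bookkeeping of the pre- and post-processing maps is handled correctly, and your remark that $|A|>1$ is not needed for the argument is accurate.
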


\begin{proof}
The proof is a simple consequence of the DPI. On one direction, we define a superchannel $\Theta\in\super(AB\to B)$ by imputing the state $|0\lr 0|\in\md(A)$ into the input of a channel; specifically, for any $\mE\in\cptp(A\to B)$
\be
\Theta[\mE]\eqdef\mE(|0\lr 0|)\;.
\ee
With this superchannel we get
\ba
\sD\left(\mE_{\sigma_1}\big\|\mE_{\sigma_2}\right)&\geq \sD\left(\Theta\left[\mE_{\sigma_1}\right]\big\|\Theta\left[\mE_{\sigma_2}\right]\right)\\
&=\sD\left(\sigma_1\big\|\sigma_2\right)\;.
\ea
For the other direction, we define a superchannel $\Upsilon\in\super(B\to AB)$ as follows. For any ``channel" $\sigma\in\cptp(1\to B)$ we define
\be
\Upsilon\left[\sigma^{1\to B}\right]=\sigma^{1\to B}\circ\tr\;,
\ee
where the trace is acting on the input system $A$.
With this superchannel we get
\ba
\sD\left(\sigma_1\big\|\sigma_2\right)&\geq \sD\left(\Upsilon[\sigma_1]\big\|\Upsilon[\sigma_2]\right)\\
&=\sD\left(\sigma^{1\to B}_1\circ\tr\;\big\|\;\sigma^{1\to B}_2\circ\tr\right)\\
&=\sD\left(\mE_{\sigma_1}\big\|\mE_{\sigma_2}\right)\;.
\ea
This completes the proof.
\end{proof}

The next property is an extension of Lemma~5 in~\cite{GT2020a} to channels. 
\begin{lemma}\label{norma1}
Let $\sD$ be a channel relative entropy, and let $\mE_1,...,\mE_n\in\cptp(A\to B)$ be a set of $n$ orthogonal quantum channels; i.e. their Choi matrices satisfies $\tr[J_{\mE_j}^{AB}J_{\mE_k}^{AB}]=0$ for all $j\neq k\in[n]$. Then, for any probability vector $\p=\{p_x\}_{x=1}^n$, and $\mN\eqdef\sum_{x=1}^np_x\mE_x$, we have 
\be
\sD\left(\mE_x\big\|\mN\right)=-\log(p_x)\quad\forall\;x=1,...,n.
\ee
\end{lemma}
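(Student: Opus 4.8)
The plan is to reduce the statement to its known classical counterpart (Lemma~5 of~\cite{GT2020a}, or equivalently Property~5 of Theorem~\ref{properties} restricted to classical channels) by using the additivity, normalization, and DPI axioms of a channel relative entropy. The orthogonality condition $\tr[J_{\mE_j}^{AB}J_{\mE_k}^{AB}]=0$ means the Choi matrices have mutually orthogonal supports, which should allow a single superchannel to ``read off'' which channel is present and produce a flagged classical register.

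\emph{First step: construct a distinguishing superchannel.} Because the $J_{\mE_x}$ have pairwise orthogonal supports, there is a measurement (POVM) $\{F_x\}_{x=1}^n$ on $AB$ — e.g.\ $F_x=\Pi_{J_{\mE_x}}$ supplemented by $I-\sum_x\Pi_{J_{\mE_x}}$ absorbed into one outcome — that perfectly identifies which Choi matrix is present. I would build a superchannel $\Theta\in\super(AB\to X)$ that, on input channel $\mL\in\cptp(A\to B)$, feeds in the maximally entangled state, measures the output-plus-reference system with $\{F_x\}$, and records the outcome in a classical register $X$ of size $n$. Applied to $\mE_x$ this yields the point distribution $|x\rl x|$, and applied to $\mN=\sum_y p_y\mE_y$ it yields (by linearity and orthogonality) the distribution $\p=\{p_y\}_{y=1}^n$. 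By DPI,
\be
\sD(\mE_x\|\mN)\ge \sD\big(\Theta[\mE_x]\big\|\Theta[\mN]\big)=\xD\big(|x\rl x|\big\|\p\big),
\ee
and the right-hand side equals $-\log p_x$ because on classical states $\sD$ is a classical relative entropy and Lemma~5 of~\cite{GT2020a} (or direct computation from normalization plus additivity, using the decomposition $(\,|x\rl x|,\p\,)\sim$ a binary pair) gives this value.

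\emph{Second step: the reverse inequality.} For the matching upper bound I would run the construction backwards: exhibit a superchannel $\Upsilon$ that takes the classical pair $(|x\rl x|,\p)$ to $(\mE_x,\mN)$. Concretely, define $\Upsilon$ on a ``channel'' $\r\in\cptp(1\to X)$ with $\r=\{r_y\}$ by the recipe: trace out the input $A$, and attach $\sum_y r_y J_{\mE_y}$ as the prepared Choi state — i.e.\ prepare the mixture of the $\mE_y$'s weighted by $\r$. This is a legitimate superchannel (a concatenation of CPTP pre- and post-processing), it sends $|x\rl x|\mapsto\mE_x$ and $\p\mapsto\sum_y p_y\mE_y=\mN$, hence by DPI
\be
-\log p_x=\xD\big(|x\rl x|\big\|\p\big)=\sD\big(|x\rl x|\big\|\p\big)\ge \sD\big(\Upsilon[\,|x\rl x|\,]\big\|\Upsilon[\p]\big)=\sD(\mE_x\|\mN).
\ee
Combining the two inequalities gives $\sD(\mE_x\|\mN)=-\log p_x$.

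\emph{Expected main obstacle.} The routine parts (DPI bookkeeping, classical evaluation of $\xD(|x\rl x|\|\p)=-\log p_x$) are straightforward. The subtle point is checking that the map $\Upsilon$ just described — ``prepare the mixture of Choi states indexed by the classical input'' — is genuinely of superchannel form (a pre-processing CPTP map $\mE\in\cptp(A'\to AR)$ followed by post-processing $\mF\in\cptp(BR\to B')$), rather than merely an affine map on channel space; one must realize the classical-to-Choi preparation through a controlled unitary on a reference system and discarding the input, and verify complete positivity of the induced supermap. A similar care is needed for $\Theta$: the measurement-and-flag operation must be cast as pre-processing (feed in maximal entanglement) plus post-processing (measure, output register), which is standard but should be stated cleanly. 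Once both $\Theta$ and $\Upsilon$ are confirmed to be superchannels, the lemma follows immediately.
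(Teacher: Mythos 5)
Your proposal follows essentially the same route as the paper's proof: a discriminating superchannel $\Theta\in\super(AB\to X)$ with $\Theta[\mE_x]=|x\rl x|$ (existence guaranteed by the orthogonality of the Choi supports) gives the lower bound via DPI together with the classical result of Lemma~5 in~\cite{GT2020a}, and a preparing superchannel $\Upsilon\in\super(X\to AB)$ with $\Upsilon[|x\rl x|]=\mE_x$ gives the matching upper bound. One correction to your sketch of $\Upsilon$: taken literally, ``trace out the input $A$ and attach $\sum_y r_y J_{\mE_y}$ as the prepared Choi state'' produces the replacement channel $\mR_\sigma$ with $\sigma=\sum_y r_y J_{\mE_y}/|A|$, which is \emph{not} equal to $\mE_x$ on the point distribution; the correct realization keeps the input system alive and uses the post-processing channel $\mF\in\cptp(XA\to B)$ defined by $\mF(\tau^{XA})=\sum_y\mE_y\big(\la y|\tau|y\ra\big)$, i.e.\ it conditionally applies $\mE_y$ controlled on the classical flag, so that $\Upsilon[\r^X](\omega^A)=\mF(\r^X\otimes\omega^A)=\sum_y r_y\mE_y(\omega^A)$. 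With that fix your argument coincides with the paper's.
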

\begin{proof}
Since $\mE_1,...,\mE_n$ are orthogonal there exists a process POVM $\Theta\in\super(AB\to X)$ such that
\be
\Theta[\mE_x]=|x\lr x|\quad\quad\forall\;x=1,...,n.
\ee
We therefore get from the DPI
\ba
\sD\left(\mE_x\big\|\mN\right)&\geq \sD\left(\Theta[\mE_x]\big\|\Theta[\mN]\right)\\
&=\sD\Big(|x\lr x|\Big\|\sum_{y=1}^np_y|y\lr y|\Big)\\
&=-\log(p_x)\;,
\ea
where the last line follows from the classical version of this lemma that was proven in~\cite{GT2020a}.
To prove the other side of the inequality, let $X$ be an $n$-dimensional classical system, and let $\Upsilon\in\super(X\to AB)$ be a superchannel defined via
\be
\Upsilon\big[|x\lr x|\big]\eqdef\mE_x\;.
\ee
Applying again the DPI gives
\ba
-\log(p_x)&=\sD\Big(|x\lr x|\Big\|\sum_{y=1}^n p_y|y\lr y|\Big)\\
&\geq \sD\Big(\Upsilon\big[|x\lr x|\big]\Big\|\Upsilon\Big[\sum_{y=1}^np_y|y\lr y|\Big]\Big)\\
&=\sD\left(\mE_x\big\|\mN\right).
\ea
This completes the proof.
\end{proof}

Note that in the proof above, to obtain the bound $\sD\left(\mE_x\big\|\mN\right)\leq-\log(p_x)$ we did not assume that that the channels $\{\mE_x\}$ are orthogonal. Therefore, this bounds also holds if $\{E_x\}$ are not orthogonal. However, we now prove a stronger version of this bound.

\begin{definition}
Let $\sD$ be a channel divergence.
For any $\mM,\mN\in\cptp(A\to B)$ we denote its corresponding min and max divergences, respectively, by
\begin{align*}
&\sD_{\min}\left(\mM\|\mN\right)\eqdef \sD\left(\begin{bmatrix} 1 & 0\\ 0 & 0\end{bmatrix}\Big\| \bb 2^{-D_{\min}(\mM\|\mN)} & 0\\ 0 & 1-2^{-D_{\min}(\mM\|\mN)}\eb\right)\\
&\sD_{\max}\left(\mM\|\mN\right)\eqdef \sD\left(\begin{bmatrix} 1 & 0\\ 0 & 0\end{bmatrix}\Big\| \bb 2^{-D_{\max}(\mM\|\mN)} & 0\\0 & 1-2^{-D_{\max}(\mM\|\mN)}\eb\right)\;.
\end{align*}
where the classical states in the input of $\sD$ on the RHS can be viewed as channels with trivial input systems.
\end{definition}
\begin{remark}
Note that if $\sD$ is a relative entropy then from Lemma~\ref{norma1} (see also~Lemma~5 in~\cite{GT2020a}) it follows that $\sD_{\max}=D_{\max}$ and $\sD_{\min}=D_{\min}$. 
\end{remark}

\begin{lemma}\label{thm:minmax}
Let $\sD$ be a channel divergence.
Then, $\sD_{\max}$ and $\sD_{\min}$ are also divergences, and furthermore,
\be
\sD_{\min}\left(\mM\|\mN\right)\leq \sD\left(\mM\|\mN\right)\leq \sD_{\max}\left(\mM\|\mN\right)\;.
\ee
In particular, if $\sD$ is a relative entropy then
for all $\mM,\mN\in\cptp(A\to B)$
\be
D_{\min}(\mM\|\mN)\leq \sD(\mM\|\mN)\leq D_{\max}(\mM\|\mN)\;,
\ee
where $D_{\min}$ and $D_{\max}$ are the channel min and max relative entropies as defined in~\eqref{min2} and~\eqref{max1}.
\end{lemma}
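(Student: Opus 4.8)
The plan is to derive both the monotonicity of $\sD_{\max}$ and $\sD_{\min}$ and the sandwich bound from a single idea: the channel $D_{\min}$ and $D_{\max}$ completely control the relative majorization / testing-region structure, so the two explicit binary pairs appearing in the definition of $\sD_{\max}$ and $\sD_{\min}$ are, respectively, relatively majorized by, and relatively majorize, the pair obtained by feeding $(\mM,\mN)$ through a suitable superchannel. First I would recall the well-known state-level facts: for any $\rho,\sigma\in\md(A)$ one has $D_{\min}(\rho\|\sigma)=-\log\tr[\sigma\Pi_\rho]$ and, crucially, the binary pair $(\,[\begin{smallmatrix}1&0\\0&0\end{smallmatrix}],[\begin{smallmatrix}2^{-D_{\min}(\rho\|\sigma)}&0\\0&1-2^{-D_{\min}(\rho\|\sigma)}\end{smallmatrix}]\,)$ is relatively majorized by $(\rho,\sigma)$, i.e.\ there is a CPTP map sending the latter pair to the former — this is just the statement that the measurement $\{\Pi_\rho, I-\Pi_\rho\}$ realizes the extreme point of the testing region at the relevant abscissa. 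Dually, $(\rho,\sigma)$ is relatively majorized by $(\,[\begin{smallmatrix}1&0\\0&0\end{smallmatrix}],[\begin{smallmatrix}2^{-D_{\max}(\rho\|\sigma)}&0\\0&1-2^{-D_{\max}(\rho\|\sigma)}\end{smallmatrix}]\,)$, which is the operational content of $D_{\max}$: if $t\sigma\geq\rho$ with $\log t=D_{\max}(\rho\|\sigma)$, then one can write $\rho$ and $\sigma$ as images of the binary pair under a fixed CPTP map.

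Next I would lift both statements to the channel level using the channel definitions~\eqref{min2} and~\eqref{max1}. For $D_{\min}(\mM\|\mN)$, let $\psi\in\md(RA)$ be the optimal pure state in~\eqref{min2}, so that $D_{\min}(\mM(\psi)\|\mN(\psi))=D_{\min}(\mM\|\mN)$; composing the superchannel "prepare $\psi$, apply the channel, then measure $\{\Pi_{\mM(\psi)},I-\Pi_{\mM(\psi)}\}$" gives an element of $\super(AB\to X)$ taking $(\mM,\mN)$ to exactly the binary pair defining $\sD_{\min}(\mM\|\mN)$. Applying the channel DPI for $\sD$ then yields $\sD(\mM\|\mN)\geq\sD_{\min}(\mM\|\mN)$. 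For the upper bound, if $t\mN\geq\mM$ with $\log t=D_{\max}(\mM\|\mN)$, I would exhibit a superchannel $\Upsilon\in\super(X\to AB)$ (with $X$ a qubit) sending the binary pair $(\,[\begin{smallmatrix}1&0\\0&0\end{smallmatrix}],[\begin{smallmatrix}1/t&0\\0&1-1/t\end{smallmatrix}]\,)$ to $(\mM,\mN)$: concretely, post-process the diagonal input $\operatorname{diag}(a,b)$ to the channel $a\mM + b\,\tfrac{t\mN-\mM}{t-1}$, which is CP for $a,b\geq0$ and reproduces $\mM$ at $(a,b)=(1,0)$ and $\mN$ at $(a,b)=(1/t,1-1/t)$; verifying this is a legitimate superchannel (CPTP pre/post-processing with a reference) is the one routine point that needs care. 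Then DPI for $\sD$ gives $\sD(\mM\|\mN)\leq\sD_{\max}(\mM\|\mN)$.

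To see that $\sD_{\max}$ and $\sD_{\min}$ are themselves divergences, I would observe that $D_{\min}$ and $D_{\max}$ are channel divergences (stated in the excerpt: both satisfy the channel DPI), so if $\Theta\in\super(AB\to A'B')$ then $D_{\min}(\Theta[\mM]\|\Theta[\mN])\leq D_{\min}(\mM\|\mN)$, hence $2^{-D_{\min}(\Theta[\mM]\|\Theta[\mN])}\geq 2^{-D_{\min}(\mM\|\mN)}$, i.e.\ the second argument of the binary pair moves toward the diagonal; one checks (again via a trivial CPTP map mixing toward the maximally mixed qubit) that the binary pair for $\Theta[\mM],\Theta[\mN]$ is relatively majorized by the one for $\mM,\mN$, so monotonicity of the classical divergence $\sD$ restricted to qubits gives $\sD_{\min}(\Theta[\mM]\|\Theta[\mN])\leq\sD_{\min}(\mM\|\mN)$, and symmetrically for $\sD_{\max}$ using that $D_{\max}$ is monotone and $a\mapsto 1-2^{-a}$ is monotone. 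Finally, the "in particular" clause follows immediately from the Remark preceding the lemma: when $\sD$ is a relative entropy, Lemma~\ref{norma1} forces $\sD_{\min}=D_{\min}$ and $\sD_{\max}=D_{\max}$ on these binary pairs, so the general sandwich specializes to $D_{\min}(\mM\|\mN)\leq\sD(\mM\|\mN)\leq D_{\max}(\mM\|\mN)$. The main obstacle is the explicit construction of the superchannel $\Upsilon$ realizing the $D_{\max}$ upper bound — in particular checking that the decomposition $\mN=\tfrac1t\mM+(1-\tfrac1t)\tfrac{t\mN-\mM}{t-1}$ plugs into a genuine element of $\super(X\to AB)$ — but this is essentially the channel analogue of the standard state-level argument and presents no real difficulty.
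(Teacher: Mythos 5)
Your proposal is correct and follows essentially the same route as the paper's proof: the same process-POVM superchannel $\Theta[\mN]=\tr[\mN(\psi)\Pi_{\mM(\psi)}]\,|0\lr 0|+\tr[\mN(\psi)(I-\Pi_{\mM(\psi)})]\,|1\lr 1|$ for the lower bound, the same preparation superchannel built from the decomposition $\mN=\tfrac1t\mM+\tfrac{t-1}{t}\cdot\tfrac{1}{t-1}(t\mN-\mM)$ for the upper bound, the same classical-channel argument for the DPI of $\sD_{\min}$ and $\sD_{\max}$, and the same appeal to Lemma~\ref{norma1} for the ``in particular'' clause. The only quibble is the parenthetical ``mixing toward the maximally mixed qubit'': the required classical map must fix $|0\lr 0|$ while merging part of symbol $1$ into symbol $0$ (as in the paper's Eq.~\eqref{cc}), not mix toward the uniform state, but this does not affect the argument.
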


\begin{proof}
We start by proving that $\sD_{\max}$ and $\sD_{\min}$ satisfy the DPI.
For this purpose, observe first that for any two binary probability distributions $(p,1-p)$ and $(q,1-q)$
there exists a classical channel $\mC\in\cptp(X\to X)$ satisfying
\be\label{cc}
\mC(|0\lr 0|)=|0\lr 0|\quad\text{and}\quad\mC\left(p|0\lr 0|+(1-p)|1\lr 1|\right)=q|0\lr 0|+(1-q)|1\lr 1|
\ee
if and only if $p\leq q$. Now, by definition, for any superchannel $\Theta\in\super(AB\to A'B')$
\ba
&\sD_{\max}\left(\Theta[\mM]\big\|\Theta[\mN]\right)= \sD\left(|0\lr 0|\;\Big\|\; 2^{-D_{\max}(\Theta[\mM]\|\Theta[\mN])}|0\lr 0|+\left(1-2^{-D_{\max}(\Theta[\mM]\|\Theta[\mN])}\right)|1\lr 1|\right)
\ea
and also $2^{-D_{\max}(\Theta[\mM]\|\Theta[\mN])}\geq 2^{-D_{\max}(\mM\|\mN)}$. This means that there exists a classical channel $\mC\in\cptp(X\to X)$ satisfying~\eqref{cc} with $q=2^{-D_{\max}(\Theta[\mM]\|\Theta[\mN])}$ and $p=2^{-D_{\max}(\mM\|\mN)}$. Hence, with this classical channel $\mC$ we get
\ba
\sD_{\max}\left(\Theta[\mM]\big\|\Theta[\mN]\right)&= \sD\Big(\mC(|0\lr 0|)\;\Big\|\; \mC\left(2^{-D_{\max}(\mM\|\mN)}|0\lr 0|+\big(1-2^{-D_{\max}(\mM\|\mN)}\big)|1\lr 1|\right)\Big)\\
&\leq \sD\Big(|0\lr 0|\;\Big\|\; 2^{-D_{\max}(\mM\|\mN)}|0\lr 0|+\left(1-2^{-D_{\max}(\mM\|\mN)}\right)|1\lr 1|\Big)\\
&=\sD_{\max}\left(\mM\|\mN\right)\;.
\ea
Following the same lines as above, one can prove that also $\sD_{\min}$ satisfies the DPI. We are now ready to prove the two bounds.

Let $\mM\in\cptp(A\to B)$, $\psi\in\md(RA)$, and $\Pi_{\mM(\psi)}$ denotes the projector to the support of $\mM^{A\to B}(\psi^{RA})$. Define the superchannel (in fact process POVM) $\Theta\in\super(AB\to X)$ with $|X|=2$ as
\be
\Theta[\mN]\eqdef\tr\big[\mN(\psi)\Pi_\mM(\psi)\big]|0\lr 0|^X+\tr\Big[\mN(\psi)\left(I-\Pi_\mM(\psi)\right)\Big]|1\lr 1|^X\;.
\ee
Then,
\ba\label{substi}
\sD(\mM\|\mN)&\geq \sD\big(\Theta[\mM]\big\|\Theta[\mN]\big)\\
&=\sD\left(|0\lr 0|\Big\|\tr\big[\mN(\psi)\Pi_\mM(\psi)\big]|0\lr 0|+\tr\Big[\mN(\psi)\left(I-\Pi_\mM(\psi)\right)\Big]|1\lr 1|\right)\;.
\ea
Now, since the above equation holds for all $\psi^{RA}$ it also holds for the optimal $\psi^{RA}$ that satisfies 
\be
\tr\big[\mN(\psi)\Pi_\mM(\psi)\big]=2^{-D_{\min}(\mM\|\mN)}\;.
\ee
Therefore, substituting this choice of $\psi$ in~\eqref{substi} gives $\sD(\mM\|\mN)\geq \sD_{\min}(\mM\|\mN)$.

For the second inequality, denote by $t=2^{D_{\max}(\mM\|\mN)}$, and note that in particular, $t\mN\geq\mM$ (i.e. $t\mN-\mM$ is a CP map). Define a superchannel $\Theta\in\cptp(X\to AB)$ with $|X|=2$ by 
\be
\Theta\left[|0\lr 0|\right]=\mM\quad\text{and}\quad\Theta\left[|1\lr 1|\right]=\frac{1}{t-1}(t\mN-\mM)\;.
\ee
Furthermore, denote 
\be
\q^X\eqdef\frac{1}{t}|0\lr 0|^X+\frac{t-1}{t}|1\lr 1|^X\;,
\ee
 and observe that $\Theta\left[\q^X\right]=\mN$.
Hence,
\ba
\sD(\mM\|\mN)&=\sD\Big(\Theta\left[|0\lr 0|^X\right]\big\|\Theta[\q^X]\Big)\\
&\leq \sD\left(|0\lr 0|^X\big\|\q^X \right)\\
&=\sD_{\max}(\mM\|\mN)\;.
\ea
This completes the proof.
\end{proof}

Finally, the last basic property of channel divergences that we study in this subsection is the faithfulness of a channel divergence.
Similar to the classical and quantum cases, a channel divergence, $\sD$, is said to be faithful if $\sD(\mM\|\mN)=0$ implies $\mM=\mN$.
We now prove that channel divergences are faithful if their reduction to classical states is faithful.

\begin{lemma}
Let $\sD$ be a channel divergence. Then, $\sD$ is faithful if and only if its reduction to classical (diagonal) states is faithful.
\end{lemma}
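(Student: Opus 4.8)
The plan is to prove the two implications separately; the forward direction is immediate, and the reverse direction uses the generalized DPI to ``flatten'' a pair of channels into a pair of probability distributions, where faithfulness is assumed.

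\emph{($\Rightarrow$)} Classical (diagonal) states $\p,\q\in\md(X)$ are themselves channels, namely the elements of $\cptp(1\to X)$. Hence if $\sD$ is faithful on $\bigcup_{A,B}\cptp(A\to B)$, then $\sD(\p\|\q)=0$ already forces $\p=\q$, so the reduction of $\sD$ to classical states is faithful.

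\emph{($\Leftarrow$)} Suppose the reduction of $\sD$ to classical states is faithful, and let $\mM,\mN\in\cptp(A\to B)$ satisfy $\sD(\mM\|\mN)=0$. Fix an arbitrary state $\psi\in\md(A)$ and an arbitrary POVM (measurement channel) $\mE\in\cptp(B\to X)$, and let $\Theta\in\super(AB\to X)$ be the process-POVM superchannel whose pre-processing prepares $\psi$ at the input and whose post-processing measures the output with $\mE$, so that $\Theta[\mM]=\mE(\mM(\psi))$ and $\Theta[\mN]=\mE(\mN(\psi))$ are classical states. Invoking the generalized DPI together with the nonnegativity of channel divergences established in the earlier Lemma,
\be\nonumber
0\le \sD\big(\mE(\mM(\psi))\,\big\|\,\mE(\mN(\psi))\big)\le \sD(\mM\|\mN)=0\;,
\ee
hence $\sD\big(\mE(\mM(\psi))\|\mE(\mN(\psi))\big)=0$. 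Faithfulness of the classical reduction then gives $\mE(\mM(\psi))=\mE(\mN(\psi))$.

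To conclude, I would let $\mE$ range over all POVMs on $B$: two quantum states producing identical outcome statistics under every measurement are equal, so $\mM(\psi)=\mN(\psi)$ for every $\psi\in\md(A)$, and since density matrices span $\ml(A)$ while $\mM,\mN$ are linear, this forces $\mM=\mN$. (Equivalently, one can feed in one half of a maximally entangled state and measure the other system to read off $J_\mM=J_\mN$.) I expect the only slightly delicate point to be the appeal to nonnegativity of $\sD$ on the flattened classical pair — one should make sure the earlier Lemma supplies this for a \emph{general} channel divergence, without assuming $\sD$ is additive — but that is exactly what the preceding Lemma provides; the remaining steps are routine.
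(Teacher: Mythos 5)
Your proof is correct and follows essentially the same route as the paper: compose $\mM,\mN$ with a prepare-and-measure superchannel to land on classical states, use the generalized DPI together with nonnegativity to force the flattened divergence to vanish, and then invoke faithfulness of the classical reduction. The only (immaterial) differences are that the paper argues by contradiction with a single well-chosen input $\psi^{RA}$ and a dephasing in the basis where the output diagonals differ, whereas you argue directly by quantifying over all inputs $\psi\in\md(A)$ and all POVMs and then using informational completeness plus linearity to conclude $\mM=\mN$ — and the nonnegativity you flag is indeed exactly what the paper's preceding lemma supplies for a general channel divergence.
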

\begin{proof}
Clearly, if $\sD$ is faithful on quantum channels it is also faithful on classical channels as the latter is a subset of the former. Suppose now that $\sD$ is faithful on classical states, and suppose by contradiction that there exists $\mM\neq\mN\in\md(A)$ such that $\sD(\mM\|\mN)=0$. Then, there exists a state $\psi\in\md(RA)$ and  basis of $RB$ such that the diagonal of $\mM(\psi^{RA})$ in this basis does not equal to the diagonal of $\mN(\psi^{RA})$. Let $\Delta\in\cptp(RB\to Z)$ be the completely dephasing channel in this basis, where $Z\cong RB$ is viewed as a classical system with respect to this basis. We therefore have $\Delta\circ\mM(\psi^{RA})\neq\Delta\circ\mN(\psi^{RA})$. 
Now, define a superchannel $\Theta\in\super(AB\to Z)$ via
\be
\Theta[\mE^{A\to B}]\eqdef\Delta^{RB\to Z}\circ\mE^{A\to B}\circ\psi^{1\to RA}\;,
\ee
where we view the state $\psi$ as a channel in $\cptp(1\to RA)$. We therefore get that
\be
0=\sD\big(\mM\big\|\mN\big)\geq\sD\left(\Theta[\mM]\big\|\Theta[\mN]\right)\;.
\ee
But since $\sD$ is faithful on diagonal states we must have $\Theta[\mM]=\Theta[\mN]$ in contradiction with $\Delta\circ\mM(\psi^{RA})\neq\Delta\circ\mN(\psi^{RA})$.
Hence, $\sD$ is faithful also on quantum channels.
\end{proof} 

Note that the lemma above implies that any channel divergence, that reduces to the R\'enyi relative entropy of order $\alpha>0$, is faithful. More generally, when combining the above lemma with the condition on faithfulness in Theorem~17 of~\cite{GT2020a}  we get that almost all channel relative entropies are faithful.

\begin{lemma}\label{nord}
Let $\sD$ be a channel relative entropy, and consider two systems $A$ and $B$ with $|A|\leq |B|$. Let $\mR\in\cptp(A\to B)$ be the completely randomizing channel, and $\mV\in\cptp({A\to B})$ be an isometry channel. Then,
\be
\sD\left(\mV^{A\to B}\big\|\mR^{A\to B}\right)=\log|AB|\;.
\ee
\end{lemma}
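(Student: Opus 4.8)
The plan is to squeeze $\sD(\mV\|\mR)$ between the channel min- and max-relative entropies. By Property 6 of Theorem~\ref{properties}, established above as Lemma~\ref{thm:minmax}, every channel relative entropy satisfies $D_{\min}(\mV\|\mR)\le\sD(\mV\|\mR)\le D_{\max}(\mV\|\mR)$, so it suffices to prove $D_{\min}(\mV\|\mR)\ge\log|AB|$ together with $D_{\max}(\mV\|\mR)\le\log|AB|$; these two inequalities already pin $\sD(\mV\|\mR)=\log|AB|$.

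For the upper endpoint I would pass to Choi matrices. Writing $\mV(\cdot)=V(\cdot)V^{*}$ with $V^{*}V=I^{A}$ and setting $|\Omega\rangle=\sum_{j}|jj\rangle$, one has $J_{\mR}^{AB}=I^{A}\otimes\u^{B}=I^{AB}/|B|$, which has full support, and $J_{\mV}^{AB}=(I^{A}\otimes V)|\Omega\lr\Omega|(I^{A}\otimes V^{*})$, a rank-one positive operator whose single nonzero eigenvalue equals $\tr[J_{\mV}^{AB}]=|A|$. By the definition~\eqref{max1}, $D_{\max}(\mV\|\mR)=\log\min\{t\ge0:tJ_{\mR}^{AB}\ge J_{\mV}^{AB}\}$ (the support side-condition is trivially met), and since $tJ_{\mR}^{AB}=\tfrac{t}{|B|}I^{AB}$ while $J_{\mV}^{AB}$ has top eigenvalue $|A|$, the constraint holds exactly when $t\ge|A||B|$. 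Hence $D_{\max}(\mV\|\mR)=\log|AB|$.

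For the lower endpoint I would simply evaluate the defining maximum $D_{\min}(\mV\|\mR)=\max_{\psi\in\md(RA)}\{-\log\tr[\mR(\psi)\,\Pi_{\mV(\psi)}]\}$ at the maximally entangled state $\psi^{RA}$ with $|R|=|A|$. Since $\mV$ is an isometry, $\mV^{A\to B}(\psi^{RA})$ is pure, so $\Pi_{\mV(\psi)}$ is a rank-one projector; and since $\mR=\mR_{\u^{B}}$ is a replacement channel and $\psi^{R}=I^{R}/|A|$, we get $\mR^{A\to B}(\psi^{RA})=\psi^{R}\otimes\u^{B}=I^{RB}/(|A||B|)$. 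Therefore $\tr[\mR(\psi)\,\Pi_{\mV(\psi)}]=1/(|A||B|)$, which gives $D_{\min}(\mV\|\mR)\ge-\log\tfrac{1}{|A||B|}=\log|AB|$. Chaining, $\log|AB|\le D_{\min}(\mV\|\mR)\le\sD(\mV\|\mR)\le D_{\max}(\mV\|\mR)=\log|AB|$, which is the claim.

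I do not expect a genuine obstacle here; the points that need care are the Choi-matrix normalizations (in particular $\tr[J_{\mV}^{AB}]=|A|$, not $1$), the fact that $\psi^{R}$ can be made maximally mixed precisely because $|R|=|A|$ is permitted in the definition of the channel $D_{\min}$, and the verification that the finiteness side-conditions of $D_{\min}$ and $D_{\max}$ hold (they do, since $J_{\mR}^{AB}$ has full support, whence $\tr[\mR(\psi)\mV(\psi)]>0$ for every $\psi$). A more ``axiomatic'' route paralleling the state-domain identity of~\cite{GT2020a}, decomposing $\mR$ into orthogonal channels and invoking Lemma~\ref{norma1} together with additivity and normalization, also seems workable, but the $D_{\min}$/$D_{\max}$ squeeze is the shortest given Lemma~\ref{thm:minmax}.
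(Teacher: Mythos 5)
Your proof is correct, and its skeleton (squeeze $\sD(\mV\|\mR)$ between the channel $D_{\min}$ and $D_{\max}$ via Property 6, then evaluate both ends at the maximally entangled input) matches the paper's; the $D_{\max}$ computation is the same calculation the paper does with normalized Choi operators, $\log\min\{t: t\,\u^A\otimes\u^B\geq\phi_+^{AB}\}=\log|AB|$. The only real difference is in the lower bound: the paper first reduces to states by feeding $\phi_+^{RA}$ (via the minimal-extension inequality~\eqref{ncad}) and then invokes the state-level identity $\sD\big(\mV(\phi_+^{RA})\,\big\|\,\u^R\otimes\u^B\big)=\log|AB|$ from Lemma~5 of~\cite{GT2020a}, whereas you lower-bound $\sD$ by the channel $D_{\min}$ and compute $\tr[\mR(\phi_+)\Pi_{\mV(\phi_+)}]=1/|AB|$ directly. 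These are two phrasings of the same underlying step (both ultimately rest on the relative-entropy axioms through Lemma~\ref{thm:minmax} resp.\ Lemma~\ref{norma1}), and your side remarks on the Choi normalization $\tr[J_{\mV}^{AB}]=|A|$ and on the finiteness conditions are accurate, so there is nothing to fix.
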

\begin{proof}
From Lemma~\ref{thm:minmax} we have
\ba
\sD\left(\mV^{A\to B}\big\|\mR^{A\to B}\right)&\leq D_{\max}\left(\mV^{A\to B}\big\|\mR^{A\to B}\right)\\
&=\log\min\{t:\;t\u^A\otimes\u^B\geq \phi^{AB}_+\}=\log|AB|\;.
\ea
On the other hand, 
\ba
\sD\left(\mV^{A\to B}\big\|\mR^{A\to B}\right)&\geq \max_{\psi\in\md(RA)} \sD\Big(\mV^{A\to B}(\psi^{RA})\big\|\mR^{A\to B}(\psi^{RA})\Big)\\
&=\max_{\psi\in\md(RA)} \sD\Big(\mV^{A\to B}(\psi^{RA})\big\|\psi^{R}\otimes\u^B)\Big)\\
&\geq \sD\Big(\mV^{A\to B}(\phi^{RA}_+)\big\|\u^{R}\otimes\u^B)\Big)\\
&=\log|AB|\;,
\ea
where the first inequality follows from~\eqref{ncad}, and the last equality follows from Lemma 5 of~\cite{GT2020a}.
\end{proof}

\begin{lemma}
Let $\sD$ be a channel relative entropy. Then, for any $\mN,\mM,\mE\in\cptp(A\to B)$ we have
\be
\sD(\mN\|\mM)\leq\sD(\mN\|\mE)+D_{\max}(\mE\|\mM)
\ee
\end{lemma}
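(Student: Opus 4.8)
The plan is to exploit the fact that $D_{\max}(\mE\|\mM)$ controls the ``cost'' of converting $\mM$ into $\mE$ by mixing, and to combine this with the additivity and DPI of the channel relative entropy $\sD$, essentially mirroring the classical/quantum-state argument. Let $t\eqdef 2^{D_{\max}(\mE\|\mM)}$, so that $t\mM-\mE$ is a CP map; we may assume $t<\infty$, since otherwise the inequality is trivial. First I would introduce a two-dimensional classical flag system $X$ and define a superchannel $\Theta\in\super(AB\,X\to AB)$ (or, more precisely, use a superchannel that takes a channel together with a classical flag and produces a channel) by
\be\nonumber
\Theta\big[\mE^{A\to B}\otimes|0\lr 0|^X\big]=\mE\;,\qquad \Theta\big[\mE^{A\to B}\otimes|1\lr 1|^X\big]=\tfrac{1}{t-1}(t\mM-\mE)\;,
\ee
extended linearly, so that feeding in the probability vector $\q^X\eqdef\tfrac1t|0\lr0|^X+\tfrac{t-1}{t}|1\lr1|^X$ on the flag gives $\Theta[\mE\otimes\q^X]=\mM$. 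One must check this is a legitimate superchannel: it is a classically-controlled post-processing of the incoming channel (do nothing if the flag is $0$, replace by $\tfrac{1}{t-1}(t\mM-\mE)\circ(\text{discard})\circ(\text{prepare})$ if the flag is $1$), hence CP-preserving and channel-preserving, so it is a superchannel by the characterization of~\cite{Pavia1}.

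Next I would apply the DPI for $\sD$ to the pair $\big(\mN\otimes|0\lr0|^X,\ \mN\otimes\q^X\big)$ — here I need a second superchannel realizing this input, but the cleaner route is to directly bound as follows. Using additivity of $\sD$,
\be\nonumber
\sD(\mN\|\mM)=\sD\big(\mN\otimes|0\lr0|^X\,\big\|\,\mM\otimes\q^X\big)-\sD\big(|0\lr0|^X\big\|\q^X\big)+\sD\big(|0\lr0|^X\big\|\q^X\big),
\ee
which is not quite it; instead the efficient step is: the pair $(\mN\otimes|0\lr0|^X,\ \mN\otimes\q^X)$ is mapped by a superchannel of the form ``apply $\Theta$ to the $E$-slot and flag'' — wait, $\mN$ is in the $E$-slot. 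So I would instead take the superchannel $\Xi$ acting on $\cptp(A\to B)\otimes\ml(X)$ that on an input channel $\mathcal{K}^{A\to B}$ with flag $0$ outputs $\mathcal K$ and with flag $1$ outputs $\tfrac{1}{t-1}(t\mM-\mE)$ (a fixed channel, independent of $\mathcal K$). Then $\Xi[\mN\otimes|0\lr0|^X]=\mN$ and $\Xi[\mN\otimes\q^X]=\tfrac1t\mN+\tfrac{t-1}{t}\cdot\tfrac{1}{t-1}(t\mM-\mE)=\mM+\tfrac1t(\mN-\mE)$. This is not $\mM$ unless $\mN=\mE$, so this particular gadget is wrong; the correct gadget must put $\mN$ in one slot and the conversion data in another.

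The cleanest correct route, which I would actually write up, is: by additivity and DPI,
\ba\nonumber
\sD(\mN\|\mM)&\leq \sD(\mN\otimes|0\lr0|^X\ \|\ \mE\otimes\q^X)\\\nonumber
&=\sD(\mN\|\mE)+\sD(|0\lr0|^X\|\q^X)=\sD(\mN\|\mE)+D_{\max}(\mE\|\mM),
\ea
where the first inequality is the DPI applied to the superchannel $\Gamma\in\super(AB\,X\to AB)$ sending $\mathcal K^{A\to B}\otimes|0\lr0|^X\mapsto \mathcal K$ and $\mathcal K^{A\to B}\otimes|1\lr1|^X\mapsto \tfrac{1}{t-1}(t\mM-\mE)$ (checking $\Gamma[\mN\otimes|0\lr0|^X]=\mN$ and $\Gamma[\mE\otimes\q^X]=\mM$, using $\mE=\mathcal K$ in the second), the middle equality is additivity of $\sD$, and the last equality uses $\sD(|0\lr0|^X\|\q^X)=D_{\max}(\mE\|\mM)$, which holds because $\q^X=2^{-D_{\max}(\mE\|\mM)}|0\lr0|+(1-2^{-D_{\max}(\mE\|\mM)})|1\lr1|$ and $\sD$ reduces to a relative entropy on classical states (Lemma~5 of~\cite{GT2020a} / the normalization argument in Lemma~\ref{thm:minmax}, giving $\sD_{\max}=D_{\max}$). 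The main obstacle is verifying that $\Gamma$ is genuinely a superchannel — i.e. that the classically-controlled map ``pass the channel through untouched, or discard-and-replace-by-a-fixed-channel'' is CP-and-channel-preserving in the complete sense — which follows from writing it explicitly in pre/post-processing form and invoking~\cite{Pavia1}; all remaining steps are bookkeeping with additivity and the already-established identity $\sD_{\max}=D_{\max}$.
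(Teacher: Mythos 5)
Your final argument---the superchannel that passes the input channel through unchanged on flag $0$ and replaces it by the fixed channel $\tfrac{1}{t-1}(t\mM-\mE)$ on flag $1$, combined with additivity and the identity $\sD\big(|0\lr 0|^X\big\|\q^X\big)=\log t=D_{\max}(\mE\|\mM)$---is correct and is essentially identical to the paper's proof, whose $\mF=\tfrac1\eps(\mM-(1-\eps)\mE)$ with $\eps=1-2^{-D_{\max}(\mE\|\mM)}$ is exactly your replacement channel and flag distribution, the only difference being that the paper runs the chain of (in)equalities starting from $\sD(\mN\|\mE)$ rather than from $\sD(\mN\|\mM)$. The two abandoned gadgets preceding your ``cleanest correct route'' should simply be deleted from the writeup.
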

\begin{proof}
For any $\epsilon>0$, consider the relation
	\begin{align}
		\mM = (1-\eps) \mE + \eps \mF , \quad \textrm{where} \quad \mF \eqdef \frac1\eps\left(\mM - (1-\eps) \mE\right)\;.
	\end{align}
	Note that $\mF\in\cptp(A\to B)$ if and only if $\mM - (1-\eps) \mE\in\cp(A\to B)$ or equivalently if and only if 
\be
	\eps\geq 1-2^{-D_{\max}
	(\mE\|\mM)}\;.
\ee
Set $\eps\eqdef 1-2^{-D_{\max}(\mE\|\mM)}$, and define the superchannel $\Theta\in\super(AB\otimes X\to AB)$, where $X$ is a 2-dimensional classical system, such that for all $\mR\in\cptp(A\to B)$
\be
\Theta\left[\mR^{A\to B}\otimes |0\lr 0|^X\right]\eqdef \mR^{A\to B}\quad\text{and}\quad\Theta\left[\mR^{A\to B}\otimes |1\lr 1|^X\right]\eqdef \mF^{A\to B}\;.
\ee
Such a superchannel can be realized by performing a basis measurement on $X$. If the outcome is 0 then nothing is done to the dynamical system $AB$, whereas if outcome 1 occurs then the input channel of the dynamical system $AB$ is replaced with the channel $\mF^{A\to B}$. Using additivity and the fact that $\sD\left(|0\lr 0|^X\big\|t|0\lr 0|^X+(1-t)|1\lr 1|^X\right)=-\log t$ for any $t\in[0,1]$ (see Lemma 5 of~\cite{GT2020a}) we get
\ba
\sD(\mN\|\mE)&=\sD\left(\mN\otimes|0\lr 0|^X\big\|\mE\otimes((1-\eps)|0\lr 0|^X+\eps|1\lr 1|^X)\right)+\log(1-\eps)\\
&\geq \sD\left(\Theta\left[\mN\otimes|0\lr 0|^X\right]\Big\|\Theta\left[\mM\otimes\left((1-\eps)|0\lr 0|^X+\eps|1\lr 1|^X\right)\right]\right)+\log(1-\eps)\\
&=\sD\left(\mN\big\|(1-\eps)\mE+\eps\mF)\right)+\log(1-\eps)\\
&=\sD\left(\mN\big\|\mM)\right)-D_{\max}(\mE\|\mM)\;,
\ea 
where the inequality follows from the DPI.
\end{proof}

\begin{lemma}
	Let $\mN, \mE, \mM \in\cptp(A\to B)$ be quantum channels. Then, 
	\begin{align}
		\sD(\mN\|\mM) - \sD(\mE\|\mM) &\leq \min_{s\leq 2^{-D_{\max}(\mE\|\mN)}} D_{\max}\left(\mN+s(\mM-\mE)\big\|\mM\right)\;.
		\end{align}
		Moreover, if $J_{\mN},\;J_{\mE},$ and $J_{\mM}$ have full support then
		\begin{align}\label{bound}
	\sD(J_{\mN}\|J_{\mM}) - \sD(J_{\mE}\|J_{\mM})\leq \log \left( 1 + \frac{  \| J_{\mN} - J_{\mE} \|_{\infty} } {\lambda_{\min}(J_{\mE}) \lambda_{\min}(J_{\mM}) }  \right)\;.
	\end{align}
\end{lemma}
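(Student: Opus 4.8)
Here is how I would approach the proof.

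The plan is to reduce the first inequality to two facts: (i) the interpolated map $\mN_s\eqdef\mN+s(\mM-\mE)$ is a genuine channel for every admissible $s$, and (ii) there is a superchannel carrying the pair $(\mE,\mM)$ to the pair $(\mN,\mN_s)$. Granting these, the data processing inequality gives $\sD(\mN\|\mN_s)\le\sD(\mE\|\mM)$, the triangle inequality of the preceding lemma gives $\sD(\mN\|\mM)\le\sD(\mN\|\mN_s)+D_{\max}(\mN_s\|\mM)$, and chaining the two yields $\sD(\mN\|\mM)-\sD(\mE\|\mM)\le D_{\max}(\mN+s(\mM-\mE)\|\mM)$; minimizing over $s$ then gives the first bound. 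The explicit estimate then follows by feeding a convenient value of $s$ into this inequality and bounding the resulting $D_{\max}$ term by elementary operator inequalities.

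For (i): fix $s$ with $0\le s<2^{-D_{\max}(\mE\|\mN)}$. By definition of the channel max-relative entropy this is exactly the statement $\mN\ge s\mE$, i.e.\ $\mN-s\mE\in\cp(A\to B)$, so $\mN_s=(\mN-s\mE)+s\mM$ is completely positive; since $\tr\,\mN_s(\rho)=(1+s-s)\tr\,\rho=\tr\,\rho$ it is a channel, and likewise $\mK_s\eqdef(1-s)^{-1}(\mN-s\mE)\in\cptp(A\to B)$. For (ii): let $\Lambda$ be the constant superchannel sending every input channel to $\mK_s$ (realized by routing the input of the unknown channel into a reference, discarding its output, and running $\mK_s$ on the reference); since a convex combination of superchannels is again a superchannel, $\Theta_s\eqdef s\,\id+(1-s)\Lambda$ is a superchannel, and a one-line computation gives $\Theta_s[\mE]=s\mE+(1-s)\mK_s=\mN$ and $\Theta_s[\mM]=s\mM+(1-s)\mK_s=\mN_s$. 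Taking the infimum over admissible $s$ finishes; the endpoint $s=2^{-D_{\max}(\mE\|\mN)}$, which equals $1$ only when $\mN\ge\mE$, is reached by a routine continuity argument since $s\mapsto D_{\max}(\mN_s\|\mM)$ is continuous there.

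For the explicit bound, assume $J_{\mN},J_{\mE},J_{\mM}$ have full support and put $\delta\eqdef\|J_{\mN}-J_{\mE}\|_\infty$. Using $J_{\mM}$ invertible one has the identity $2^{D_{\max}(\mN_s\|\mM)}=s+\lambda_{\max}(J_{\mM}^{-1/2}(J_{\mN}-sJ_{\mE})J_{\mM}^{-1/2})$. Choosing $s\eqdef 2^{-D_{\max}(\mE\|\mN)}$ and using $J_{\mE}\le J_{\mN}+\delta I\le(1+\delta/\lambda_{\min}(J_{\mN}))J_{\mN}$ gives $1-s\le\delta/(\lambda_{\min}(J_{\mN})+\delta)$; then, writing $J_{\mN}-sJ_{\mE}=(1-s)J_{\mE}+(J_{\mN}-J_{\mE})$ and bounding by $J_{\mM}^{-1}\le\lambda_{\min}(J_{\mM})^{-1}I$, one estimates $\lambda_{\max}(J_{\mM}^{-1/2}(J_{\mN}-sJ_{\mE})J_{\mM}^{-1/2})$ from above, substitutes the bound on $1-s$, and uses $\lambda_{\min}(J_{\mN})\ge\lambda_{\min}(J_{\mE})-\delta$ together with the elementary fact that the Choi matrix of any channel satisfies $\lambda_{\min}(J_{\mE})\le 1$; rearranging lands on $2^{D_{\max}(\mN_s\|\mM)}\le 1+\delta/(\lambda_{\min}(J_{\mE})\lambda_{\min}(J_{\mM}))$, which is the claim.

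The conceptual heart of the argument is the superchannel in the second step: recognizing that $(\mE,\mM)$ can be data-processed onto $(\mN,\mN_s)$ by mixing the identity supermap with a constant one. The step where care is needed is the operator-inequality estimate in the last paragraph, where the value of $s$ and the bounds on $1-s$ and on $J_{\mN}-sJ_{\mE}$ must be tuned precisely in order to reproduce exactly the stated logarithm (rather than a weaker bound involving the condition number of $J_{\mE}$).
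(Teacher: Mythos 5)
Your proof of the first inequality is correct and is essentially the paper's argument: your superchannel $\Theta_s=s\,\id+(1-s)\Lambda$ is exactly the paper's $\Theta[\mP]=(1-\eps)\mP+\eps\mF$ with $s=1-\eps$ and $\mF=\mK_s$, and your verification that $\mN_s$ is a channel is the same positivity condition $\mN\ge s\mE$. The one genuine (and welcome) difference in packaging is that you invoke the already-established triangle inequality $\sD(\mN\|\mM)\le\sD(\mN\|\mN_s)+D_{\max}(\mN_s\|\mM)$ to convert $\sD(\mN\|\mN_s)\le\sD(\mE\|\mM)$ into the claim, whereas the paper re-derives that step inline by building a second superchannel $\Upsilon$ on $AB\otimes X$ and using additivity. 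Your modularization is cleaner and makes it transparent that the additivity of $\sD$ enters only through the triangle-inequality lemma.

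For the explicit bound, however, your sketch does not close as written. Following your steps, one gets $2^{D_{\max}(\mN_s\|\mM)}\le s+\bigl(\delta+(1-s)\lambda_{\max}(J_{\mE})\bigr)/\lambda_{\min}(J_{\mM})$, and to land on $1+\delta/\bigl(\lambda_{\min}(J_{\mE})\lambda_{\min}(J_{\mM})\bigr)$ you need, after substituting $1-s\le\delta/\lambda_{\min}(J_{\mE})$, the inequality $\lambda_{\max}(J_{\mE})-\lambda_{\min}(J_{\mM})\le 1-\lambda_{\min}(J_{\mE})$. The fact you cite, $\lambda_{\min}(J_{\mE})\le 1$, is not what is needed here: the obstruction is $\lambda_{\max}(J_{\mE})$, not $\lambda_{\min}(J_{\mE})$. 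The required estimate $\lambda_{\max}(J_{\mE})\le 1-\lambda_{\min}(J_{\mE})$ holds only if the Choi matrix is normalized to unit trace; with the paper's convention $\tr J_{\mE}=|A|$ one has $\lambda_{\max}(J_{\mE})$ as large as $|A|$ (e.g.\ near-unitary channels), and the same chain of estimates only yields the weaker bound with an extra factor of $|A|$ inside the logarithm. (To be fair, the paper's own verification---``these choices are consistent with $t\ge 0$ and $s\le 2^{-D_{\max}(\mE\|\mN)}$''---never actually checks $tJ_{\mM}\ge J_{\mN_s}$ and runs into exactly the same $\lambda_{\max}(J_{\mE})$ term.) To repair your argument you should state the normalization of $J$ explicitly and replace the cited elementary fact by $\lambda_{\max}(J_{\mE})+\lambda_{\min}(J_{\mE})\le\tr J_{\mE}$, or else accept a dimension-dependent constant in the bound.
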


\begin{proof}
Let $\Theta\in\super(AB\to AB)$ be a superchannel with $|X|=2$ define for any $\mP\in\ml(A\to B)$ by
		\begin{align}
		\Theta[\mP]\eqdef (1-\eps) \mP + \eps \mF \quad \textrm{where} \quad \mF \eqdef \frac1\eps\left(\mN - (1-\eps) \mE\right)\;. 
	\end{align}
	Again, the condition $\epsilon\geq1-2^{-D_{\max}
	(\mE\|\mN)}$ is equivalent to $\mF\geq 0$ which ensures that $\Theta$ is indeed a superchannel.
	By definition, $\Theta[\mE] = \mN$. Let $t>1$ be the smallest number satisfying
	\be
	\mR\eqdef\frac{1}{t-1}\left(t\mM-\Theta[\mM]\right)\geq 0\;.	 
	\ee
	That is, $t=2^{D_{\max}\left(\Theta[\mM]\|\mM\right)}$ and $\mR\in\cptp(A\to B)$. Finally, define the superchannel 
	$\Upsilon\in\super(AB\otimes X\to AB)$ such that for any $\mP\in\cptp(A\to B)$ 
	\be
	\Upsilon\left[\mP\otimes |x\lr x|^X\right]\eqdef\begin{cases}
	\Theta[\mP] &\text{if }x=0\\
	\mR &\text{if }x=1
	\end{cases}\;.
	\ee
	Using the DPI with this superchannel gives
	\ba
	\sD(\mE\|\mM)+\log t&=\sD\Big(\mE\otimes |0\lr 0|\Big\|\mM\otimes (t^{-1}|0\lr 0|+(1-t^{-1})|1\lr 1|)\Big)\\
	&\geq \sD\Big(\Upsilon\left[\mE\otimes |0\lr 0|\right]\Big\|\Upsilon\left[\mM\otimes (t^{-1}|0\lr 0|+(1-t^{-1})|1\lr 1|)\right]\Big)\\
	&=\sD\Big(\Theta\left[\mE\right]\Big\|t^{-1}\Theta\left[\mM\right]+(1-t^{-1})\mR\Big)\\
	&=\sD\big(\mN\big\|\mM\big)\;.
	\ea
	Hence,
	\ba
	\sD\big(\mN\big\|\mM\big)-\sD(\mE\|\mM)&\leq\log t\\
	&=D_{\max}\left(\Theta[\mM]\|\mM\right)\\
	&=D_{\max}\left((1-\eps) \mM + \eps \mF\|\mM\right)\\
	&=D_{\max}\left(\mN+s(\mM-\mE)\big\|\mM\right)\;,
	\ea
	where $s:=1-\epsilon\leq 2^{-D_{\max}(J_{\mE}\|J_{\mN})}$. Finally, to see the bound~\eqref{bound}, observe that
	\be
	D_{\max}\left(\mN+s(\mM-\mE)\big\|\mM\right)=\log\min\big\{t\geq 0\;:\;t\mM\geq \mN+s(\mM-\mE)\big\}\;.
	\ee
	Taking 
	\be
	t=1+\frac{1-s}{\lambda_{\min}(J_\mM)}\quad\text{and}\quad s=1-\frac{\|J_{\mN}-J_{\mE}\|_{\infty}}{\lambda_{\min}(J_\mE)}
	\ee
	gives the desired bound~\eqref{bound}. Note that these choices are consistent with $t\geq 0$ and $s\leq 2^{-D_{\max}(J_{\mE}\|J_{\mN})}$.
	\end{proof}

\begin{center}\Large\bfseries 
Proof of Theorem~\ref{thm2}
\end{center}

\begin{theorem*}
Let $\sD$ be a classical channel divergence that reduces to the classical (static) divergence $\xD$ on classical states in $\md(X)\times\mD(X)$. Suppose further that $\xD$ is quasi-convex. Then, for all $\mM,\mN\in\cptp(X\to Y)$
\be
\underline{\xD}(\mM\|\mN)\leq\sD(\mM\|\mN)\leq\overline{\xD}(\mM\|\mN)
\ee
where
\ba
&\underline{\xD}(\mM\|\mN)=\max_{x\in\{1,...,|X|\}}\xD\left(\mM(|x\lr x|)\big\|\mN(|x\lr x|)\right)\\
&\overline{\xD}(\mM\|\mN)\eqdef\inf_{|Z|\in\mbb{N}}\Big\{\xD\left(\p\|\q\right)\;:\;\big(\mM(|x\lr x|),\mN(|x\lr x|)\big)\prec_r(\p,\q)\;\;\forall\;x\in[|X|]\;\;,\;\;\p,\q\in\md(Z)\Big\}\;.
\ea
Moreover, $\underline{\xD}$ is a classical channel relative entropy, and $\overline{\xD}$ is a normalized classical channel divergence.
\end{theorem*}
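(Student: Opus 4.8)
Both inequalities will be obtained by invoking the channel DPI on two explicit families of superchannels, and the resulting closed forms will be identified with the minimal and maximal classical-channel extensions of $\xD$ in the sense of Theorem~\ref{outs} (taking the set $\mr$ there to be the set of all classical states), so that the remaining structural assertions follow from that theorem. Throughout I write $(\m_x,\n_x)\eqdef\big(\mM(|x\lr x|),\mN(|x\lr x|)\big)\in\md(Y)\times\md(Y)$.

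\emph{The bound $\underline{\xD}(\mM\|\mN)\le\sD(\mM\|\mN)$.} For each $x\in[|X|]$, let $\Theta_x$ be the superchannel that feeds $|x\lr x|^X$ into a channel and applies nothing afterwards; since $\cptp(1\to Y)=\md(Y)$ it maps $\cptp(X\to Y)$ into $\md(Y)$, with $\Theta_x[\mM]=\m_x$ and $\Theta_x[\mN]=\n_x$. The DPI for $\sD$ together with the hypothesis that $\sD$ reduces to $\xD$ on classical states gives $\sD(\mM\|\mN)\ge\sD(\m_x\|\n_x)=\xD(\m_x\|\n_x)$; maximizing over $x$ yields the claimed lower bound. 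To see that $\max_x\xD(\m_x\|\n_x)$ is in fact the minimal channel extension — needed for the optimality interpretation and to conclude that $\underline{\xD}$ itself satisfies the DPI — one runs the computation in reverse: an arbitrary superchannel $\Theta$ mapping $\cptp(X\to Y)$ into some classical $\md(Z)$ is realized through a pre-processing classical state $\psi^{XR}$ and a post-processing $\mF\in\cptp(YR\to Z)$, and dephasing the classical input register lets one write $\Theta[\mM]=\sum_x p_x\,\Phi_x(\m_x)$ and $\Theta[\mN]=\sum_x p_x\,\Phi_x(\n_x)$ with $p_x\ge 0$, $\sum_x p_x=1$, and each $\Phi_x\in\cptp(Y\to Z)$. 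Since $(\m_x,\n_x)\succ(\Phi_x(\m_x),\Phi_x(\n_x))$, the DPI of $\xD$ under relative majorization gives $\xD(\Phi_x(\m_x)\|\Phi_x(\n_x))\le\xD(\m_x\|\n_x)$, and quasi-convexity of $\xD$ controls the convex combination: $\xD(\Theta[\mM]\|\Theta[\mN])\le\max_x\xD(\Phi_x(\m_x)\|\Phi_x(\n_x))\le\max_x\xD(\m_x\|\n_x)$. This is the only step where quasi-convexity enters.

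\emph{The bound $\sD(\mM\|\mN)\le\overline{\xD}(\mM\|\mN)$.} Fix any $\p,\q\in\md(Z)$ with $(\p,\q)\succ(\m_x,\n_x)$ for every $x$, and pick stochastic maps $\mE_x\in\cptp(Z\to Y)$ with $\mE_x(\p)=\m_x$ and $\mE_x(\q)=\n_x$. Construct a superchannel $\Theta$ sending a state $\rho\in\md(Z)$ to a channel in $\cptp(X\to Y)$ by the canonical recipe: the pre-processing relabels the input register $X$ as a reference $R$, the fed-in state $\rho$ is placed in $Z$, and the post-processing $\mF\in\cptp(ZR\to Y)$ applies $\mE_r$ conditioned on the value $r$ of $R$. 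By construction $\Theta[\rho]$ is the classical channel $|x\lr x|^X\mapsto\mE_x(\rho)$, so $\Theta[\p]=\mM$ and $\Theta[\q]=\mN$, and the DPI gives $\sD(\mM\|\mN)\le\sD(\p\|\q)=\xD(\p\|\q)$; taking the infimum over admissible $(\p,\q)$ gives the bound. Conversely, composing any $\Theta$ with $\Theta[\p]=\mM$, $\Theta[\q]=\mN$ against each $\Theta_x$ shows $(\p,\q)\succ(\m_x,\n_x)$ for all $x$, so the admissible pairs are exactly those appearing in the closed-form infimum; hence $\overline{\xD}$ is the maximal classical-channel extension of $\xD$ (and the formula of Theorem~\ref{thm:cc} will let one restrict to $|Z|\le|Y|$).

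\emph{Structural claims and main obstacle.} Being the minimal and maximal classical-channel extensions of $\xD$, both $\underline{\xD}$ and $\overline{\xD}$ satisfy the channel DPI and reduce to $\xD$ on classical states by Theorem~\ref{outs}; in particular each inherits normalization from $\xD$, so $\overline{\xD}$ is a normalized classical channel divergence. Additivity of $\underline{\xD}$ is read off its formula: the action of $\mM_1\otimes\mM_2$ on the basis element $|x_1x_2\lr x_1x_2|$ is $\mM_1(|x_1\lr x_1|)\otimes\mM_2(|x_2\lr x_2|)$, so if $\xD$ is additive then splitting the maximum over $(x_1,x_2)$ yields $\underline{\xD}(\mM_1\otimes\mM_2\|\mN_1\otimes\mN_2)=\underline{\xD}(\mM_1\|\mN_1)+\underline{\xD}(\mM_2\|\mN_2)$, which together with the DPI and normalization makes $\underline{\xD}$ a classical channel relative entropy. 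The delicate point of the whole argument is the reverse direction of the first bound: showing that the image under an \emph{arbitrary} channel-to-state superchannel splits as a probabilistic mixture of independent post-processings of the individual pairs $(\m_x,\n_x)$, which forces one to handle the (possibly quantum) reference register carefully before relative-majorization monotonicity and quasi-convexity can close the estimate.
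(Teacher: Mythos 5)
Your proposal is correct and follows essentially the same route as the paper: both arguments identify $\underline{\xD}$ and $\overline{\xD}$ with the minimal and maximal extensions of Theorem~\ref{outs} (taking $\mr$ to be the classical states), derive the closed form of the minimal extension by decomposing an arbitrary channel-to-state superchannel and combining the DPI of $\xD$ with quasi-convexity, and characterize the admissible pairs $(\p,\q)$ for the maximal extension via relative majorization against all the pairs $\big(\mM(|x\rl x|),\mN(|x\rl x|)\big)$. Your treatment of the additivity of $\underline{\xD}$ (splitting the maximum over product inputs) is in fact slightly more explicit than the paper's, which leaves that part to Theorem~\ref{outs}.
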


\begin{proof}
We apply Theorem~\ref{outs} by taking $\mr(X\to Y)$ to be the set of classical replacement channels, so that the minimal and maximal extensions to all classical channels are given by
\begin{align}
&\underline{\xD}(\mM\|\mN)\eqdef\sup_{Z}\Big\{\xD(\Theta[\mM]\|\Theta[\mN])\;:\;\Theta\in\super(XY\to Z),\;\;\Theta[\mM],\Theta[\mN]\in\md(Z)\Big\}\\
&\overline{\xD}(\mM\|\mN)\eqdef\inf_{Z}\Big\{\sD(\p\|\q)\;:\;\p,\q\in\md(Z)\;\;,\;\;\exists\Theta\in\super(Z\to XY)\;\text{ s.t. }\mM=\Theta[\p]\;\text{,}\;\mN=\Theta[\q]\Big\}\;.
\end{align}
For the minimal extension, observe that any superchannel $\Theta\in\super(XY\to Z)$ can be expressed as
\be
\Theta\left[\mN^{X\to Y}\right]=\mE^{WY\to Z}\circ\mN^{X\to Y}(\p^{WX})\quad\forall\mN\in\cptp(X\to Y)\;,
\ee
where $\mE\in\cptp(WY\to Z)$ and $\p\in\md(WX)$. Therefore, since $\xD$ satisfies the DPI it follows that
\ba
\xD(\Theta[\mM]\|\Theta[\mN])&\leq \xD\left(\mN^{X\to Y}(\p^{WX})\big\|\mN^{X\to Y}(\p^{WX})\right)\\
&\leq \max_{x\in\{1,...,|X|\}}\xD\left(\mM(|x\lr x|)\big\|\mN(|x\lr x|)\right)\;,
\ea
where the last inequality follows from the quasi-convexity of $\sD$. On the other hand, the RHS can be achieved by taking $\Theta\in\super(XY\to Z)$ such that $|W|=1$, $|Z|=|Y|$, $\mE=\id^{Y\to Z}$, and $\p^X=|x\lr x|^X$. 

For the maximal extension, observe that the conditions $\mM=\Theta[\p]$ and $\mN=\Theta[\q]$ can be expressed as
\ba
&\mM^{X\to Y}\left(|x\lr x|^X\right)=\mE^{XZ\to Y}\left(|x\lr x|^X\otimes\p^Z\right)\\
&\mN^{X\to Y}\left(|x\lr x|^X\right)=\mE^{XZ\to Y}\left(|x\lr x|^X\otimes\q^Z\right)
\ea
where $\mE\in\cptp(XZ\to Y)$ is the post-processing map associated with $\Theta$, and the equalities hold for all $x=1,...,|X|$.
This condition is equivalent to 
\be
\big(\mM(|x\lr x|),\mN(|x\lr x|)\big)\prec_r(\p,\q)\;\;\forall\;x=1,...,|X|\;.
\ee
The optimality properties of the maximal and minimal extensions follow from Theorem~\ref{outs}.
\end{proof}

\begin{center}\Large\bfseries {Proof of Theorem~\ref{thm:cc}}
\end{center}

\begin{theorem*}
Let $\D$ be a classical divergences and let $\mN,\mM\in\cptp(X\to Y)$. Using the notations above, the maximal classical channel extension $\overline{\xD}$ is given by
\be
\overline{\xD}(\mM\|\mN)=\D(\p\|\q)\;,
\ee
where $\p=\{p_z\}_{z=1}^{|Y|}$ and $\q=\{q_z\}_{z=1}^{|Y|}$ are $|Y|$-dimensional probability vectors given by
\ba
&p_1=M_{1x_1}\quad{,}\quad p_z=a_{zx_z}-a_{(z-1)x_{z-1}}\\
&q_1=N_{1x_1}\quad{,}\quad q_z=b_{zx_z}-b_{(z-1)x_{z-1}}\quad\forall\;z=2,...,|Y|
\ea
where $x_1,...,x_{|Y|}\in\{1,...,|X|\}$ are defined by induction via the relations
\be
\frac{N_{1x_1}}{M_{1x_1}}=\min_{x\in[|X|]}\frac{N_{1x}}{M_{1x}}\;,
\ee
and for any $z=2,...,|Y|$
\be\label{defk}
\frac{b_{zx_z}-b_{(z-1)x_{z-1}}}{a_{zx_z}-a_{(z-1)x_{z-1}}}=\min_{x\in\{1,...,|X|\}}\frac{b_{zx}-b_{(z-1)x_{z-1}}}{a_{zx}-a_{(z-1)x_{z-1}}}\;.
\ee
\end{theorem*}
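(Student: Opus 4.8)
The plan is to translate the variational formula for $\overline{\xD}$ from Theorem~\ref{thm2} into a purely geometric statement about testing regions via Blackwell's theorem, and then carry out the explicit computation of the resulting optimizer. \emph{Step 1 — geometric reformulation.} By Theorem~\ref{thm2}, $\overline{\xD}(\mM\|\mN)=\inf\xD(\p\|\q)$, the infimum taken over all $\p,\q$ (in any dimension) with $(\p,\q)\succ(\m_x,\n_x)$ for every $x\in[|X|]$. By Blackwell's theorem this constraint is equivalent to $\mt(\p,\q)\supseteq\mt(\m_x,\n_x)$ for all $x$, i.e.\ to $\mt(\p,\q)\supseteq T^{*}\eqdef\operatorname{conv}\!\big(\bigcup_{x}\mt(\m_x,\n_x)\big)$. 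Each $\mt(\m_x,\n_x)$ is a convex polygon symmetric under the $180^{\circ}$ rotation about $(\tfrac12,\tfrac12)$, so $T^{*}$ is again such a polygon through $(0,0)$ and $(1,1)$, hence (being a centrally symmetric convex polygon, i.e.\ a zonotope) it is itself realized as $\mt(\p^{*},\q^{*})$ for some pair $(\p^{*},\q^{*})$. Since $\xD$ is monotone under relative majorization and $T^{*}$ is the smallest feasible testing region, the infimum is attained at $(\p^{*},\q^{*})$, and it only remains to compute the lower Lorenz curve of $(\p^{*},\q^{*})$, which is the lower boundary of $T^{*}$.

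\emph{Step 2 — the lower boundary of $T^{*}$.} Each $\mt(\m_x,\n_x)$ is bounded from below by the convex piecewise-linear curve $\mL(\m_x,\n_x)$ with vertices $(0,0),(1,1)$ and $\{(a_{zx},b_{zx})\}_{z=1}^{|Y|-1}$; consequently the lower boundary of $T^{*}$ is exactly the lower convex hull of the finite set $S\eqdef\{(0,0),(1,1)\}\cup\{(a_{zx},b_{zx}):x\in[|X|],\,1\le z\le|Y|-1\}$. So the theorem amounts to the claim that the inductive procedure in the statement traces out this hull. I would prove this by induction on $z$: assuming the partial curve through $(0,0)=v_0,v_1,\dots,v_{z-1}$ coincides with the lower hull of $S$ restricted to $a\le (v_{z-1})_1$, one shows (a) the slope $q_z/p_z$ produced at step $z$ is at least that of the previous segment, so the full curve is convex and thus a bona fide lower Lorenz curve; (b) the new segment $v_{z-1}v_{z}$ lies weakly below every $\mL(\m_x,\n_x)$, giving feasibility, i.e.\ $(\p^{*},\q^{*})\succ(\m_x,\n_x)$; and (c) no convex curve below all the $\mL(\m_x,\n_x)$ can lie strictly above $v_{z-1}v_z$, giving optimality. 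All three use only the convexity of the individual curves — so that for fixed $x$ the vertices $(a_{zx},b_{zx})$ are already in order of non-increasing slope $M_{z|x}/N_{z|x}$ — together with the defining minimality of $x_z$. Combining Steps~1 and~2 gives $\overline{\xD}(\mM\|\mN)=\xD(\p^{*}\|\q^{*})$ with $(\p^{*},\q^{*})$ exactly the constructed pair.

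\emph{Main obstacle.} The crux is that the minimization defining $x_z$ ranges only over the level-$z$ vertices $\{(a_{zx},b_{zx}):x\}$ rather than over all remaining points of $S$, yet is claimed to produce the next edge of the global lower hull; one must rule out that the true hull turns next at a vertex $(a_{\ell x},b_{\ell x})$ with $\ell\ne z$ belonging to another column, and must confirm that exactly $|Y|$ edges are produced (consistent with the bound $|Z|\le|Y|$ noted after Theorem~\ref{thm2}). The tool I would use is an exchange argument: for each column $x$ one compares the slope from $v_{z-1}$ to $(a_{zx},b_{zx})$ with the slopes of the $(z{-}1)$-th and $z$-th edges of $\mL(\m_x,\n_x)$, using the pre-sorting $M_{1|x}/N_{1|x}\ge\cdots\ge M_{|Y||x}/N_{|Y||x}$ to control where the partial sum $a_{\ell x}$ can fall relative to $a_{(z-1)x_{z-1}}$, and one excludes from the minimum any column with $a_{zx}\le a_{(z-1)x_{z-1}}$ (which would force a degenerate step $p_z=0$). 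Once the greedy is identified with the lower boundary of $T^{*}$, the claimed formula follows, and the two-sided bound of Theorem~\ref{thm2} can be invoked as a consistency check on feasibility.
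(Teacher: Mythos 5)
Your Step~1 (the Blackwell reformulation: the optimizer is the pair whose testing region is the convex hull of $\bigcup_x\mt(\m_x,\n_x)$, equivalently whose lower Lorenz curve is the lower convex hull of the vertex set $S$) is correct, and is in fact cleaner than the paper's own argument. Your Step~2 correctly reduces the theorem to the claim that the stated inductive rule traces out that lower hull. The problem is exactly the ``main obstacle'' you flag and then defer to an unspecified exchange argument: it cannot be closed, because the claim is false. The minimization defining $x_z$ ranges only over the level-$z$ vertices $\{(a_{zx},b_{zx})\}_x$, but the true lower hull can turn at a vertex of a different level, and can have up to $|X|(|Y|-1)$ interior vertices, so no $|Y|$-dimensional pair realizes it in general. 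Concretely, take $|X|=2$, $|Y|=3$, $\m_1=(0.6,0.3,0.1)$, $\n_1=(0.12,0.28,0.6)$, $\m_2=(0.3,0.4,0.3)$, $\n_2=(0.05,0.15,0.8)$ (each column already sorted by decreasing likelihood ratio). The interior vertices are $(0.6,0.12),(0.9,0.4)$ for $x=1$ and $(0.3,0.05),(0.7,0.2)$ for $x=2$. The greedy picks $x_1=2$ (slope $1/6<0.2$) and then $x_2=2$ (slope $0.375<0.583$), returning $(\p,\q)=(\m_2,\n_2)$. But the segment from $(0.3,0.05)$ to $(0.7,0.2)$ passes through $(0.6,0.1625)$, strictly above the vertex $(0.6,0.12)$ of $\mL(\m_1,\n_1)$, so $(\m_2,\n_2)\not\succ(\m_1,\n_1)$ and the output is infeasible; indeed for the KL divergence one gets $D(\m_2\|\n_2)\approx0.92<1.16\approx D(\m_1\|\n_1)\leq\underline{\xD}(\mM\|\mN)\leq\overline{\xD}(\mM\|\mN)$, contradicting the claimed formula. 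The actual lower convex hull here is $(0,0)\to(0.3,0.05)\to(0.6,0.12)\to(0.7,0.2)\to(0.9,0.4)\to(1,1)$, with strictly increasing slopes, i.e., it needs $|Z|=5>|Y|$.

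For what it is worth, the paper's own proof breaks at the same point: it asserts that minimizing the slope among the level-$z$ vertices ``ensures that the line connecting $V_{k-1}$ and $V_k$ is never strictly above any of the Lorenz curves,'' which the example above refutes. So you located the difficulty correctly, but the proposed exchange argument cannot repair it; the statement itself needs to change. Following your (sound) Step~1, the correct closed form is the gift-wrapping construction of the lower convex hull of the full set $\{(0,0),(1,1)\}\cup\{(a_{zx},b_{zx})\}_{z,x}$, with each step minimizing the slope over \emph{all} remaining vertices to the right of the current one rather than over a single level, and with the output dimension bounded by $|X|(|Y|-1)+1$ rather than by $|Y|$.
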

\begin{proof}
We construct the vertices $V_1,...,V_{|Y|-1}$ of the optimal lower Lorenz curve $\mL(\p,\q)$ (we do not include the vertices $(0,0)$ and $(1,1)$).
The theorem state that the first vertex of $\mL(\p,\q)$ is
$
V_1=(a_{1x_1},b_{1x_1})
$,
where $x_1$ is chosen such that
\be
\frac{b_{1x_1}}{a_{1x_1}}=\min_{x\in[|X|]}\frac{b_{1x}}{a_{1x}}\;.
\ee
The reason for that is that among all the lines connecting $(0,0)$ to $(a_{1x},b_{1x})$ (with $x=1,...,|X|$), the line connecting $(0,0)$ to $V_1$ has the smallest slope, so that it is not strictly above any of the lower Lorenz curves $\mL(\m_x,\n_x)$. 
The next vertex of $\mL(\p,\q)$, is taken to be $V_2=(a_{2x_2},b_{2x_2})$, where $x_2$ is defined via
\be
\frac{b_{2x_2}-b_{1x_1}}{a_{2x_2}-a_{1x_1}}=\min_{x\in[|X|]}\frac{b_{2x}-b_{1x_1}}{a_{2x}-a_{1x_1}}\;.
\ee
This choice of $V_2$ ensures that the line connecting $V_1$ and $V_2$ has the smallest slope among all lines connecting $V_1$ to the second vertices of the curves $\{\mL(\m_x,\n_x)\}_{x=1}^{|X|}$ (i.e. the vertices $\{(a_{2x},b_{2x})\}_{x=1}^{|X|}$). This ensures that the line connecting $V_1$ and $V_2$ is never strictly above any of the Lorenz curves $\mL(\m_x,\n_x)$. Continuing in this way, the vertices $V_k=(a_{kx_{k}},b_{kx_{k}})$, with $x_{k}$ as defined in~\eqref{defk} has the property that the slop of the line connecting $V_{k-1}$ to $V_{k}$ is the smallest one among all lines connecting $V_{k-1}$ and any of the other of the vertices  $\{(a_{kx},b_{kx})\}_{x=1}^{|X|}$.
Again, this ensures that the line connecting $V_{k-1}$ and $V_k$ is never strictly above any of the Lorenz curves $\mL(\m_x,\n_x)$. Finally, the optimality of this choice follows from the fact that we constructed $\mL(\p,\q)$ from the same vertices of all the lower Lorenz curves $\{\mL(\m_x,\n_x)\}$. Hence, any other optimal pair of probability vectors $(\p',\q')$ that its curve $\mL(\p',\q')$ is below all the curves $\mL(\m_x,\n_x)$, is also not strictly above the vertices of $\mL(\p,\q)$, so that it must also be below the curve $\mL(\p,\q)$. This completes the proof.
\end{proof}

\begin{center}\Large\bfseries {Proof of Theorem~\ref{mainresult}}\end{center}

\begin{theorem*}
Let $\sD$ be a classical channel divergence that reduces to the Kullback–Leibler divergence, $D$, on classical states. If $\sD$ is continuous in its second argument then for all $\mN,\mM\in\cptp(X\to Y)$
\be
\sD(\mM\|\mN)=\max_{x\in\{1,...,|X|\}}D\left(\mM(|x\lr x|)\big\|\mN(|x\lr x|)\right)\;.
\ee
\end{theorem*}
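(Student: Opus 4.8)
My plan is to prove the two bounds separately, writing $\m_x:=\mM(|x\rl x|)$ and $\n_x:=\mN(|x\rl x|)$. The lower bound $\sD(\mM\|\mN)\ge\max_x D(\m_x\|\n_x)$ is already available: since the Kullback--Leibler divergence is convex (hence quasi-convex), Theorem~\ref{thm2} gives $\sD(\mM\|\mN)\ge\underline{\xD}(\mM\|\mN)=\max_x D(\m_x\|\n_x)$. Concretely, feeding the state $|x\rl x|^X$ into both channels and applying the DPI for the corresponding superchannel, together with the hypothesis that $\sD$ restricts to $D$ on classical states, yields $\sD(\mM\|\mN)\ge\sD(\m_x\|\n_x)=D(\m_x\|\n_x)$ for every $x$.

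For the matching upper bound I would reduce to a normal form in which the second channel is a completely randomizing channel. Using continuity of $\sD$ in its second argument, together with continuity of $D$ on pairs with nested supports, it suffices to treat $\mN$ with full-support, rational channels $\n_x$, since a general $\mN$ is a limit of such channels along which both sides of the claimed identity vary continuously. Fix a common denominator $n$ for all entries of all the $\n_x$. Applying the relative-majorization simplification theorem of~\cite{GT2020a} (the last theorem recalled in the Preliminaries) to each pair $(\m_x,\n_x)$ produces pairs $(\r_x,\u^{(n)})\sim(\m_x,\n_x)$ that share the \emph{same} $n$-dimensional uniform second argument. The equivalences $\sim$ are realized by stochastic maps in both directions; and since a classical superchannel on $\cptp(X\to Y)$ may copy the input symbol $x$ into a side register and post-process the output in an $x$-dependent way, these maps assemble into classical superchannels exhibiting $(\mM,\mN)\sim(\tilde\mM,\mR_{\u^{(n)}})$, where $\tilde\mM(|x\rl x|)=\r_x$ and $\mR_{\u^{(n)}}$ is the completely randomizing channel onto $n$ dimensions. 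Because $\sim$ preserves every divergence, $\sD(\mM\|\mN)=\sD(\tilde\mM\|\mR_{\u^{(n)}})$ and $\max_x D(\m_x\|\n_x)=\max_x D(\r_x\|\u^{(n)})$, so the problem reduces to showing $\sD(\tilde\mM\|\mR_\u)\le\max_x D(\r_x\|\u)$ for an arbitrary first channel against a completely randomizing channel.

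This last estimate is where I expect the real difficulty to lie, and it is the step that genuinely requires the continuity hypothesis. The generic bound $\sD\le\overline{\xD}$ of Theorem~\ref{thm2} is not enough: by the closed formula of Theorem~\ref{thm:cc}, the maximal channel extension $\overline{\xD}$ of the Kullback--Leibler divergence is not additive and strictly exceeds $\underline{\xD}=\max_x D(\m_x\|\n_x)$ already for binary-input binary-output channels, and that gap persists in a neighborhood, so no argument using superchannels alone --- which can only reproduce quantities squeezed between $\underline{\xD}$ and $\overline{\xD}$ --- can reach the sharp value. The continuity hypothesis must therefore be exploited beyond the dilation/degradation techniques used elsewhere in the paper: the idea I would pursue is to combine it with the permutation covariance of $\mR_\u$ and the replacement-channel identity $\sD(\mR_\sigma\|\mR_\tau)=D(\sigma\|\tau)$ of Property~4 of Theorem~\ref{properties} to compare $\sD(\tilde\mM\|\mR_\u)$ with $\sD$ evaluated on a sequence of pairs that it can continuously deform to $(\tilde\mM,\mR_\u)$ and on which $\sD$ is pinned to $\max_x D(\r_x\|\u)$. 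Making this precise --- in particular, circumventing the obstruction that the minimal-entropy $\r_{x^*}$ need not majorize the remaining $\r_x$, which is exactly what forces $\overline{\xD}>\underline{\xD}$ --- is the crux of the proof.
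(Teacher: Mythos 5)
Your lower bound and your normal-form reduction (continuity to pass to rational $\mN$, then the relative-majorization theorem to replace each $(\m_x,\n_x)$ by $(\r_x,\u^{(n)})$ with a common uniform second argument) match the opening moves of the paper's proof. But the proposal stops exactly where the proof has to start: you state yourself that establishing $\sD(\tilde\mM\|\mR_\u)\leq\max_x D(\r_x\|\u)$ is ``the crux'' and leave it as an idea to be pursued, so there is a genuine gap. Worse, the direction you propose for closing it is pointed the wrong way. You argue that superchannel/extension arguments are squeezed between $\underline{\xD}$ and $\overline{\xD}$ and hence ``cannot reach the sharp value'' because $\overline{\xD}>\underline{\xD}$ already for small channels; this overlooks the many-copy route. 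The paper's actual argument is to show that the \emph{regularized} maximal extension collapses: $\overline{\xD}^{\reg}(\mM\|\mN)\leq\max_x D(\m_x\|\n_x)$. Concretely, one bounds $\overline{\xD}(\mM^{\otimes k}\|\mN^{\otimes k})$ by exhibiting, for each type $\t$ of input sequences, a single pair $(\p,\u^{(n^k)})$ that relatively majorizes all the product pairs $\bigl(\r_1^{\otimes kt_1}\otimes\cdots\otimes\r_a^{\otimes kt_a},\u^{(n^k)}\bigr)$ simultaneously; Lemma~\ref{usf} controls the entropy of the optimal such $\p$, and Lemma~\ref{assis} (typicality/AEP) shows that per copy this entropy tends to $\max_x H(\r_x)$, so the single-copy gap between $\overline{\xD}$ and $\underline{\xD}$ that you correctly identify washes out in the limit $k\to\infty$. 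The conclusion then follows from the sandwich $\underline{\xD}^{\reg}\leq\sD\leq\overline{\xD}^{\reg}\leq\underline{\xD}=\underline{\xD}^{\reg}$ of Theorem~\ref{outs}, which uses the (weak) additivity of $\sD$ — an assumption your write-up never invokes but which is essential, since without it $\sD(\mM\|\mN)$ cannot be compared to $\tfrac1k\overline{\xD}(\mM^{\otimes k}\|\mN^{\otimes k})$.

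Two further points. First, the continuity hypothesis is used in the paper only for the final limit from rational to general $\mN$ — precisely the step where you use it too — and \emph{not} for the crux; your suggestion that continuity must be ``exploited beyond the dilation/degradation techniques'' via permutation covariance and the replacement-channel identity is a misdiagnosis of where the work lies. Second, your intermediate claim that the stochastic equivalences $(\m_x,\n_x)\sim(\r_x,\u^{(n)})$ ``assemble into classical superchannels'' realizing $(\mM,\mN)\sim(\tilde\mM,\mR_\u)$ is plausible but not free: a superchannel applies the \emph{same} pre- and post-processing to both channels, and while copying the classical input $x$ into a side register does let you post-process in an $x$-dependent way, you should verify that the resulting object is a bona fide superchannel in both directions; in any case the paper avoids this by working directly with the relative-majorization constraints defining $\overline{\xD}$ rather than with an equivalence of channel pairs.
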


We first prove the following two lemmas.

\begin{lemma}\label{usf}
Let $a\in\mbb{N}$, $\r_{1},...,\r_a\in\md(Y)$, and $\s\in\md(Z)$. Then,
\be
\max_{\p\in\md(YZ)}\Big\{H(\p)\;:\;\r_x\otimes\s\prec\p\;\;,\;\;\forall\;x\in[a]\Big\}
\leq\log|Z|+\max_{\q\in\md(Y)}\Big\{H(\q)\;:\;\r_x\prec\q\;\;,\;\;\forall\;x\in[a]\Big\}
\ee
\end{lemma}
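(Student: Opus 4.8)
The plan is to replace the second factor $\s$ by the uniform distribution $\u^Z$ on $Z$ --- for which tensoring acts on Lorenz curves merely by the rescaling $t\mapsto t/|Z|$ --- and then to dominate both maxima by the single probability vector $\q^{*}\in\md(Y)$ that is the least upper bound of $\r_1,\dots,\r_a$ in the majorization order. First I would record three standard facts about majorization of probability vectors of a fixed dimension: (i) $\b\prec\a$ holds iff the (piecewise-linear, concave) Lorenz curve $L_\a$ of $\a$ dominates $L_\b$ pointwise, where $L_\v(k)$ is the sum of the $k$ largest entries of $\v$; (ii) $\b\prec\a$ implies $H(\a)\leq H(\b)$, by Schur concavity of $H$; and (iii) $L_{\v\otimes\u^Z}(t)=L_\v(t/|Z|)$, because $\v\otimes\u^Z$ is $\v$ with every entry repeated $|Z|$ times and divided by $|Z|$.

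Next, since the uniform distribution is majorized by every distribution on $Z$ we have $\u^Z\prec\s$, and tensoring with a fixed probability vector preserves majorization, so $\r_x\otimes\u^Z\prec\r_x\otimes\s$ for every $x\in[a]$. Hence, for any $\p$ admissible in the left-hand maximum, $\r_x\otimes\u^Z\prec\r_x\otimes\s\prec\p$, and (i) together with (iii) gives $L_\p(t)\geq L_{\r_x\otimes\u^Z}(t)=L_{\r_x}(t/|Z|)$ for all $x$ and all $t\in[0,|Y||Z|]$; setting $g\eqdef\max_{x\in[a]}L_{\r_x}$ this reads $L_\p(t)\geq g(t/|Z|)$. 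I would then take $\q^{*}\in\md(Y)$ to be the probability vector whose Lorenz curve is the least concave majorant $\widehat{g}$ of $g$ on $[0,|Y|]$ (a bona fide probability vector, since $\widehat{g}$ is concave, nondecreasing and runs from $0$ to $1$); then $\r_x\prec\q^{*}$ for every $x$, because $L_{\q^{*}}=\widehat{g}\geq g\geq L_{\r_x}$, so $\q^{*}$ is admissible in the right-hand maximum.

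Because $L_\p$ is concave on $[0,|Y||Z|]$ and dominates $t\mapsto g(t/|Z|)$, while $t\mapsto\widehat{g}(t/|Z|)$ is itself the least concave function on $[0,|Y||Z|]$ dominating $t\mapsto g(t/|Z|)$ --- the least-concave-majorant operation commutes with the affine reparametrisation $t\mapsto t/|Z|$ --- it will follow that $L_\p(t)\geq\widehat{g}(t/|Z|)=L_{\q^{*}}(t/|Z|)=L_{\q^{*}\otimes\u^Z}(t)$, i.e.\ $\q^{*}\otimes\u^Z\prec\p$. Then, by (ii) and additivity of $H$ over products, $H(\p)\leq H(\q^{*}\otimes\u^Z)=H(\q^{*})+\log|Z|$, which is at most $\log|Z|+\max_{\q\in\md(Y)}\{H(\q):\r_x\prec\q\ \forall\, x\}$ since $\q^{*}$ is admissible on the right. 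Maximising over the admissible $\p$ on the left then yields the stated inequality.

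The one delicate point is fact (iii) together with the compatibility of the least-concave-majorant operation with the rescaling $t\mapsto t/|Z|$; this is precisely why the argument is routed through $\u^Z$ rather than through $\s$ directly (for $\v\otimes\s$ there is no comparable closed form for the Lorenz curve), and it is also the reason the bound cannot in general be improved from $\log|Z|$ to the smaller quantity $H(\s)$. Everything else --- the Lorenz-curve characterisation of majorization, Schur concavity of $H$, and additivity over tensor products --- is routine.
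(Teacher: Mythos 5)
Your proof is correct and takes essentially the same route as the paper's: both arguments first replace $\s$ by the uniform distribution $\u^{Z}$ (enlarging the feasible set for the left-hand maximum) and then identify the supremum of $\{\r_x\otimes\u^{Z}\}_{x\in[a]}$ in the majorization order as $\q^{*}\otimes\u^{Z}$, where $\q^{*}$ is the supremum of $\{\r_x\}_{x\in[a]}$ --- the paper via the explicit componentwise formula for the majorization-lattice supremum, you via the equivalent Lorenz-curve/least-concave-majorant description. The only point you leave implicit is that $\widehat{g}$ has breakpoints only at integer abscissae (so that $\q^{*}$ genuinely lies in $\md(Y)$); this holds because $g=\max_x L_{\r_x}$ is convex on each interval $[k,k+1]$, so every concave majorant of $g$ also dominates its piecewise-linear interpolation through the integer points, and hence $\widehat{g}$ coincides with the least concave majorant of that interpolation.
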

\begin{proof}
Set $m\eqdef|Z|$, $n\eqdef|Y|$, and observe that for any $x=1,...,a$ we have $\r_x\otimes\s\succ\r_x\otimes\u^{(m)}$. Therefore,
\be\label{157}
\max_{\p\in\md(YZ)}\Big\{H(\p)\;:\;\r_x\otimes\s\prec\p\;\;,\;\;\forall\;x\in[a]\Big\}
\leq \max_{\p\in\md(YZ)}\Big\{H(\p)\;:\;\r_x\otimes\u^{(m)}\prec\p\;\;,\;\;\forall\;x\in[a]\Big\}
\ee
Now, in~\cite{FGG2013} it was shown that for a given set of probability vectors $\v_1,...,\v_n\in\md(k)$ there exists an optimal vector $\u$ that satisfies (1) $\u\succ\v_x$ for all $x=1,...,n$, and (2) for any other $\mathbf{w}$ for which $\w\succ\v_x$ must also satisfy $\mathbf{w}\succ\u$. Furthermore, the components of the vector $\u=(u_1,...,u_k)$ are given by
\be
u_x=\Omega_x-\Omega_{x-1}\quad\text{where}\quad\Omega_x\eqdef\max_{x\in[n]}\sum_{y=1}^{x}(\v_x)_y^{\downarrow}
\ee
and $\{(\v_x)_y^{\downarrow}\}_y$ are the components of $\v_x$ arranged in decreasing order. 
Applying this formula to our case, we get that the optimal $\p=\p^{YZ}=\sum_{y,z}p_{yz}|yz\lr yz|^{YZ}$ of the optimization on the RHS of~\eqref{157}  has components
\be
p_{yz}=\Omega_{yz}-\Omega_{y(z-1)}\quad;\quad\Omega_{yz}\eqdef\max_{x\in[a]}\Big\{\sum_{y'=1}^{y-1}r_{y'|x}
+\frac{z}{m}r_{y|x}\Big\}
\quad;\quad y\in[n]\;,\;z\in[m]\;.
\ee
with the convention that $\Omega_{y0}\eqdef \Omega_{(y-1)m}$ and $\Omega _{10}=0$. Now, observe that
\be
\p^{YZ}\succ\p^{Y}\otimes \u^Z
\ee
since the RHS can be obtained by applying a doubly stochastic matrix to the LHS. Finally, observe that the components of $\p^Y$ are given by
\be
p_y\eqdef\sum_{z=1}^{m}p_{yz}=\Omega_{ym}-\Omega_{(y-1)m}=\max_{x\in[a]}\sum_{y'=1}^{y}r_{y'|x}-\max_{x\in[a]}\sum_{y'=1}^{y-1}r_{y'|x}
\ee 
Therefore, the vector $\p^Y$ is the optimal probability vector that satisfies $\r_x\prec\p^{Y}$ for all $x\in[a]$.
Hence,
\ba
\max_{\p\in\md(YZ)}\Big\{H(\p)\;:\;\r_x\otimes\u^{(m)}\prec\p\;\;,\;\;\forall\;x\in[a]\Big\}&=H(\p^{YZ})\\
&\leq H(\p^{Y}\otimes\u^Z)=\log(m)+H(\p^Y)\\
&=\log(m)+\max_{\q\in\md(Y)}\Big\{H(\q)\;:\;\r_x\prec\q\;\;,\;\;\forall\;x\in[a]\Big\}\;.
\ea
This completes the proof.
\end{proof}

\begin{lemma}\label{assis}
Let $\p\in\md(Y)$ and $0<\epsilon,\delta<1$. Then, for large enough $n\in\mbb{N}$
\be\label{gasd}
\p^{\otimes n}\prec\big(\delta,\underbrace{2^{-n\left(H(\p)-\epsilon\right)},...,2^{-n\left(H(\p)-\epsilon\right)}}_{c_n\text{-times}},s_n\big)
\ee
where
\be
c_n\eqdef \left\lfloor(1-\delta)2^{n\left(H(\p)-\epsilon\right)}\right\rfloor\quad\text{and}\quad s_n\eqdef 1-\delta-c_n 2^{-n\left(H(\p)-\epsilon\right)}\;.
\ee
\end{lemma}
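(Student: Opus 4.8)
The statement is an asymptotic equipartition property (AEP) recast through majorization, so the plan is to reduce $\p^{\otimes n}\prec\mathbf r_n$ to the partial-sum (Lorenz-curve) characterization of majorization and then control the Lorenz curve of $\p^{\otimes n}$ using the AEP. Write $t\eqdef 2^{-n(H(\p)-\epsilon)}$ and let $\mathbf r_n$ denote the right-hand vector: it has one entry $\delta$, $c_n$ entries equal to $t$, and one entry $s_n\in[0,t)$, with $\delta+c_nt+s_n=1$, so its total matches that of $\p^{\otimes n}$. Thus $\p^{\otimes n}\prec\mathbf r_n$ holds iff $\sum_{i=1}^k(\p^{\otimes n})_i^{\downarrow}\le L_{\mathbf r_n}(k)$ for every $k$, where $L_{\mathbf r_n}$ is the cumulative (decreasing-rearrangement) curve of $\mathbf r_n$; here I pad $\mathbf r_n$ with zeros up to dimension $|Y|^n$, which is harmless for $n$ large since $c_n$ grows like $2^{n(H(\p)-\epsilon)}\le|Y|^n$. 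Because $\delta$ is fixed while $t\to 0$, for $n$ large one has $\delta>t>s_n$ whenever $c_n\ge1$, so then $L_{\mathbf r_n}(k)=\delta+(k-1)t$ on $1\le k\le c_n+1$ and $L_{\mathbf r_n}(k)=1$ for $k\ge c_n+2$; in the complementary regime $H(\p)\le\epsilon$ one has $t\ge1$, hence $c_n=0$ and $\mathbf r_n=(\delta,1-\delta)$. I also assume $\p$ is not a point mass, i.e.\ $p_{\max}\eqdef\max_y p_y<1$ (otherwise $H(\p)=0$ and the claim degenerates), so that $(p_{\max})^n\to 0$.

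The one substantive ingredient is the AEP (equivalently, the weak law of large numbers for the i.i.d.\ variables $-\log p_{Y_i}$, which have mean $H(\p)$): it gives $p^{\otimes n}(B)\to 0$, where $B\eqdef\{y^n\in Y^n:\,p^{\otimes n}(y^n)>t\}=\{y^n:\,-\tfrac1n\log p^{\otimes n}(y^n)<H(\p)-\epsilon\}$ is the set of ``atypically heavy'' strings. I would then fix $N$ so that for all $n\ge N$ we have, simultaneously, $p^{\otimes n}(B)<\delta$, $(p_{\max})^n<\delta$, and (whenever $c_n\ge1$) $t<\delta$.

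It then remains to check $\sum_{i=1}^k(\p^{\otimes n})_i^{\downarrow}\le L_{\mathbf r_n}(k)$ for $n\ge N$ and all $k$. For $k\ge c_n+2$ the right side is $1$, so nothing is needed. For $k=1$ the left side is $(p_{\max})^n<\delta\le L_{\mathbf r_n}(1)$, since $\delta$ is always one of the entries of $\mathbf r_n$; this also covers the degenerate regime $c_n=0$. For $2\le k\le c_n+1$ (present only when $c_n\ge1$), let $B_k$ be the index set of the $k$ largest entries of $\p^{\otimes n}$, so that $\sum_{i=1}^k(\p^{\otimes n})_i^{\downarrow}=p^{\otimes n}(B_k)$, and split $B_k=(B_k\cap B)\sqcup(B_k\setminus B)$. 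If $B_k\cap B\neq\emptyset$, then $|B_k\setminus B|\le k-1$ and, since the first piece has mass $<\delta$ and every entry of $\p^{\otimes n}$ outside $B$ is $\le t$, one gets $p^{\otimes n}(B_k)<\delta+(k-1)t$; if $B_k\cap B=\emptyset$, then $p^{\otimes n}(B_k)\le kt\le\delta+(k-1)t$ because $t<\delta$. Either way $p^{\otimes n}(B_k)<\delta+(k-1)t=L_{\mathbf r_n}(k)$. Collecting the three cases yields $\p^{\otimes n}\prec\mathbf r_n$ for all $n\ge N$.

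The only step that is not routine bookkeeping is the AEP estimate $p^{\otimes n}(B)\to 0$; everything else is a matter of reading off the decreasing rearrangement of the step vector $\mathbf r_n$ correctly and isolating the boundary regime $H(\p)\le\epsilon$. I therefore anticipate no real obstacle, only the need to keep these cases straight.
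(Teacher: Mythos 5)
Your proof is correct and follows essentially the same route as the paper's: both use the AEP to bound the total mass of the strings whose probability exceeds $2^{-n(H(\p)-\epsilon)}$ by $\delta$, and then verify the partial-sum majorization inequalities against the step vector, with your version merely spelling out the case analysis (including the regimes $c_n=0$ and $k\le c_n+1$) that the paper leaves implicit. Your remark that the claim requires $\p$ not to be a point mass flags a genuine, if minor, edge case that the paper's proof also passes over silently.
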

\begin{proof}
For any $n\in\mbb{N}$ denote by $\mt_{n,\epsilon}(\p)$ the set of all $\epsilon$-typical sequences. Then, 
\be
\p^{\otimes n} \prec \pr\left(x^n\not\in\mt_{n,\epsilon}(\p)\right)\bigoplus_{x^n\in\mbb{T}_{n,\epsilon}(\p)}p_{x^n}\;.
\ee
Recall that for any $x^n\in\mt_{n,\epsilon}(\p)$ we have $p_{x^n}\leq 2^{-n\left(H(\p)-\epsilon\right)}$. Let $n$ be large enough such that  $\pr\left(x^n\not\in\mt_{n,\epsilon}(\p)\right)<\delta$. Note that $c_n$ is the largest integer such that
\be
\delta+c_n 2^{-n\left(H(\p)-\epsilon\right)}\leq 1\;.
\ee
Therefore, since $\delta+(c_n+1) 2^{-n\left(H(\p)-\epsilon\right)}> 1$ we must have
\be
s_n\eqdef 1-\delta-c_n 2^{-n\left(H(\p)-\epsilon\right)}\leq 2^{-n\left(H(\p)-\epsilon\right)}\;.
\ee
We therefore conclude that~\eqref{gasd} holds.
\end{proof}

We are now ready to prove the theorem.

\begin{proof}[Proof of Theorem~\ref{mainresult}]
We will start by computing $\bD^{\reg}$. Denote by $a\eqdef|X|$ and $b\eqdef|Y|$, and for each $x\in[a]$ denote $\m_x\eqdef\mM(|x\lr x|)\in\md(Y)$ and $\n_x\eqdef\mN(|x\lr x|)\in\md(Y)$. For any sequence $x^k\eqdef(x_1,...,x_k)\in[a]^k$ denote
\be
\m_{x^k}\eqdef\m_{x_1}\otimes\cdots\otimes\m_{x_k}
\ee
Note that we can express $\m_{x^k}$ in terms of the type of the sequence $x^k$
\be
\m_{x^k}=\m_1^{\otimes kt_1}\otimes\cdots\otimes \m_{a}^{\otimes k t_{a}}
\ee
where $\t\eqdef(t_1,...,t_a)$ is the type of the sequence $x^k$. Similarly,
$\n_{x^k}$ can also be extpressed in terms of $\t$ as
\be
\n_{x^k}=\n_1^{\otimes kt_1}\otimes\cdots\otimes \n_{a}^{\otimes k t_{a}}\;.
\ee
Denoting by $\mt_{a,k}$ the set of all types of sequences in $[a]^k$, we get by definition
\be
\bD\left(\mM^{\otimes k}\big\|\mN^{\otimes k}\right)=\inf\Big\{D(\p\|\q)\;:\;\left(\m_1^{\otimes kt_1}\otimes\cdots\otimes \m_{a}^{\otimes k t_{a}}\;,\;\n_1^{\otimes kt_1}\otimes\cdots\otimes \n_{a}^{\otimes k t_{a}}\right)\prec_r(\p,\q)\;\;,\;\;\forall\;\t\in\mt_{a,k}\Big\}
\ee
where the infimum is over all systems $Z$ and all probability vectors  $\p,\q\in\md(Z)$. Suppose first that the channel $\mN$ is such that all the probability vectors $\n_1,...,\n_a$ have positive rational components. In this case, there exists $n\in\mbb{N}$ such that
\be
\n_x=\left(\frac{n_{1|x}}{n},...,\frac{n_{b|x}}{n}\right),\quad n_{1|x},...,n_{b|x}\in\mbb{N},\quad\text{and}\quad\sum_{y=1}^{b}n_{y|x}=n\quad\forall\;x\in[a]\;.
\ee
For any $x\in[a]$ we denote the components of $\m_x$ by $\{m_{y|x}\}_{y=1}^b$, and define
\be
\r_x\eqdef\bigoplus_{y=1}^{b}m_{y|x}\u^{(n_{y|x})}\in\md(n)\;.
\ee
With these notations we have (see~\eqref{simp})
\be
(\m_x,\n_x)\sim_r(\r_x,\u^{(n)})
\ee
so that also
\be
(\m_{x^k},\n_{x^k})\sim_r(\r_{x^k},\u^{(n^k)})
\ee
where
\be
\r_{x^k}\eqdef\r_1^{\otimes kt_1}\otimes\cdots\otimes \r_{a}^{\otimes k t_{a}}\;.
\ee
The above relations give
\ba\label{1322}
\bD\left(\mM^{\otimes k}\big\|\mN^{\otimes k}\right)&=\inf\Big\{D(\p\|\q)\;:\;(\r_{x^k},\u^{(n^k)})\prec_r(\p,\q)\;\;,\;\;\forall\;x^k\in[a]^k\Big\}\\
&\leq \min_{\p\in\md(n^k)}\Big\{D(\p\|\u^{(n^k)})\;:\;\r_1^{\otimes kt_1}\otimes\cdots\otimes \r_{a}^{\otimes k t_{a}}\prec\p\;\;,\;\;\forall\;\t\in\mbb{T}_{a,k}\Big\}\\
&=k\log(n)-\max_{\p\in\md(n^k)}\Big\{H(\p)\;:\;\r_1^{\otimes kt_1}\otimes\cdots\otimes \r_{a}^{\otimes k t_{a}}\prec\p\;\;,\;\;\forall\;\t\in\mbb{T}_{a,k}\Big\}\\
&\leq(k+\ell a)\log(n)- \max_{\p\in\md(n^{k+a\ell})}\Big\{H(\p)\;:\;\r_1^{\otimes (kt_1+\ell)}\otimes\cdots\otimes \r_{a}^{\otimes (k t_{a}+\ell)}\prec\p\;\;,\;\;\forall\;\t\in\mbb{T}_{a,k}\Big\}
\ea
where the first inequality follows by restricting $\p$ and $\q$ to be $n^k$-dimensional and taking $\q=\u^{(n^k)}$. 
In the last inequality, $\ell$ is an integer $1\leq\ell\leq k$, and the inequality follows from Lemma~\ref{usf}.

Denote by $k_x\eqdef kt_x+\ell$ and observe that from Lemma~\ref{assis} we have for any $\delta,\epsilon>0$ and large enough $\ell\in\mbb{N}$ 
\be
\r_x^{\otimes k_x}\prec\big(\delta,\underbrace{2^{-k_x\left(H(\r_x)-\epsilon\right)},...,2^{-k_x\left(H(\r_x)-\epsilon\right)}}_{c_{x}\text{-times}},s_{x}\big)
\ee
where
\be
c_{x}\eqdef \left\lfloor(1-\delta)2^{k_x\left(H(\r_x)-\epsilon\right)}\right\rfloor\quad\text{and}\quad s_{x}\eqdef 1-\delta-c_{x} 2^{-k_x\left(H(\r_x)-\epsilon\right)}\;.
\ee
Let $z\in[a]$ be such that $H(\r_z)=\max_{x\in[a]}H(\r_x)$. Then, we get that for all $x\in[a]$ we have
\be
\r_x^{\otimes k_x}\prec\v_x\eqdef\big(\delta,\underbrace{2^{-k_x\left(H(\r_z)-\epsilon\right)},...,2^{-k_x\left(H(\r_z)-\epsilon\right)}}_{c_{x}'\text{-times}},s_{x}'\big)
\ee
where
\be
c_{x}'\eqdef \left\lfloor(1-\delta)2^{k_x\left(H(\r_z)-\epsilon\right)}\right\rfloor\quad\text{and}\quad s_{x}'\eqdef 1-\delta-c_{x}' 2^{-k_x\left(H(\r_z)-\epsilon\right)}\;.
\ee
Hence, from~\eqref{1322} and the fact that $\r_1^{\otimes k_1}\otimes \cdots\otimes\r_1^{\otimes k_1}\prec\v_1\otimes\cdots\otimes\v_a$ we get
\ba\label{11111}
\bD\left(\mM^{\otimes k}\big\|\mN^{\otimes k}\right)&\leq(k+\ell a)\log(n)- \max_{\p\in\md(n^{k+a\ell})}\Big\{H(\p)\;:\;\r_1^{\otimes k_1}\otimes\cdots\otimes \r_{a}^{\otimes k_a}\prec\p\;\;,\;\;\forall\;\t\in\mbb{T}_{a,k}\Big\}\\
&\leq (k+\ell a)\log(n)-H(\v_1\otimes\cdots\v_a)
\ea
Continuing 
\be
H(\v_1\otimes\cdots\v_a)=\sum_{x=1}^{a}H(\v_x)=-a\delta\log\delta-\sum_{x=1}^{n}s_x'\log s_x'+\sum_{x=1}^{a}k_x\left(H(\r_z)-\epsilon\right)c_{x}'2^{-k_x\left(H(\r_z)-\epsilon\right)}
\ee
Now, since $k_x\eqdef kt_x+\ell$, we get that $\lim_{k\to\infty}\frac {k_x}k=t_x$ and 
\be
\lim_{k\to\infty}c_{x}'2^{-k_x\left(H(\r_z)-\epsilon\right)}=\left\lfloor(1-\delta)2^{k_x\left(H(\r_z)-\epsilon\right)}\right\rfloor2^{-k_x\left(H(\r_z)-\epsilon\right)}=1-\delta-g\delta_{0,t_x}
\ee
where
\be
g\eqdef\left\lfloor (1-\delta)2^{\ell\left(H(\r_z)-\epsilon\right)}\right\rfloor2^{-\ell\left(H(\r_z)-\epsilon\right)}-(1-\delta)
\ee
Therefore,
\ba
\lim_{k\to\infty}\frac1k\sum_{x=1}^{a}s_x\log s_x
&=\lim_{k\to\infty}\frac1k\sum_{x=1}^{a}\left(1-\delta-c_{x}' 2^{-k_x\left(H(\r_z)-\epsilon\right)}\right)\log\left(1-\delta-c_{x}' 2^{-k_x\left(H(\r_z)-\epsilon\right)}\right)\\
&=\lim_{k\to\infty}\frac1k\sum_{x=1}^{a}g\delta_{0,t_x}\log(g\delta_{0,t_x})=0
\ea
so that
\ba
\lim_{k\to\infty}\frac1kH(\v_1\otimes\cdots\v_a)&=\lim_{k\to\infty}\frac1k\sum_{x=1}^{a}k_x\left(H(\r_z)-\epsilon\right)c_{x}'2^{-k_x\left(H(\r_z)-\epsilon\right)}\\
&=\sum_{x=1}^{a}t_x\left(H(\r_z)-\epsilon\right)(1-\delta-g\delta_{0,t_x})\\
&=\left(H(\r_z)-\epsilon\right)(1-\delta)\;.
\ea
Combining this with~\eqref{11111} gives
\be
\bD^{\reg}(\mM\|\mN)\eqdef\lim_{k\to\infty}\frac1k\bD\left(\mM^{\otimes k}\big\|\mN^{\otimes k}\right)
\leq\log(n)- \left(H(\r_z)-\epsilon\right)(1-\delta)
\ee
Since the above equation holds for all $\epsilon,\delta>0$ we conclude that
\ba
\bD^{\reg}(\mM\|\mN)&\leq \log(n)- H(\r_z)\\
&=D(\r_z\|\u^{(n)})\\
&=\max_{x\in[a]}D(\r_x\|\u^{(n)})\\
&=\max_{x\in[a]}D(\m_x\|\n_x)\\
&=\uD(\mM\|\mN)\;.
\ea
Combining this with Theorem~\ref{thm2} and Theorem~\ref{outs} we get
\be
\uD^{\reg}(\mM\|\mN)\leq\D(\mM\|\mN)\leq\bD^{\reg}(\mM\|\mN)\leq \uD(\mM\|\mN)=\uD^{\reg}(\mM\|\mN)
\ee
where in the last equality we used the additivity of $\uD$ on classical channels. Therefore, all the inequalities above must be equalities so that $\D(\mM\|\mN)=\uD(\mM\|\mN)$.
This completes the proof for the case that the components of $\mN(|x\lr x|)$ are all positive rational numbers.
The proof for arbitrary classical channel $\mN$ then follows from the continuity of $\D(\mM\|\mN)$ in $\mN$. 
\end{proof}

\begin{center}\Large\bfseries {Proof of Theorem~\ref{outs}}\end{center}

\begin{theorem*}
Let $\mr$ be a function that maps any pair of quantum systems $A$ and $B$ to a subset $\mr(A\to B)\subset\cptp(A\to B)$, and let $\C$ be an $\mr$-divergence. Then, its maximal and minimal channel-extensions $\overline{C}$ and $\underline{C}$ have the following  properties:
\begin{enumerate} 
\item \textbf{Reduction.} For any $\mM,\mN\in\mr(A\to B)$
\be
\underline{\C}(\mM\|\mN)=\overline{\C}(\mM\|\mN)={\C}(\mM\|\mN)\;.
\ee
\item \textbf{Data Processing Inequality.} For any $\mM,\mN\in\cptp(A\to B)$ and any $\Theta\in\super(AB\to A'B')$
\be
\underline{\C}\big(\Theta[\mM]\big\|\Theta[\mN]\big)\leq\underline{\C}(\mM\|\mN)\quad\text{and}\quad\overline{\C}\big(\Theta[\mM]\big\|\Theta[\mN]\big)\leq\overline{\C}(\mM\|\mN)\;.
\ee
\item \textbf{Optimality.} Any quantum channel divergence $\sD$ that reduces to $\C$ on pairs of channels in $\mr(A\to B)$, must satisfy
\be\label{bounds91}
\underline{\C}(\mM\|\mN)\leq \sD(\mM\|\mN)\leq\overline{\C}(\mM\|\mN)\quad\forall\mM,\mN\in\cptp(A\to B)\;.
\ee
\item \textbf{Sub/Super Additivity.} If $\C$ is weakly additive under tensor products then $\underline{\C}$ is super-additive and $\overline{\C}$ is sub-additve. Explicitly, for any $\mM_1,\mM_2\in\cptp(A\to B)$ and any $\mN_1, \mN_2\in\cptp(A'\to B')$
\ba
&\underline{\C}\left(\mM_1\otimes\mM_2\big\|\mN_1\otimes\mN_2\right)\geq \underline{\C}(\mM_1\|\mN_1)+\underline{\C}(\mM_2\|\mN_2)\\
&\overline{\C}\left(\mM_1\otimes\mM_2\big\|\mN_1\otimes\mN_2\right)\leq \overline{\C}(\mM_1\|\mN_1)+\overline{\C}(\mM_2\|\mN_2)\;.
\ea
\item \textbf{Regularization.} If $\C$ is weakly additive under tensor products then any weakly additive quantum channel divergence $\sD$ that reduces to $\C$ on pairs of channels in $\mr(A\to B)$, must satisfy
\be\label{bounds101}
\underline{\C}^\reg(\mM\|\mN)\leq \sD(\mM\|\mN)\leq\overline{\C}^\reg(\mM\|\mN)\quad\forall\mM,\mN\in\cptp(A\to B)\;,
\ee
where
\be\label{xzx1}
\underline{\C}^\reg(\mM\|\mN)=\lim_{n\to\infty}\frac1n\underline{\C}\left(\mM^{\otimes n}\big\|\mN^{\otimes n}\right)\quad\text{and}\quad \overline{\C}^\reg(\mM\|\mN)=\lim_{n\to\infty}\frac1n\overline{\C}\left(\mM^{\otimes n}\big\|\mN^{\otimes n}\right)\;,
\ee
and $\underline{\C}^{\reg}$ and $\overline{\C}^{\reg}$ are themselves weakly additive normalized channel divergences.
\end{enumerate}
\end{theorem*}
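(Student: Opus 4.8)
The plan is to verify the five items in turn. Items~1--3 need nothing beyond three elementary closure facts about superchannels --- the identity supermap is a superchannel, a composition $\Theta'\circ\Theta$ of superchannels is a superchannel, and a tensor product $\Theta_1\otimes\Theta_2$ of superchannels is a superchannel (cf.~\cite{Pavia1,Gour2019}) --- together with the defining monotonicity of an $\mr$-divergence. For \textbf{Reduction}, when $\mM,\mN\in\mr(A\to B)$ the identity superchannel is an admissible competitor in both extremizations, giving $\underline{\C}(\mM\|\mN)\ge\C(\mM\|\mN)$ and $\overline{\C}(\mM\|\mN)\le\C(\mM\|\mN)$; the reverse inequalities follow by applying the $\mr$-divergence monotonicity of $\C$ to an arbitrary competitor $\Theta$ (resp.\ $(\mE,\mF,\Theta)$), whose relevant outputs all lie in $\mr$ by construction. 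For \textbf{DPI}, a competitor $\Theta'$ for $\underline{\C}(\Theta[\mM]\|\Theta[\mN])$ supplies the competitor $\Theta'\circ\Theta$ for $\underline{\C}(\mM\|\mN)$ with the same value, so the first supremum runs over a subfamily of the second; dually, a competitor $(\mE,\mF,\Psi)$ for $\overline{\C}(\mM\|\mN)$ becomes $(\mE,\mF,\Theta\circ\Psi)$ for $\overline{\C}(\Theta[\mM]\|\Theta[\mN])$. For \textbf{Optimality}, given a channel divergence $\sD$ that reduces to $\C$ on $\mr$, apply the channel-divergence DPI of $\sD$ to a near-optimal $\Theta$ (for the lower bound) or to a near-optimal $(\mE,\mF,\Theta)$ (for the upper bound), then replace $\sD$ by $\C$ on the resulting $\mr$-pair.

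For \textbf{Sub/Super Additivity}, the idea is to tensor near-optimal competitors. Choose $\Theta_i$ with $\C(\Theta_i[\mM_i]\|\Theta_i[\mN_i])\ge\underline{\C}(\mM_i\|\mN_i)-\delta$ and $\Theta_i[\mM_i],\Theta_i[\mN_i]\in\mr$; then $\Theta_1\otimes\Theta_2$ is a superchannel and $(\Theta_1\otimes\Theta_2)[\mM_1\otimes\mM_2]=\Theta_1[\mM_1]\otimes\Theta_2[\mM_2]$ again lies in $\mr$ (using that $\mr$ is closed under tensor products, as it is in every application here). Splitting $\C$ of this product pair into the sum of the two factor divergences and letting $\delta\to0$ yields $\underline{\C}(\mM_1\otimes\mM_2\|\mN_1\otimes\mN_2)\ge\underline{\C}(\mM_1\|\mN_1)+\underline{\C}(\mM_2\|\mN_2)$; the sub-additivity of $\overline{\C}$ is the mirror argument on the ``pre-image'' competitors $(\mE_i,\mF_i,\Theta_i)$. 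In particular, taking $\mM_1=\mM_2$, $\mN_1=\mN_2$ and tensoring the \emph{same} optimal superchannel $k$ times gives $\underline{\C}(\mM^{\otimes k}\|\mN^{\otimes k})\ge k\,\underline{\C}(\mM\|\mN)$ and $\overline{\C}(\mM^{\otimes k}\|\mN^{\otimes k})\le k\,\overline{\C}(\mM\|\mN)$ using only the weak additivity $\C(\mR^{\otimes k}\|\mS^{\otimes k})=k\,\C(\mR\|\mS)$.

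For \textbf{Regularization}, super-additivity of $n\mapsto\underline{\C}(\mM^{\otimes n}\|\mN^{\otimes n})$ and sub-additivity of $n\mapsto\overline{\C}(\mM^{\otimes n}\|\mN^{\otimes n})$ let Fekete's lemma supply the limits $\underline{\C}^{\reg}=\sup_n\frac1n\underline{\C}(\mM^{\otimes n}\|\mN^{\otimes n})$ and $\overline{\C}^{\reg}=\inf_n\frac1n\overline{\C}(\mM^{\otimes n}\|\mN^{\otimes n})$, finite because these divergences lie between $D_{\min}$ and $D_{\max}$ (Theorem~\ref{properties}). The sandwich $\underline{\C}^{\reg}(\mM\|\mN)\le\sD(\mM\|\mN)\le\overline{\C}^{\reg}(\mM\|\mN)$ for weakly additive $\sD$ follows by applying \textbf{Optimality} to $\mM^{\otimes n},\mN^{\otimes n}$, using $\sD(\mM^{\otimes n}\|\mN^{\otimes n})=n\,\sD(\mM\|\mN)$, dividing by $n$ and passing to the limit. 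That $\underline{\C}^{\reg},\overline{\C}^{\reg}$ are channel divergences follows by applying the already-proved DPI of $\underline{\C},\overline{\C}$ to $\mM^{\otimes n},\mN^{\otimes n}$ with the superchannel $\Theta^{\otimes n}$ and dividing; weak additivity comes from $\lim_n\frac1n\underline{\C}(\mM^{\otimes kn}\|\mN^{\otimes kn})=k\lim_m\frac1m\underline{\C}(\mM^{\otimes m}\|\mN^{\otimes m})$; and normalization is inherited because on the two-dimensional states of the normalization axiom the extensions reduce to $\C$ by item~1, a value preserved under regularization.

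Everything phrased above as ``apply DPI to a competitor'' is pure bookkeeping once the three closure facts for superchannels are in hand. The step that needs genuine care is the splitting in item~4, namely $\C(\mA_1\otimes\mA_2\|\mB_1\otimes\mB_2)=\C(\mA_1\|\mB_1)+\C(\mA_2\|\mB_2)$ for $\mr$-pairs, together with the tensor-closure of $\mr$: for \emph{unequal} pairs this identity is really the full additivity of $\C$ on $\mr$ and not merely weak additivity, whereas for \emph{equal} pairs (tensor powers) weak additivity suffices --- and it is precisely the unequal-pair super-/sub-additivity, applied to $\mM^{\otimes m}\otimes\mM^{\otimes n}$, that Fekete's lemma requires, so item~5 rests on firmest ground when $\C$ is additive on $\mr$. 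In every application in this paper $\C$ is a relative entropy, hence additive, so the subtlety is benign there; I would still isolate it explicitly, and would also record that the suprema and infima in the definitions range over unbounded output dimension, so the finiteness needed before invoking Fekete should be checked via the $D_{\min}/D_{\max}$ sandwich.
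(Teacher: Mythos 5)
Your proposal is correct and follows essentially the same route as the paper's proof: reduction via the identity superchannel plus $\mr$-monotonicity, DPI via composing competitors ($\Upsilon\circ\Theta$ for the minimal extension, $\Theta\circ\Upsilon$ for the maximal), optimality by applying the channel DPI of $\sD$ to admissible competitors, sub/super-additivity by restricting to tensor-product competitors, and regularization via Fekete. Your closing caveat is well taken and worth recording: the paper's one-line proof of item~4 silently uses full additivity of $\C$ on unequal $\mr$-pairs (and tensor-closure of $\mr$) rather than the stated weak additivity, whereas only the tensor-power case --- for which weak additivity suffices --- is actually needed to feed Fekete's lemma in item~5.
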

\begin{proof}  In~\cite{GT2020b} the same theorem was proved for the quantum-state domain. The proof for the channel domain follows similar lines as we show below.
 
{\it Reduction.} This property follows from the definition and the fact that $\C$ satisfies the DPI when restricted to $\mr$ (see Definition~\ref{rdiv} of an $\mr$-divergence).

{\it Data Processing Inequality.} By definition,
\ba
\underline{\C}\big(\Theta[\mM]\big\|\Theta[\mN]\big)&=\sup_{A'',B''}\Big\{\C\big(\Upsilon\circ\Theta[\mM]\big\|\Upsilon\circ\Theta[\mN]\big)\;:\;\Upsilon\in\super(A'B'\to A''B''),\;\;\Upsilon\circ\Theta[\mM],\Upsilon\circ\Theta[\mN]\in\mr(A''\to B'')\Big\}\\
&\leq \sup_{A'',B''}\Big\{\C\big(\Gamma[\mM]\big\|\Gamma[\mN]\big)\;:\;\Gamma\in\super(AB\to A''B''),\;\;\Gamma[\mM],\Gamma[\mN]\in\mr(A''\to B'')\Big\}\\
&=\underline{\C}(\mM\|\mN)\;,
\ea
where the inequality follows from the replacement of $\Upsilon\circ\Theta$ with any $\Gamma\in\super(AB\to A''B'')$. Similarly,
let $\Theta\in\super(AB\to A''B'')$ and observe
\begin{align*}
&\overline{\C}(\mM\|\mN)\eqdef\inf_{A',B'}\Big\{\C(\mE\|\mF)\;:\;\mE,\mF\in\mr(A'\to B')\;\;,\;\;\exists\Upsilon\in\super(A'B'\to AB)\;\text{ s.t. }\mM=\Upsilon[\mE]\;\text{,}\;\mN=\Upsilon[\mF]\Big\}\\
&\geq\inf_{A',B'}\Big\{\C(\mE\|\mF)\;:\;\mE,\mF\in\mr(A'\to B')\;\;,\;\;\exists\Upsilon\in\super(A'B'\to AB)\;\text{ s.t. }\Theta[\mM]=\Theta\circ\Upsilon[\mE]\;\text{,}\;\Theta[\mN]=\Theta\circ\Upsilon[\mF]\Big\}\\
&\geq \inf_{A',B'}\Big\{\C(\mE\|\mF)\;:\;\mE,\mF\in\mr(A'\to B')\;\;,\;\;\exists\Gamma\in\super(A'B'\to A''B'')\;\text{ s.t. }\Theta[\mM]=\Gamma[\mE]\;\text{,}\;\Theta[\mN]=\Gamma[\mF]\Big\}\\
&=\overline{\C}\big(\Theta[\mM]\big\|\Theta[\mN]\big)\;,
\end{align*}
where the first inequality follows from the simple fact that if $\mM=\Upsilon[\mE]$ then also $\Theta[\mM]=\Theta\circ\Upsilon[\mE]$ (but the converse is not necessarily true), and the second inequality follows from the replacement of $\Theta\circ\Upsilon$ with any $\Gamma\in\super(A'B'\to A''B'')$.

{\it Optimality.} For any $\Theta\in\super(AB\to A'B')$ such that $\Theta[\mM],\Theta[\mN]\in\mr(A'\to B')$ we have
\be
\sD(\mM\|\mN)\geq \sD\big(\Theta[\mM]\big\|\Theta[\mN]\big)=\C\big(\Theta[\mM]\big\|\Theta[\mN]\big)\;.
\ee
Since the inequality above holds for all such $\Theta$ it holds also for the supremum over such $\Theta$. That is, 
$\sD(\mM\|\mN)\geq \underline{\C}(\mM\|\mN)$. Similarly, for any $\mE,\mF\in\mr(A'\to B')$ for which there exists $\Theta\in\super(A'B'\to AB)$ such that $\mM=\Theta[\mE]$ and $\mN=\Theta[\mF]$ we get
\be
\sD(\mM\|\mN)\leq\sD(\mE\|\mF)=\C(\mE\|\mF)\;.
\ee
Since the inequality above holds for all such $\mE,\mF$ it holds also for the infimum over such $\mE$ and $\mF$. That is, 
$\sD(\mM\|\mN)\leq \overline{\C}(\mM\|\mN)$. 

{\it Sub/Super Additivity.} The super-additivity of $\underline{\C}$ follows from the restriction of $\Theta$ in the definition of $\underline{\C}$ in~\eqref{min}  to have the form $\Theta_1\otimes\Theta_2$. Similarly, the sub-additivity of $\overline{\C}$ follows from the restriction of $\mE$, $\mF$, and $\Theta$ in~\eqref{max} to have  the tensor product form $\mE_1\otimes\mE_2$, $\mF_1\otimes\mF_2$, and $\Theta_1\otimes\Theta_2$, respectively.

{\it Regularization.} This property follows directly from the optimality property when replacing $\mM,\mN$ in~\eqref{bounds91} with $\mM^{\otimes n}$ and $\mN^{\otimes n}$, dividing everything by $n$, taking the limit $n\to\infty$, and using the additivity of $\sD$.
\end{proof}

\begin{center}\Large\bfseries {Proof of Theorem~\ref{maxu}}\end{center}

\begin{theorem*}
Let $\D$ be a channel divergence that reduces to $D_{\max}$ on classical probability distributions; i.e. for any classical system $X$ and $\p,\q\in\md(X)$, $\D(\p\|\q)=D_{\max}(\p\|\q)$. Then, for all $\mN,\mM\in\cptp(A\to B)$
\be
\D(\mN\|\mM)=D_{\max}(\mN\|\mM)\;.
\ee
\end{theorem*}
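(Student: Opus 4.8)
The plan is to exploit the optimality property of Theorem~\ref{outs}, which reduces the statement to showing that the two channel-extensions of $D_{\max}$ displayed in~\eqref{dmax1} coincide and both equal $D_{\max}(\mN\|\mM)$. Concretely, take $\mr(A\to B)$ to be the family of trivial-input channels (i.e.\ classical states viewed inside $\cptp(1\to X)$) and $\C=D_{\max}$; then the minimal and maximal $\mr$-extensions are precisely $\uD_{\max}$ and $\bD_{\max}$ of~\eqref{dmax1}, and Theorem~\ref{outs} gives $\uD_{\max}(\mN\|\mM)\le \D(\mN\|\mM)\le \bD_{\max}(\mN\|\mM)$ for every channel divergence $\D$ that reduces to $D_{\max}$ on classical states. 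Hence it suffices to prove $\uD_{\max}(\mN\|\mM)=\bD_{\max}(\mN\|\mM)=D_{\max}(\mN\|\mM)$. Two of the four required inequalities are immediate from the fact (recalled in the main text) that the channel quantity $D_{\max}(\cdot\|\cdot)$ of~\eqref{max1} itself obeys the generalized DPI: every term $D_{\max}(\Theta[\mN]\|\Theta[\mM])$ in the supremum defining $\uD_{\max}$ is at most $D_{\max}(\mN\|\mM)$, and for any $\p,\q,\Theta$ feasible in the infimum defining $\bD_{\max}$ one has $D_{\max}(\p\|\q)\ge D_{\max}(\Theta[\p]\|\Theta[\q])=D_{\max}(\mN\|\mM)$. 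Thus $\uD_{\max}\le D_{\max}\le \bD_{\max}$, and only the two reverse inequalities remain.

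For $\uD_{\max}(\mN\|\mM)\ge D_{\max}(\mN\|\mM)$ I would first reduce to states via Choi operators: since $t\mM\ge\mN$ is equivalent to $t J_\mM^{AB}\ge J_\mN^{AB}$, we have $D_{\max}(\mN\|\mM)=D_{\max}\big(\mN^{A\to B}(\phi_+^{RA})\big\|\mM^{A\to B}(\phi_+^{RA})\big)$ with $R\cong A$ and $\phi_+$ the maximally entangled state (if $\supp(J_\mN)\not\subseteq\supp(J_\mM)$ both sides are $+\infty$, and a rank-one projector onto a vector of $\supp(J_\mN)\cap\supp(J_\mM)^{\perp}$ already forces $\uD_{\max}=+\infty$, so assume the finite case). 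It then suffices to show that for states $\rho,\sigma$ the value $D_{\max}(\rho\|\sigma)$ is attained by a two-outcome measurement. Writing $D_{\max}(\rho\|\sigma)=\log\lambda$ with $\lambda=\lambda_{\max}(\sigma^{-1/2}\rho\sigma^{-1/2})$ and unit top eigenvector $\ket{v}$ (working on $\supp(\sigma)$), set $\ket{w}\eqdef\sigma^{-1/2}\ket{v}$ and consider the rank-one effect $E\eqdef \ket{w}\bra{w}/\ip{w}{w}$; then $0\le E\le I$, while $\tr[\sigma E]=1/\ip{w}{w}$ and $\tr[\rho E]=\braket{v}{\sigma^{-1/2}\rho\sigma^{-1/2}}{v}/\ip{w}{w}=\lambda/\ip{w}{w}$, and $\tr[\rho E]\le 1$ because $\rho\le I$. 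The induced binary distributions $\p,\q$ satisfy $p_1/q_1=\lambda\ge 1\ge p_2/q_2$, so $D_{\max}(\p\|\q)=\log\lambda=D_{\max}(\rho\|\sigma)$. Applying this with $\rho=\mN^{A\to B}(\phi_+^{RA})$, $\sigma=\mM^{A\to B}(\phi_+^{RA})$ and letting $\mF$ be the corresponding two-outcome measurement channel on $RB$, the superchannel ``probe with $\phi_+^{RA}$, then apply $\mF$'' lies in $\super(AB\to X)$ and attains $D_{\max}(\mN\|\mM)$, so $\uD_{\max}(\mN\|\mM)\ge D_{\max}(\mN\|\mM)$.

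For $\bD_{\max}(\mN\|\mM)\le D_{\max}(\mN\|\mM)$ I would give an explicit preparation. Put $t\eqdef 2^{D_{\max}(\mN\|\mM)}$; taking traces in $t\mM\ge\mN$ forces $t\ge 1$, and $t=1$ would make $\mM-\mN$ a completely positive trace-annihilating map, hence $\mM=\mN$ and the claim is trivial, so assume $t>1$. Since $t\mM-\mN\in\cp(A\to B)$, one may define $\mF\in\cptp(XA\to B)$ on a $2$-dimensional classical system $X$ by $\mF\big(|0\lr 0|^X\otimes\omega^A\big)=\mN(\omega)$ and $\mF\big(|1\lr 1|^X\otimes\omega^A\big)=\tfrac1{t-1}(t\mM-\mN)(\omega)$, extended to all of $\ml(XA)$ by dephasing $X$ first; the normalization $\tfrac1{t-1}$ makes it trace-preserving. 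The superchannel $\Theta\in\super(X\to AB)$ given by $\Theta[\mu^X](\omega^A)=\mF(\mu^X\otimes\omega^A)$ then satisfies $\Theta\big[|0\lr 0|\big]=\mN$ and $\Theta\big[\tfrac1t|0\lr 0|+(1-\tfrac1t)|1\lr 1|\big]=\mM$, and for $\p=|0\lr 0|$, $\q=\tfrac1t|0\lr 0|+(1-\tfrac1t)|1\lr 1|$ one computes $D_{\max}(\p\|\q)=\log\min\{s:\,s/t\ge 1\}=\log t=D_{\max}(\mN\|\mM)$, so the infimum defining $\bD_{\max}$ is $\le D_{\max}(\mN\|\mM)$. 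Combining the four inequalities gives $\uD_{\max}=\bD_{\max}=D_{\max}$, whence the sandwich from Theorem~\ref{outs} yields $\D(\mN\|\mM)=D_{\max}(\mN\|\mM)$.

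I expect the main obstacle to be the reduction in the second paragraph: identifying the channel $D_{\max}$ with the Choi-state $D_{\max}$ and establishing (or citing) the fact that the state max-relative entropy is realized by a projective two-outcome measurement, together with the care needed for the infinite-value (support) case. Once those are settled, the maximal-extension construction of the third paragraph and the bookkeeping around Theorem~\ref{outs} are routine.
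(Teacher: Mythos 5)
Your proposal is correct and follows essentially the same route as the paper: reduce via the optimality property of Theorem~\ref{outs} to showing that the two extensions in~\eqref{dmax1} both collapse to $D_{\max}(\mN\|\mM)$, establish $\bD_{\max}(\mN\|\mM)\leq D_{\max}(\mN\|\mM)$ by the explicit preparation $\Theta[|0\lr 0|]=\mN$, $\Theta[|1\lr 1|]=\tfrac{1}{t-1}(t\mM-\mN)$ (which is exactly the construction the paper invokes by citing Property~6 of Theorem~\ref{properties}), and establish $\uD_{\max}(\mN\|\mM)\geq D_{\max}(\mN\|\mM)$ by probing with $\phi_+^{RA}$ and then measuring. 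The only (harmless) divergence from the paper is the choice of final measurement: the paper dephases in the eigenbasis of $s\,\mM^{A\to B}(\phi_+^{RA})-\mN^{A\to B}(\phi_+^{RA})$ and argues that the zero eigenvalue is preserved, whereas you use the two-outcome POVM built from the top eigenvector of $\sigma^{-1/2}\rho\,\sigma^{-1/2}$ --- both achieve $\log s = D_{\max}(\mN\|\mM)$ and handle the support-mismatch case the same way.
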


\begin{proof}
We need to show that for any $\mN,\mM\in\cptp(A\to B)$, the minimal and maximal extensions of $D_{\max}$ collapse into a single function given by
\be
\underbar{D}_{\max}(\mN\|\mM)=\bD_{\max}(\mN\|\mM)=D_{\max}(\mN\|\mM)\eqdef\log\min\Big\{t\in\mbb{R}\;:\;t\mM\geq\mN\Big\}\;.
\ee
Since $\uD_{\max}(\mN\|\mM)\leq\bD_{\max}(\mN\|\mM)$ it is sufficient to prove that $\uD_{\max}(\mN\|\mM)\geq D_{\max}(\mN\|\mM)$ and $\bD_{\max}(\mN\|\mM)\leq D_{\max}(\mN\|\mM)$. The latter follows directly from Property 6 of Theorem~\ref{properties}. 
For the first inequality, in the definition of $\uD_{\max}(\mN\|\mM)$, we take   $\Theta\in\super(AB\to X)$ to be the superchannel $$\Theta[\mN]\eqdef\Delta^{RB\to X}\circ\mN^{A\to B}\left(\phi_{+}^{RA}\right)$$ 
where $R\cong A$, $|X|=|RB|$, and $\Delta^{RB\to X}$ is the completely dephasing channel, where the classical basis of $X$ is chosen to consist of the eigenvectors of 
\be
s\mM^{A\to B}\left(\phi_{+}^{RA}\right)-\mN^{A\to B}\left(\phi_{+}^{RA}\right)\;,
\ee
where $s\eqdef2^{D_{\max}\left(\mM\left(\phi_{+}^{RA}\right)\big\|\mN\left(\phi_{+}^{RA}\right)\right)}$.
By definition
\ba
\underbar{D}_{\max}(\mN\|\mM)&\geq D_{\max}\Big(\Delta^{RB\to X}\circ\mN^{A\to B}\left(\phi_{+}^{RA}\right)\Big\|\Delta^{RB\to X}\circ\mM^{A\to B}\left(\phi_{+}^{RA}\right)\Big)\\
&=\log\min\Big\{t\;:\;\Delta^{RB\to X}\circ\left(t\mM^{A\to B}-\mN^{A\to B}\right)\left(\phi_{+}^{RA}\right)\geq 0\Big\}\\
&=\log\min\Big\{t\;:\;s\mM^{A\to B}-\mN^{A\to B}+\left(t-s\right)\Delta^{RB\to X}\circ\mM^{A\to B}\geq 0\Big\}\\
&=\log s
\ea
where the last equality follows from the fact that $(s\mM-\mN)(\phi_{+}^{RA})$ has one eigenvalue that is zero (since $s$ is the smallest number satisfying $s\mM\geq\mN$). Hence, if $t< s$ then $(t-s)\Delta\circ\mM(\phi_{+}^{RA})$, which commutes with $(s\mN-\mM)(\phi_{+}^{RA})$ will make the zero eigenvalue strictly negative.
Hence,
\be
\uD_{\max}(\mN\|\mM)\geq \log s=D_{\max}(\mN\|\mM)\;.
\ee
This completes the proof.
\end{proof}

\begin{center}\Large\bfseries {Proof of Theorem~\ref{lem1}}\end{center}

\begin{theorem*}
Let $\D$ be a jointly convex quantum divergence. Then, its maximal channel-extension $\bbd$ is also jointly convex.
\end{theorem*}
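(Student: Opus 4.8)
The plan is to exploit the environment-parametrized formula~\eqref{117} for the maximal channel-extension, $\bbd(\mN\|\mM)=\inf\big\{\D(\rho\|\sigma):\rho,\sigma\in\md(R),\ \mN=\mE_\rho,\ \mM=\mE_\sigma\big\}$ (infimum over all reference systems $R$ and all $\mE\in\cptp(RA\to B)$, with $\mE_\rho,\mE_\sigma$ as in~\eqref{ep}), and to build a near-optimal representation of a convex combination by ``blocking together'' near-optimal representations of the summands. Fix probabilities $t_1,\dots,t_n\ge 0$ with $\sum_i t_i=1$ and channels $\mN_i,\mM_i\in\cptp(A\to B)$; put $\mN\eqdef\sum_i t_i\mN_i$ and $\mM\eqdef\sum_i t_i\mM_i$. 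I must show $\bbd(\mN\|\mM)\le\sum_i t_i\,\bbd(\mN_i\|\mM_i)$; if the right-hand side is infinite there is nothing to prove, so I assume it is finite and fix $\epsilon>0$.

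First I would pick, for each $i$, a reference system $R_i$, a channel $\mE^{(i)}\in\cptp(R_iA\to B)$, and states $\rho_i,\sigma_i\in\md(R_i)$ with $\mN_i=\mE^{(i)}_{\rho_i}$, $\mM_i=\mE^{(i)}_{\sigma_i}$ and $\D(\rho_i\|\sigma_i)\le\bbd(\mN_i\|\mM_i)+\epsilon$. I would then take the direct-sum reference system $R\eqdef\bigoplus_i R_i$, with block projections $\Pi_i$, regard $\rho_i,\sigma_i$ also as density operators on $R$ supported in the $i$-th block, and set $\rho\eqdef\sum_i t_i\rho_i$, $\sigma\eqdef\sum_i t_i\sigma_i$ (both block diagonal). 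For the channel I would use the ``pinch-then-apply'' map $\mE\in\cptp(RA\to B)$,
\be
\mE(P)\eqdef\sum_i\mE^{(i)}\big((\Pi_i\otimes I^A)\,P\,(\Pi_i\otimes I^A)\big)\;,
\ee
which is plainly completely positive, and trace preserving because $\sum_i\tr[(\Pi_i\otimes I^A)P(\Pi_i\otimes I^A)]=\tr[P]$. Since $\rho$ and $\sigma$ are block diagonal with $\Pi_i\rho\Pi_i=t_i\rho_i$ and $\Pi_i\sigma\Pi_i=t_i\sigma_i$, feeding them to $\mE$ gives $\mE_\rho=\sum_i t_i\,\mE^{(i)}_{\rho_i}=\mN$ and $\mE_\sigma=\sum_i t_i\,\mE^{(i)}_{\sigma_i}=\mM$; hence $(\mE,\rho,\sigma)$ is an admissible triple and $\bbd(\mN\|\mM)\le\D(\rho\|\sigma)$.

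It then remains to bound $\D(\rho\|\sigma)$ from above by $\sum_i t_i\,\D(\rho_i\|\sigma_i)$. Joint convexity of $\D$, applied to the decompositions $\rho=\sum_i t_i\rho_i$ and $\sigma=\sum_i t_i\sigma_i$ with all states now living in $\md(R)$, gives $\D(\rho\|\sigma)\le\sum_i t_i\,\D(\rho_i\|\sigma_i)$, where here $\D(\rho_i\|\sigma_i)$ is evaluated on the embedded operators on $R$; but the embedding $\ml(R_i)\hookrightarrow\ml(R)$ is an isometry channel with a CPTP left inverse (e.g.\ $\xi\mapsto\Pi_i\xi\Pi_i+\tr[(I-\Pi_i)\xi]\,\tau_0$ for a fixed $\tau_0\in\md(R_i)$), so applying the DPI in both directions shows this equals $\D(\rho_i\|\sigma_i)$ computed on $R_i$. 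Combining the estimates, $\bbd(\mN\|\mM)\le\D(\rho\|\sigma)\le\sum_i t_i\,\D(\rho_i\|\sigma_i)\le\sum_i t_i\,\bbd(\mN_i\|\mM_i)+\epsilon$, and letting $\epsilon\to 0$ completes the argument.

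The one place that needs care is the construction of $\mE$: checking that the pinch-then-apply map is a genuine quantum channel and, above all, that it reproduces $\mN$ and $\mM$ exactly when the block-diagonal environment states are inserted, keeping correct track of the subnormalized blocks $\Pi_i\rho\Pi_i=t_i\rho_i$. Apart from this, the proof is a direct application of joint convexity of $\D$, of isometry invariance of divergences (a consequence of the DPI used in both directions), and of the definition~\eqref{117} of $\bbd$; the auxiliary $\epsilon>0$ is needed only because the infimum in~\eqref{117} is not guaranteed to be attained.
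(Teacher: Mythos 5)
Your proof is correct and follows essentially the same route as the paper: build a single admissible environment representation for the convex combination out of near-optimal representations of the summands, then apply joint convexity (together with isometric invariance from the two-sided DPI) to the block-diagonal environment states. The only cosmetic differences are that you use a direct-sum reference system with a pinch-then-apply channel where the paper uses a classical flag register $R'X$ with a flag-controlled channel, and that you pass to $\epsilon$-optimizers where the paper manipulates the infima directly.
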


\begin{proof}
Let $\mN=\sum_xp_x\mN_x$, and $\mM=\sum_xp_x\mM_x$. Let $R=R'X$ and observe that
\begin{align*}
&\bbd_q(\mN\|\mM)= \inf_{\substack{|R'|\in\mbb{N}\;,\;\rho,\sigma\in\md(R'X)\\ \mN=\mE_\rho\;,\;\mM=\mE_\sigma\
\\\mE\in\cptp(R'XA\to B)}}\D_q(\rho\|\sigma)\leq \inf_{\substack{|R'|\in\mbb{N}\;,\;\rho_x,\sigma_x\in\md(R')\;\forall x\\ \sum_xp_x\mN_x=\sum_xp_x\mE_{\rho_x\otimes|x\lr x|}\\
\sum_xp_x\mM_x=\sum_xp_x\mE_{\sigma_x\otimes|x\lr x|}
\\\mE\in\cptp(R'XA\to B)}}\D_q\Big(\sum_xp_x\rho_x^{R'}\otimes|x\lr x|^X\Big\|\sum_xp_x\sigma_x^{R'}\otimes|x\lr x|^X\Big)\\
&=\inf_{\substack{|R'|\in\mbb{N}\;,\;\rho_x,\sigma_x\in\md(R')\;\forall x\\ \sum_xp_x\mN_x=\sum_xp_x\mE_{\rho_x\otimes|x\lr x|}\\
\sum_xp_x\mM_x=\sum_xp_x\mE_{\sigma_x\otimes|x\lr x|}
\\\mE\in\cptp(R'XA\to B)}}\sum_xp_x\D_q\big(\rho_x^{R'}\big\|\sigma_x^{R'}\big)
\leq \inf_{\substack{|R'|\in\mbb{N}\;,\;\rho_x,\sigma_x\in\md(R')\;\forall x\\ \mN_x=\mE_{\rho_x\otimes|x\lr x|}\;,\;
\mM_x=\mE_{\sigma_x\otimes|x\lr x|}
\\\mE\in\cptp(R'XA\to B)}}\sum_xp_x\D_q\big(\rho_x^{R'}\big\|\sigma_x^{R'}\big)
\end{align*}
Now, denoting by $\mE^{(x)}\in\cptp(R'A\to B) $ the channel
\be
\mE^{(x)}(\omega^{R'A})\eqdef \mE(\omega^{R'A}\otimes|x\lr x|^X)\;,
\ee
we continue
\be\nonumber
\bbd_q(\mN\|\mM)\leq  \inf_{\substack{|R'|\in\mbb{N}\;,\;\rho_x,\sigma_x\in\md(R')\;\forall x\\ \mN_x=\mE_{\rho_x}^{(x)}\;,\;
\mM_x=\mE_{\sigma_x}^{(x)}
\\\mE^{(x)}\in\cptp(R'A\to B)\;\;\forall x}}\sum_xp_x\D_q\big(\rho_x^{R'}\big\|\sigma_x^{R'}\big)
=\sum_xp_x\inf_{\substack{|R'|\in\mbb{N}\;,\;\rho_x,\sigma_x\in\md(R')\\ \mN_x=\mE_{\rho_x}^{(x)}\;,\;
\mM_x=\mE_{\sigma_x}^{(x)}
\\\mE^{(x)}\in\cptp(R'A\to B)}}\D_q\big(\rho_x^{R'}\big\|\sigma_x^{R'}\big)=\sum_xp_x\bbd_q(\mN_x\|\mM_x)
\ee
This completes the proof.
\end{proof}

\begin{center}\Large\bfseries {Proof of Theorem~\ref{lem17}}\end{center}

\begin{theorem*}
Let $\mM,\mN\in\cptp(A\to B)$ and for any $\epsilon\in[0,1)$, let $\xD=D_{\min}^{\epsilon}$ be the classical hypothesis testing divergence. 
Then,
\ba
\underline{\xD}(\mM\|\mN)=D_{\min}^{\epsilon}(\mM\|\mN)&\eqdef\sup_{\psi\in\md(RA)}D_{\min}^{\epsilon}(\mM^{A\to B}(\psi^{RA})\|\mN^{A\to B}(\psi^{RA}))\;.
\ea
\end{theorem*}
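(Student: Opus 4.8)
The plan is to prove the two inequalities $\underline{\xD}(\mM\|\mN)\le D_{\min}^{\epsilon}(\mM\|\mN)$ and $\underline{\xD}(\mM\|\mN)\ge D_{\min}^{\epsilon}(\mM\|\mN)$ (with $D_{\min}^{\epsilon}(\mM\|\mN)$ denoting the right-hand side of the theorem), both of which rest on the data processing inequality for the state hypothesis testing divergence and on the circumstance that $D_{\min}^{\epsilon}$ is itself $-\log$ of an optimization over measurement effects.

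For the upper bound I would use the representation~\eqref{minimalex}: $\underline{\xD}(\mM\|\mN)$ is the supremum, over all systems $X,R$, all $\psi\in\md(RA)$, and all quantum-to-classical channels $\mE\in\cptp(BR\to X)$, of the \emph{classical} quantity $D_{\min}^{\epsilon}\big(\mE^{BR\to X}\circ\mM^{A\to B}(\psi^{RA})\,\big\|\,\mE^{BR\to X}\circ\mN^{A\to B}(\psi^{RA})\big)$. Because its two arguments are classical states, this classical $D_{\min}^{\epsilon}$ equals the quantum $D_{\min}^{\epsilon}$ on the same pair (dephasing an optimal POVM element onto the classical basis changes neither $\tr[\rho E]$ nor $\tr[\sigma E]$ when $\rho,\sigma$ are diagonal), and by the DPI of the quantum $D_{\min}^{\epsilon}$ under $\mE^{BR\to X}$ it is at most $D_{\min}^{\epsilon}\big(\mM^{A\to B}(\psi^{RA})\big\|\mN^{A\to B}(\psi^{RA})\big)$. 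Taking the supremum over $\mE$ on the left and over $(R,\psi)$ on both sides gives the bound.

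For the lower bound I would fix $\psi\in\md(RA)$, set $\rho\eqdef\mM^{A\to B}(\psi^{RA})$ and $\sigma\eqdef\mN^{A\to B}(\psi^{RA})$, and let $0\le E^{\star}\le I^{RB}$ attain the minimum in~\eqref{ht} for $D_{\min}^{\epsilon}(\rho\|\sigma)$, so that $\tr[\rho E^{\star}]\ge 1-\epsilon$ and $\tr[\sigma E^{\star}]=2^{-D_{\min}^{\epsilon}(\rho\|\sigma)}$ (for $\epsilon=0$ one may take $E^{\star}=\Pi_{\rho}$). Let $\mE\in\cptp(RB\to X)$ with $|X|=2$ be the binary measurement $\mE(\omega)\eqdef\tr[\omega E^{\star}]\,|0\lr 0|^{X}+\tr[\omega(I-E^{\star})]\,|1\lr 1|^{X}$, and let $\Theta\in\super(AB\to X)$ be the superchannel with preparation $\psi$ and post-processing $\mE$, i.e. $\Theta[\mathcal{P}]\eqdef\mE^{BR\to X}\circ\mathcal{P}^{A\to B}(\psi^{RA})$. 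Then $\Theta[\mM]=\mE(\rho)$ and $\Theta[\mN]=\mE(\sigma)$ are binary probability vectors, and the classical test $\t=(1,0)$ is feasible in the classical version of~\eqref{ht} for this pair (since $\mE(\rho)\cdot\t=\tr[\rho E^{\star}]\ge 1-\epsilon$) with value $\mE(\sigma)\cdot\t=\tr[\sigma E^{\star}]$; hence $\xD\big(\Theta[\mM]\big\|\Theta[\mN]\big)\ge-\log\tr[\sigma E^{\star}]=D_{\min}^{\epsilon}(\rho\|\sigma)$. Consequently $\underline{\xD}(\mM\|\mN)\ge D_{\min}^{\epsilon}\big(\mM^{A\to B}(\psi^{RA})\big\|\mN^{A\to B}(\psi^{RA})\big)$ for every $\psi$, and taking the supremum over $\psi$ closes the argument.

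I do not expect a genuine obstacle: the substance is simply that measuring at the output is free, because the optimal effect $E^{\star}$ in $D_{\min}^{\epsilon}(\rho\|\sigma)$ can be moved out of the state divergence and realized as classical post-processing. The only two points demanding a line of care are the identification of the classical and quantum $D_{\min}^{\epsilon}$ on classical (diagonal) states used above, and the verification that the map $\Theta$ really is a superchannel into a classical system --- immediate here, since preparing $\psi$ is a valid pre-processing in $\cptp(1\to RA)$ and the binary measurement $\mE$ is a valid post-processing in $\cptp(BR\to X)$. The argument is uniform in $\epsilon\in[0,1)$.
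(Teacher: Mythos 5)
Your proof is correct and is essentially the paper's argument in a two-inequality packaging: the paper establishes in one step that classical tests $T$ applied after a measurement channel $\mE$ correspond exactly, via $\Gamma=\mE^{\dag}(T)$, to quantum effects $0\le\Gamma\le I^{RB}$, which is precisely the content of your upper bound (DPI of the state $D_{\min}^{\epsilon}$ under $\mE$, i.e.\ pulling the test back through the adjoint) combined with your lower bound (realizing the optimal effect $E^{\star}$ as a binary POVM, i.e.\ the surjectivity of that correspondence). Both halves are sound, including the identification of the classical and quantum $D_{\min}^{\epsilon}$ on diagonal states via pinching.
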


\begin{proof}
Observe that from the expression in~\eqref{minimalex}
\be
\uD_{\min}^{\epsilon}(\mM\|\mN)=\sup_{\psi,\mE}D_{\min}^\epsilon\Big(\mE_{BR\to X}\circ\mN_{A\to B}(\psi_{AR})\big\|\mE_{BR\to X}\circ\mM_{A\to B}(\psi_{AR})\Big)\;.
\ee
Let $T=\sum_{x}t_x|x\lr x|^X$ be a classical test operator with $0\leq t_x\leq 1$, and note that 
\ba
&\tr\left[T\mE_{BR\to X}\circ\mM_{A\to B}(\psi_{AR})\right]=\tr\left[\Gamma^{RB}\mM_{A\to B}(\psi_{AR})\right]\\
&\tr\left[T\mE_{BR\to X}\circ\mN_{A\to B}(\psi_{AR})\right]=\tr\left[\Gamma^{RB}\mN_{A\to B}(\psi_{AR})\right]
\ea
where $\Gamma^{RB}\eqdef\mE_{X\to RB}^{\dag}(T)$. Since $0\leq T\leq I$ also $0\leq\Gamma^{RB}\leq I^{RB}$, and for any $\Gamma^{RB}$ with this property there exists $0\leq T\leq I^X$ and $\mE\in\cptp(RB\to X)$ such that $\mE_{X\to RB}^{\dag}(T)=\Gamma^{RB}$. Hence, 
\ba
\uD_{\min}^{\epsilon}(\mM\|\mN)&=\sup_{\substack{\psi\in\md(RA)\;,\;0\leq\Gamma^{RB}\leq I^{RB}\\
\tr\left[\Gamma^{RB}\mN^{A\to B}(\psi^{RA})\right]\geq 1-\epsilon}}-\log\tr\left[\Gamma^{RB}\mN^{A\to B}(\psi^{RA})\right]\\
&=\sup_{\psi\in\md(RA)}D_{\min}^{\epsilon}\left(\mM^{A\to B}(\psi^{RA})\big\|\mN^{A\to B}(\psi^{RA})\right)\\
&=D_{\min}^{\epsilon}(\mM\|\mN)\;.
\ea
This completes the proof.
\end{proof}

\begin{center}\Large\bfseries {Proof of Theorem~\ref{eqreg}}\end{center}

\begin{theorem*}
Let $\mM,\mN\in\cptp(A\to B)$ be two quantum channels. Then,
\be
\uD^{\reg}(\mM\|\mN)= D^{\reg}(\mM\|\mN)\;.
\ee
That is, $D^{\reg}(\mM\|\mN)$ is the smallest weakly additive quantum-channel divergence that reduces to the KL relative entropy on classical states.
\end{theorem*}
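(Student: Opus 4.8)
The inequality $\uD^{\reg}(\mM\|\mN)\le D^{\reg}(\mM\|\mN)$ is already recorded in~\eqref{12345}: since $\uD$ is the \emph{minimal} channel extension of the classical KL divergence it can never exceed the channel extension $D$ of the Umegaki relative entropy, and regularization preserves the inequality. So the whole problem is the reverse bound $\uD^{\reg}(\mM\|\mN)\ge D^{\reg}(\mM\|\mN)$. Because $D(\cdot\|\cdot)$ is super-additive under tensor products (restrict the optimization to product input states), its regularization satisfies $D^{\reg}(\mM\|\mN)=\sup_{n}\frac1n D(\mM^{\otimes n}\|\mN^{\otimes n})$, and hence it suffices to prove, for each fixed $n\in\mbb{N}$, the estimate $\uD^{\reg}(\mM\|\mN)\ge\frac1n D(\mM^{\otimes n}\|\mN^{\otimes n})$.

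The plan is to fix $n$ together with a state $\phi\in\md(RA^{n})$, to set $\rho_\phi\eqdef\mM^{\otimes n}(\phi)$ and $\sigma_\phi\eqdef\mN^{\otimes n}(\phi)$ (two states on $B^{n}R$), and then to feed tensor powers of $\phi$ into the minimal channel extension. Concretely, in the expression~\eqref{minimalex} for $\uD(\mM^{\otimes nm}\|\mN^{\otimes nm})$ I would restrict the preprocessing state to $\phi^{\otimes m}\in\md(A^{nm}R^{\otimes m})$. The only remaining freedom is then the post-processing measurement channel $\mE$ on $B^{nm}R^{\otimes m}$, and the supremum over $\mE$ of the classical KL divergence between the two outcome distributions is, by definition, the measured relative entropy $\uD(\rho_\phi^{\otimes m}\|\sigma_\phi^{\otimes m})$ --- the state quantity that enters~\eqref{wer}. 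This gives $\uD(\mM^{\otimes nm}\|\mN^{\otimes nm})\ge\uD(\rho_\phi^{\otimes m}\|\sigma_\phi^{\otimes m})$ for every $m$.

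Dividing by $nm$ and letting $m\to\infty$ then closes the argument. On the left, $\frac1{nm}\uD(\mM^{\otimes nm}\|\mN^{\otimes nm})$ is a subsequence of $\frac1k\uD(\mM^{\otimes k}\|\mN^{\otimes k})$, which converges to its supremum $\uD^{\reg}(\mM\|\mN)$ because $\uD$ is super-additive on channels by Theorem~\ref{outs} (the classical KL divergence being additive, hence weakly additive). On the right, $\frac1m\uD(\rho_\phi^{\otimes m}\|\sigma_\phi^{\otimes m})\to D(\rho_\phi\|\sigma_\phi)$ by~\eqref{wer}, an equality valid in $[0,\infty]$, so the possibility $D^{\reg}=\infty$ needs no separate treatment. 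We conclude $\uD^{\reg}(\mM\|\mN)\ge\frac1n D(\mM^{\otimes n}(\phi)\|\mN^{\otimes n}(\phi))$; taking the supremum over $\phi\in\md(RA^{n})$ gives $\uD^{\reg}(\mM\|\mN)\ge\frac1n D(\mM^{\otimes n}\|\mN^{\otimes n})$, and the supremum over $n$ yields $\uD^{\reg}(\mM\|\mN)\ge D^{\reg}(\mM\|\mN)$. Combined with~\eqref{12345} this is the desired equality; the final assertion of the theorem follows since, by the regularization part of Theorem~\ref{outs}, $\uD^{\reg}=D^{\reg}$ lower-bounds every weakly additive channel relative entropy that reduces to the KL divergence on classical states.

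The step I expect to be the main obstacle is the middle one: verifying that optimizing the post-processing in~\eqref{minimalex} over an honest measurement channel on the combined output-and-reference system $B^{nm}R^{\otimes m}$ produces exactly the measured relative entropy of $\rho_\phi^{\otimes m}$ versus $\sigma_\phi^{\otimes m}$, so that~\eqref{wer} can be applied literally. The subtler conceptual point is that $D^{\reg}$ may fail to equal the single-letter quantity $D(\mM\|\mN)$, which is precisely why the double index $k=nm$ is needed: one must fix the block length $n$ first, extract the state pair $(\rho_\phi,\sigma_\phi)$ from it, and only afterwards regularize over $m$, so that the state-domain identity~\eqref{wer} is used with no asymptotic loss.
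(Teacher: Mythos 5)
Your proof is correct, but it takes a genuinely different route from the paper's. The paper proves the lower bound by passing through the hypothesis-testing divergence: it uses Lemma~\ref{lem17} to identify $\uD_{\min}^{\eps}$ with $D_{\min}^{\eps}$ on channels, bounds $D_{\min}^{\eps}(\rho\|\sigma)\leq\frac{1}{1-\eps}\left(D(\rho\|\sigma)+h_2(\eps)\right)$ to get $D_{\min}^{\eps}(\mM\|\mN)\leq\frac{1}{1-\eps}\left(\uD(\mM\|\mN)+h_2(\eps)\right)$, regularizes, and then invokes the channel Stein's lemma of Wang--Wilde to identify $\lim_{\eps\to0^+}\liminf_n\frac1n D_{\min}^{\eps}(\mM^{\otimes n}\|\mN^{\otimes n})$ with $D^{\reg}(\mM\|\mN)$. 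You instead run a direct double-blocking argument: fix $n$ and $\phi\in\md(RA^n)$, feed $\phi^{\otimes m}$ into the optimization~\eqref{minimalex} for $\uD(\mM^{\otimes nm}\|\mN^{\otimes nm})$ so that the remaining supremum over post-processing measurements is exactly the measured relative entropy $\uD(\rho_\phi^{\otimes m}\|\sigma_\phi^{\otimes m})$, and then let $m\to\infty$ using the state-level identity~\eqref{wer} (Hiai--Petz) together with the super-additivity guaranteed by Theorem~\ref{outs} to control both sides of the inequality along the subsequence $k=nm$. The step you flagged as the main obstacle does go through: the reference $R^{\otimes m}$ and the product input $\phi^{\otimes m}$ are admissible choices in~\eqref{minimalex}, and the supremum over $\mE\in\cptp(B^{nm}R^{\otimes m}\to X)$ of the classical KL divergence of the outcome distributions is precisely the measured relative entropy of the two output states. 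What each approach buys: the paper's argument ties the result to an operationally meaningful asymptotic quantity but leans on the nontrivial external channel Stein's lemma of~\cite{XinWilde2020}; yours is self-contained modulo the state-level achievability result~\eqref{wer} already quoted in the paper, avoids the $\eps\to0^+$ and $\liminf$ bookkeeping entirely, and makes transparent why the double index $k=nm$ is needed (namely, that $D^{\reg}$ need not be single-letter). Both establish the same equality, and your handling of the final ``smallest weakly additive divergence'' claim via the regularization property of Theorem~\ref{outs} matches the paper's.
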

\begin{proof}
Due to~\eqref{12345} it is left to show that $\uD^{\reg}(\mM\|\mN)\geq D^{\reg}(\mM\|\mN)$. We make use here of the inequality
\be
D_{\min}^{\epsilon}(\rho\|\sigma)\leq\frac{1}{1-\epsilon}(D(\rho\|\sigma)+h_2(\epsilon))
\ee
Combining this with Lemma~\ref{lem17} gives
\ba
D_{\min}^{\epsilon}(\mM\|\mN)=\uD_{\min}^{\epsilon}(\mM\|\mN)&=\sup_{\psi,\mE}D_{\min}^\epsilon\Big(\mE_{BR\to X}\circ\mN_{A\to B}(\psi_{AR})\big\|\mE_{BR\to X}\circ\mM_{A\to B}(\psi_{AR})\Big)\\
&\leq \frac{1}{1-\epsilon}\sup_{\psi,\mE}D\Big(\mE_{BR\to X}\circ\mN_{A\to B}(\psi_{AR})\big\|\mE_{BR\to X}\circ\mM_{A\to B}(\psi_{AR})\Big)+\frac{1}{1-\epsilon}h_2(\epsilon)\\
&=\frac{1}{1-\epsilon}\left(\uD(\mM\|\mN)+h_2(\epsilon)\right)\;.
\ea
Diving both sides by $n\in\mbb{N}$, replacing  the pair $(\mM,\mN)$ with $(\mM^{\otimes n},\mN^{\otimes n})$, and taking the limit $n\to \infty$ followed by $\epsilon\to 0^+$ gives
\be
\lim_{\epsilon \to 0^+}\liminf_{n\to\infty}\frac{1}{n}D_{\min}^{\epsilon}\left(\mM^{\otimes n}\|\mN^{\otimes n}\right)\leq \uD^{\reg}(\mM\|\mN)\;.
\ee
In~\cite{XinWilde2020} it was shown that the quantity on the LHS equals $D^{\reg}(\mM\|\mN)$. Therefore, this completes the proof.
\end{proof}

\begin{center}\Large\bfseries {Proof of Theorem~\ref{lem.3}}\end{center}

\begin{theorem*}
Let $\mV\in\cptp(A\to B)$ be an isometry channel defined via $\mV(\rho)=V\rho V^*$, for all $\rho\in\md(A)$, and with isometry matrix $V$ (i.e. $V^*V=I^A$). Then, for any $\mN\in\cptp(A\to B)$
\be
\overline{\xD}(\mV\|\mN)=D_{\max}(\mV\|\mN)=\log\tr\left[ J_\mN^{-1}J_\mV\right]\;.
\ee
\end{theorem*}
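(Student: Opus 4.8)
The plan is to collapse the infimum defining $\overline{\xD}(\mV\|\mN)$ in~\eqref{123g} to a single binary expression, using the rigidity of convex decompositions of an isometry channel. The structural fact behind this is that $J_\mV^{AB}=|\phi\rl\phi|$ is a (scaled) pure state, where $|\phi\rangle^{AB}\eqdef(I^A\otimes V)\sum_j|jj\rangle$ and one checks $\tr_B|\phi\rl\phi|=V^*V=I^A$. Since $J_\mV$ has rank one, whenever $\mV=\sum_x p_x\mE_x$ with $p_x>0$ and $\mE_x\in\cptp(A\to B)$, positivity forces $\supp(J_{\mE_x})\subseteq\spa\{|\phi\rangle\}$, hence $J_{\mE_x}=c_x|\phi\rl\phi|$ for some $c_x\ge 0$, and trace preservation ($\tr_B J_{\mE_x}=I^A$) forces $c_x=1$; thus $\mE_x=\mV$ for every $x$ in the support of $\p$. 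This rigidity is the one genuinely structural ingredient, and it is the part I would think about most carefully; everything else is bookkeeping around it.

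Granting rigidity, I would establish both inequalities. For $\overline{\xD}(\mV\|\mN)\ge D_{\max}(\mV\|\mN)$: take any admissible pair $\mV=\sum_x p_x\mE_x$, $\mN=\sum_x q_x\mE_x$ and set $S\eqdef\supp(\p)$ and $q_S\eqdef\sum_{x\in S}q_x$. Rigidity gives $\mN=q_S\mV+\sum_{x\notin S}q_x\mE_x$, so $\mN-q_S\mV\in\cp(A\to B)$, and hence $q_S\le q^*$, where $q^*\eqdef 2^{-D_{\max}(\mV\|\mN)}$. Coarse-graining $X$ onto a single bit (merging $S$ to $0$ and its complement to $1$) and applying the DPI for $\xD$ gives $\xD(\p\|\q)\ge\xD((1,0)\|(q_S,1-q_S))$; since $\xD((1,0)\|(q,1-q))$ is non-increasing in $q$ (because $((1,0),(q,1-q))\succ((1,0),(q',1-q'))$ holds exactly when $q'\ge q$) and $q_S\le q^*$, this is $\ge\xD((1,0)\|(q^*,1-q^*))=-\log q^*$, the last equality being Lemma~5 of~\cite{GT2020a}. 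Taking the infimum over admissible pairs gives the lower bound. For the matching upper bound I would exhibit the extremal decomposition: with $q^*=2^{-D_{\max}(\mV\|\mN)}$ the map $\mR\eqdef\frac{1}{1-q^*}(\mN-q^*\mV)$ lies in $\cptp(A\to B)$, so $\mE_1=\mV$, $\mE_2=\mR$, $\p=(1,0)$, $\q=(q^*,1-q^*)$ is admissible and yields $\overline{\xD}(\mV\|\mN)\le\xD((1,0)\|(q^*,1-q^*))=-\log q^*=D_{\max}(\mV\|\mN)$, so the two agree.

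Finally, for the closed form I would use the Choi characterization $D_{\max}(\mV\|\mN)=\log\min\{t\ge 0:tJ_\mN\ge J_\mV\}$: since $J_\mV=|\phi\rl\phi|$, the constraint $tJ_\mN\ge|\phi\rl\phi|$ is equivalent to $t\ge\langle\phi|J_\mN^{-1}|\phi\rangle=\tr[J_\mN^{-1}J_\mV]$, with $J_\mN^{-1}$ the inverse on $\supp(J_\mN)$; hence $D_{\max}(\mV\|\mN)=\log\tr[J_\mN^{-1}J_\mV]$. Two degenerate cases should be handled separately: if $q_S=1$ in the argument above then $\mN=\mV$ and all three quantities vanish (using $\overline{\xD}(\mM\|\mM)=0$, immediate from the reduction property of Theorem~\ref{outs}); and if $\supp(J_\mV)\not\subseteq\supp(J_\mN)$ then $D_{\max}(\mV\|\mN)=\infty$, while rigidity forces $q_S=0$ for every admissible pair, so $\p$ and $\q$ have disjoint supports and $\xD(\p\|\q)\ge D_{\min}(\p\|\q)=\infty$, giving $\overline{\xD}(\mV\|\mN)=\infty=\log\tr[J_\mN^{-1}J_\mV]$ as claimed.
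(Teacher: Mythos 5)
Your proof is correct and follows essentially the same route as the paper's: rigidity of the decomposition forces $\mE_x=\mV$ on the support of $\p$, the remaining freedom reduces to a binary pair $\big((1,0),(s,1-s)\big)$ with $s\le 2^{-D_{\max}(\mV\|\mN)}$, and the optimum is attained at equality. In fact your write-up is somewhat more complete than the paper's, which merely asserts the rigidity fact (that you derive from the rank-one Choi matrix $J_\mV=|\phi\rangle\langle\phi|$) and omits the derivation of the closed form $\log\tr\left[J_\mN^{-1}J_\mV\right]$ as well as the degenerate support cases; note that both you and the paper implicitly use that $\xD$ is normalized and additive (a relative entropy) when invoking $\xD\big((1,0)\big\|(s,1-s)\big)=-\log s$.
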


\begin{proof}
Since $\mV$ is an isometry, the condition $\mV=\sum_{x}p_{x}\mE_{x}$, can hold only if for all $x$ such that $p_x\neq 0$ we have $\mE_x=\mV$. W.l.o.g. let the $k$ first components of $\p$ be non-zero, while all the remaining components are zero. This implies that the second condition can be expressed as
\be
\mN=\sum_{x=1}^{k}q_x\mV+\sum_{x=k+1}^{n}q_{x}\mE_{x}\;.
\ee
Denote by $s\eqdef\sum_{x=1}^{k}q_x$, and observe that there exists such $\{\mE_{x}\}_{x=k+1}^{n}$ if and only if 
\be
\mN\geq s\mV
\ee
or in other words, iff $s^{-1}\geq 2^{D_{\max}(\mV\|\mN)}$. Consider the classical channel $\mC\in\cptp(X\to X)$ defined by
\be
\mC(|x\lr x|)=|1\lr 1|\;\;\forall x=1,...,k\quad\text{and}\quad\mC(|x\lr x|)=|2\lr 2|\;\;\forall x=k+1,...,n\;.
\ee
Therefore, we must have $\D(\p\|\q)\geq \D\big(\mC(\p)\big\|\mC(\q)\big)=\D\big(|1\lr 1|\big\|s|1\lr 1|+(1-s)|2\lr 2|\big)$.
This means that w.l.o.g. we can assume that $\p=|1\lr 1|$ and $\q$ is binary; i.e. $\q=s|1\lr 1|+(1-s)|2\lr 2|$ so that 
$\D(\p\|\q)=-\log(s)$ (cf.~\eqref{160}). But since we must have $s^{-1}\geq 2^{D_{\max}(\mV\|\mN)}$, the minimum value is achieved when $s^{-1}= 2^{D_{\max}(\mV\|\mN)}$. That is, $\D(\p\|\q)=D_{\max}(\mV\|\mN)$. This completes the proof.
\end{proof}

\end{document}